\documentclass[12pt]{article}

\usepackage[utf8]{inputenc}
\usepackage[T1]{fontenc}
\usepackage{verbatim}
\usepackage[margin=3cm]{geometry}
\usepackage[algo2e,ruled,vlined]{algorithm2e}
\usepackage{algorithm}
\usepackage{here}
\usepackage{setspace}
\usepackage[normalem]{ulem}
\usepackage{algpseudocode}
\usepackage{dsfont}
\usepackage{xurl}
\usepackage[inline]{enumitem}
\usepackage{microtype}
\usepackage[affil-sl]{authblk}
\usepackage{graphicx}
\usepackage{nicematrix}
\usepackage{mathrsfs}
\usepackage{boldline,colortbl,float,import,pdflscape}

\usepackage{tikz}
\usetikzlibrary{tikzmark}
\usetikzlibrary{positioning}
\usetikzlibrary{graphs}
\usetikzlibrary{graphs.standard}
\usetikzlibrary{decorations.pathmorphing}
\usetikzlibrary{decorations.text}
\usetikzlibrary{patterns,shadings}
\usetikzlibrary{through,intersections,calc}
\usepackage{tikz-3dplot}

\usepackage{amssymb}
\usepackage{amsmath}
\usepackage{amsthm}
\usepackage{mathtools}
\usepackage{nicefrac}
\usepackage{amsfonts,latexsym,graphics}
\usepackage{color}
\usepackage[hidelinks]{hyperref}
\usepackage{multirow}
\usepackage[english]{babel}
\usepackage{epsfig}
\usepackage{bm}
\usepackage{makeidx}
\usepackage{subcaption}
\usepackage{adjustbox}
\usepackage{diagbox}
\usepackage{float}
\usepackage{stackengine} 

\usepackage{xcolor}
\usepackage{xspace}         

\newtheorem{theorem}{Theorem}[section]
\newtheorem{corollary}[theorem]{Corollary}
\newtheorem{lemma}[theorem]{Lemma}
\theoremstyle{definition}
\newtheorem{definition}[theorem]{Definition}
\newtheorem{proposition}[theorem]{Proposition}

\newtheorem{question}[theorem]{Question}

\newtheorem{observation}[theorem]{Observation}

\newtheorem{construction}[theorem]{Construction}

\newcommand*{\claimproofname}{Proof}

\usepackage{complexity}
\newclass{\Hard}{hard}
\newcommand{\NPh}{\NP-\Hard}
\newclass{\Complete}{complete}
\newclass{\Intermediate}{intermediate}
\newcommand{\NPc}{\NP-\Complete}
\newcommand{\NPi}{\NP-\Intermediate}
\newcommand{\GIc}{\GI-\Complete}
\newcommand{\pname}[1]{\textsc{#1}}

\newcommand{\probl}[3]{
  \begin{flushleft}
    \fbox{
      \begin{minipage}{0.97\linewidth}
        \noindent {\sc #1}\\
        {\bf Instance:} #2\\
        {\bf Question:} #3
      \end{minipage}}
    \medskip
  \end{flushleft}
}

\def\matrix0{{\mbox {\boldmath $O$}}}

\def\j{{\mbox{\boldmath $1$}}}
\def\vec0{\mbox{\bf 0}}

\newcommand\tran{\mkern-2mu\raise1.25ex\hbox{$\scriptscriptstyle\top$}\mkern-3.5mu}

	{\end{list}}

\DeclareRobustCommand{\VAN}[3]{#2} 

\SetArgSty{}
\SetKwFor{For}{for}{do}{}
\SetKwInOut{Input}{Input}
\SetKwInOut{Output}{Output}

\definecolor{border}{RGB}{100,120,170}
\definecolor{inner}{RGB}{240,245,255}
\colorlet{inner}{gray!20}
\colorlet{inner2}{gray!80}
\colorlet{border}{black!80}

\makeatletter
\let\c@figure\c@table
\let\ftype@figure\ftype@table
\let\ext@figure\ext@table

\newcommand{\gauss}[2]{\genfrac{[}{]}{0pt}{}{#1}{#2}\@ifnextchar\bgroup{_}{}}
\makeatother

\title{Computing fixed point free automorphisms \\of graphs}

\author{Aida Abiad\thanks{\texttt{a.abiad.monge@tue.nl}, 
Department of Mathematics and Computer Science, Eindhoven University of Technology, The Netherlands}
\thanks{Department of Mathematics and Data Science of Vrije Universiteit Brussels, Belgium}
\quad
Gabriel Coutinho\thanks{\texttt{gabriel@dcc.ufmg.br}, Department of Computer Science, Universidade Federal de Minas Gerais, Brazil} 
\quad
Emanuel Juliano\thanks{\texttt{emanuelsilva@dcc.ufmg.br}, Department of Computer Science, Universidade Federal de Minas Gerais, Brazil}
\hspace{3cm}
Vinicius F. dos Santos\thanks{\texttt{viniciussantos@dcc.ufmg.br}, Department of Computer Science, Universidade Federal de Minas Gerais, Brazil} 
\quad Sjanne Zeijlemaker\thanks{\texttt{s.zeijlemaker@tue.nl}, Department of Mathematics and Computer Science, Eindhoven University of Technology, The Netherlands}}
\date{}

\begin{document}

\maketitle

\begin{abstract}
In 1981, Lubiw proved that the \textit{fixed point free automorphism} problem (\pname{FPFAut}) is \NPc: given a graph $G$, determine whether there exists an automorphism $\phi : V(G) \to V(G)$ such that $\phi(u) \neq u$ for every $u \in V(G)$. We revisit this problem and prove that \pname{FPFAut} remains \NPc\ when restricted to split, bipartite, $k$-subdivided, and $H$-free graphs, if $H$ is not an induced subgraph of $P_4$. The class of $P_4$-free graphs receives the special name of cographs. We provide a polynomial time algorithm for three extensions of cographs: bounded modular-width graphs, tree-cographs and $P_4$-sparse graphs. Our approach uses the well known modular decomposition of graphs. As a consequence, we generalize a result of Abiad et. al. on the problem of computing $2$-homogeneous equitable partitions. 
\end{abstract}


 \section{Introduction}

For graphs $G$ and $H$, a bijection $\phi : V(G) \to V(H)$ is called an \emph{isomorphism} if $uv \in E(G) \iff \phi(u)\phi(v) \in E(H)$. The problem of deciding whether two graphs are isomorphic is called the graph isomorphism problem (\pname{GraphISO}). \pname{GraphISO} is a candidate to be an \NPi\, problem as no polynomial time algorithm to solve \pname{GraphISO} is known and it is not believed to be \NPh~\cite{schoning1988graph}. If a problem is polynomially equivalent to \pname{GraphISO} we say that the problem is \GIc~\cite{zemlyachenko1985graph}.

For a graph $G$, a bijection $\phi : V(G) \to V(G)$ is called an \emph{automorphism} if $uv \in E(G) \iff \phi(u)\phi(v) \in E(G)$. The problem of deciding whether a given graph possesses a non-trivial automorphism is \GIc\,~\cite{babai1996automorphism, mathon1979note}. An automorphism $\phi$ of a graph is called \emph{fixed point free} (FPF) if no vertex is mapped to itself. In 1981, Lubiw~\cite{Lubiw1981} proved that the fixed point free automorphism problem (\pname{FPFAut}) is \NPc.

\probl{FPFAut}{A graph $G$.}{Does $G$ admit an automorphism without fixed points?}

Lubiw's result shows that adding a simple restriction to the automorphism problem makes it NP-complete. This result has been used as a starting point for several polynomial-time reductions, highlighting the importance of FPF automorphisms in different areas. For example, motivated by eigenvalue bounds and applications in linear programming, Abiad et al.~\cite{abiad2022characterizing} introduce the concept of 2-homogeneous equitable partitions and use Lubiw’s result to prove that checking whether a graph has such partition is \NPc. In a more geometrical context, Manning \cite{manning1990geometric} shows that \pname{FPFAut} is related to checking whether a graph has an automorphism that can be drawn as a reflection. Schröder’s survey~\cite{schroder1999algorithms} presents several algorithms for checking whether an ordered set has the fixed point property, a problem also shown to be \NPc\ using Lubiw’s result.

We do not expect to find an efficient algorithm for \pname{FPFAut} in general. However, one could hope to find graph classes where the problem admits a polynomial solution. For example, Cameron~\cite{cameron2010complexity} observes that \pname{FPFAut} can be solved in polynomial time when restricted to vertex-transitive graphs, by pointing an old result of Jordan~\cite{jordan1872recherches, serre2003on} that applies the Burnside Lemma to show that there always exists an FPF element in a transitive group. The main goal of this paper is to identify broader families of graphs for which \pname{FPFAut} admits a polynomial-time solution.

We start by ruling out some well-known classes.

\begin{theorem}\label{thm:main_negative_result}
    FPFAut is NP-hard for split, bipartite, $k$-subdivided, and $H$-free graphs, if $H$ is not an induced subgraph of $P_4$.
\end{theorem}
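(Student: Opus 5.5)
Throughout, I reduce from \pname{FPFAut} on general graphs, which is \NPc\ by Lubiw~\cite{Lubiw1981}. For each class I give a polynomial transformation $G \mapsto G'$ with $G'$ in the class such that $G$ has an FPF automorphism if and only if $G'$ does. The common principle is to build $G'$ from $G$ by adding gadgets on vertices and/or edges so that two things hold. First, every automorphism of $G$ lifts to an automorphism of $G'$, and the lift is fixed point free exactly when the original is. Second, every automorphism of $G'$ preserves the set of original vertices, and so restricts to an automorphism of $G$.

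\textbf{Subdivisions.} For $k$-subdivided graphs I replace every edge by a path with $k$ internal vertices. I may assume $\delta(G)\ge 3$: otherwise take the join with a large clique, or use another degree-raising trick that preserves the FPF property. Then the original vertices are exactly the vertices of degree at least $3$, so automorphisms restrict to $G$. Conversely, an automorphism $\phi$ of $G$ induces an automorphism of $G'$. A subdivision vertex is fixed only if its edge $uv$ is fixed setwise. If $\phi$ swaps $u$ and $v$ and $k$ is odd, the middle vertex is fixed. To avoid this I can take $G\times K_2$-type doubling, or more simply replace $G$ first by $G\,\square\,K_2$ composed with a swap. \textbf{Bipartite graphs} follow from the case $k=1$ after the same care, or I can use the incidence graph with vertex-side and edge-side distinguished by degree.

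\textbf{Split graphs.} I use the vertex--edge construction: put a clique on $V(G)$ and an independent set on $E(G)$, with $e$ adjacent to its two endpoints. Edge vertices have degree $2$ while clique vertices have large degree (after padding), so both sides are preserved. Two vertices $u,v$ are adjacent in $G$ iff they share an edge-vertex, so automorphisms correspond. Fixed edge-vertices again arise from swapped edges. I handle this either by attaching an edge-vertex for every edge plus twin copies (two twin copies get swapped), or by reducing first from a variant in which Lubiw's hardness already holds for automorphisms with no fixed edges. Lubiw's construction indeed produces FPF automorphisms of order $2$ with structured orbits, and doubling each edge-vertex lets the pair be swapped.

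\textbf{$H$-free graphs.} If $H$ is not an induced subgraph of $P_4$, then either $H$ contains a cycle, or a claw/vertex of degree $\ge3$, or $H$ contains $2K_2$ or $K_1\cup K_2$-type patterns that are not in $P_4$, or $H$ is an induced subgraph of a split graph's complement, and so on. The plan is a case analysis. If $H$ contains a cycle, take a large enough subdivision, which is $H$-free. If $H$ is a forest with a vertex of degree $\ge3$, use line graphs or the complement. If $H$ is a linear forest not in $P_4$, then it contains $2K_2$, $3K_1$, or $P_5$ type pieces, and split graphs (which are $2K_2$, $C_4$, $C_5$-free) or complements of bipartite graphs cover the remaining cases. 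Throughout I use the fact that complementation preserves automorphisms, so \pname{FPFAut} is hard on $\overline{H}$-free graphs whenever it is hard on $H$-free graphs.

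\textbf{Main obstacle.} I expect the hardest part to be this last case analysis: making sure every $H$ outside the induced subgraphs of $P_4$ contains, or its complement contains, a cycle, a claw, or $2K_2$ (for example $H=3K_1$, $K_3$, $P_5$). Controlling fixed subdivision vertices is the secondary technical issue.
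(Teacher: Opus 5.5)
Your split-graph reduction (a clique on $V(G)$ plus two twin edge-vertices per edge) is exactly the paper's construction and is fine. The genuine gap is in the bipartite and $k$-subdivided cases. For odd $k$, an FPF automorphism of $G$ that flips an edge $uv$ fixes the middle subdivision vertex. This includes $k=1$, which is the plain incidence graph you propose for bipartite graphs. It is not a technicality: $K_2$ and $P_4$ have FPF automorphisms, but $S_1(K_2)=P_3$ and $S_1(P_4)=P_7$ have none. For even $k$ the lift is fine, but the statement covers every $k$, and your bipartite case rests on $k=1$.

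Neither of your repairs works. Both $G\times K_2$ and $G\,\square\,K_2$ have an FPF automorphism for \emph{every} $G$, namely swapping the two layers. In $G\times K_2$ this swap flips no edge, so $S_k(G\times K_2)$ is always a yes-instance and the reduction collapses. For $G\,\square\,K_2$ you would have to show that every fixed-edge-free automorphism comes from an FPF automorphism of $G$. That requires the full automorphism group of a Cartesian product, which can be larger than $\text{Aut}(G)\times\text{Aut}(K_2)$, and you do not argue it. The preprocessing to $\delta(G)\ge 3$ by joining a large clique is also unsound. $K_{1,3}$ has no FPF automorphism, but in $K_{1,3}\vee K_m$ the center is a closed twin of the clique vertices, so cycling the center with the clique and cycling the three leaves gives an FPF automorphism.

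The missing idea is the one you already used for split graphs: keep two copies $\{u,w\}_1,\{u,w\}_2$ of every edge-vertex, but drop the clique. For connected $G$ with $|V|\ge 3$, the resulting bipartite graph $G'$ has a unique bipartition, and $V$ is the side containing a vertex of degree $\neq 2$. So every automorphism preserves both sides, no automorphism of $G'$ can flip an edge, and every FPF automorphism of $G'$ is fixed-edge-free. A flipped edge of $G$ is absorbed by exchanging its two copies. Subdividing $G'$ rather than $G$ then works for every $k$: $V(G')$ is recovered as the vertices at distance a multiple of $k+1$ from a vertex of degree $\neq 2$, with no minimum-degree assumption.

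Finally, your $H$-free case analysis is left open (``and so on''), and the line-graph route has no supporting reduction. What closes it is the single fact that $H$ or $\overline{H}$ contains a cycle. If both were forests on $k$ vertices, then $\binom{k}{2}\le 2(k-1)$, forcing $k\le 4$, and a direct check on at most four vertices leaves only the induced subgraphs of $P_4$. By complementation you may assume $H$ has a cycle, necessarily of length at most $|V(H)|$. Graphs that are $k$-subdivided with $k\ge |V(H)|$ have larger girth, so they are $H$-free. Your test cases $3K_1$, $K_3$, $P_5$, the claw and $2K_2$ are all covered this way.
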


The class of $P_4$-free graphs not covered by Theorem~\ref{thm:main_negative_result} receives the special name of \emph{cographs}. A polynomial time solution for \pname{FPFAut} restricted to cographs follows from a result of De Ridder and Bodlaender~\cite{de1999graph}. We extend this result to generalizations of cographs.

\begin{theorem}\label{thm:main_positive_result}
The problem of determining whether a graph $G$ contains a fixed point free automorphism can be solved in polynomial time when restricted to the class of bounded modular-width graphs, tree-cographs and $P_4$-sparse graphs.
\end{theorem}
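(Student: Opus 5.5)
We work over the modular decomposition tree $T(G)$, which is computable in linear time. The plan is a bottom-up dynamic program over $T(G)$ that keeps, for every node $t$ with module $M_t$, two pieces of information: a canonical isomorphism code of $G[M_t]$, and a boolean $f(t)$ recording whether $G[M_t]$ has an FPF automorphism.

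The structural fact we would prove first is that every automorphism $\phi$ of $G$ permutes the maximal strong modules and commutes with the decomposition. So at a node $t$ with children $t_1,\dots,t_k$ and quotient graph $Q_t$, $\phi$ restricted to $M_t$ (when $\phi(M_t)=M_t$) is determined by two things: an automorphism $\sigma$ of $Q_t$ that preserves isomorphism types of children (a \emph{coloured} automorphism, where children are coloured by their iso codes), and isomorphisms $M_{t_i}\to M_{t_{\sigma(i)}}$. From this we would derive the key lemma: $G[M_t]$ has an FPF automorphism if and only if some coloured automorphism $\sigma$ of $Q_t$ satisfies $f(t_i)=\text{true}$ for every child $t_i$ fixed by $\sigma$. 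For the converse direction, a child on a nontrivial cycle of $\sigma$ can be moved without fixed points by composing isomorphisms around the cycle, and a fixed child uses its own FPF automorphism. Leaves, being single vertices, have $f=\text{false}$.

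At series and parallel nodes the quotient is complete or edgeless, so every colour-preserving permutation is allowed. Then $f(t)$ holds if and only if each isomorphism class of children either has size at least $2$ or consists of a single child with $f=\text{true}$. This already recovers the cograph case. Iso codes for children are obtained recursively: for cographs and, more generally, along the decomposition, by sorting child codes at series and parallel nodes. Prime nodes are where each class needs separate handling.
\begin{itemize}
\item \textbf{Bounded modular-width.} $|Q_t|\le w$, so we enumerate all $w!$ permutations of $Q_t$ and test the criterion. Isomorphism of modules is decided recursively by the same enumeration, since a prime quotient of bounded size can be matched by brute force, giving polynomial time for fixed $w$.
\item \textbf{$P_4$-sparse graphs.} Prime quotients are spiders whose head module $R$ is the only nontrivial child. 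Here we classify the automorphisms of a thin or thick spider with $|K|=|S|\ge 2$: they are exactly the permutations of legs (when $|K|\ge3$, together with a complement-type symmetry for $|K|=2$ where $P_4$ is self-complementary). A leg permutation that is a derangement fixes $R$ only as a set, so the condition becomes $f(R)=\text{true}$ or $R=\emptyset$.
\item \textbf{Tree-cographs.} Prime nodes of the modular decomposition have tree (or co-tree) quotients with cograph children. Coloured automorphisms of a tree can be handled by rooting it at its centre and running a secondary DP that computes, for each rooted subtree, its canonical code (AHU) and whether it admits a colour-preserving automorphism satisfying the fixed-child condition. Along the way we must handle the central edge, which may be swapped.
\end{itemize}

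We expect the main obstacle to be making the key lemma and the canonical isomorphism codes fully rigorous at prime nodes, in particular for tree-cographs. Two points need care there. First, an automorphism could in principle map a module to a non-module; we would rule this out using the uniqueness of maximal strong modules. Second, the coloured-tree DP must decide existence of a derangement-type automorphism, not merely compute the automorphism group. We would first try the AHU-style recursion: at a tree vertex $v$ fixed by $\sigma$, its child subtrees group into iso classes, and the criterion applies recursively exactly as at series/parallel nodes, with the extra condition that $v$ itself, being fixed, carries $f=\text{true}$.
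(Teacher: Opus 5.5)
Your proposal is correct, and its skeleton is the paper's. Your bottom-up recursion is Algorithm~\ref{alg:fpf_aut}, and your key lemma is Theorem~\ref{thm:fpf_aut_alg}, resting on Lemma~\ref{lem:modular_aut}. Your series/parallel criterion is Lemma~\ref{lem:pfpf_complete}, and your centre-rooted tree recursion with a possible swap of the central edge is Lemma~\ref{lem:pfpf_trees}.

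The genuine difference is how you obtain isomorphism types of modules. The paper imports them: polynomial isomorphism for bounded rank-width (Lemma~\ref{lem:iso_mw}), Tinhofer's algorithm for tree-cographs (Lemma~\ref{lem:iso_tree_cographs}), and Lemma~\ref{lem:iso_p4_sparse}, which is stated without proof. You compute canonical codes along the same decomposition tree:
\begin{itemize}
\item sorted child codes at series/parallel nodes;
\item brute-force matching at prime nodes with at most $w$ children;
\item coloured AHU codes at tree and co-tree nodes;
\item at spider nodes, a code you should state explicitly, e.g.\ thin/thick, $|K|$, and the code of $R$.
\end{itemize}
This makes the argument self-contained. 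For modular-width it also gives running time $g(w)\cdot n^{\mathcal{O}(1)}$ with exponent independent of $w$, which answers the question the paper raises after Theorem~\ref{thm:fpfaut_mw}.

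For $P_4$-sparse graphs you also take a shortcut. The paper solves \pname{PFPFAut} on cyclops spiders under arbitrary colourings (Lemma~\ref{lem:pfpf_spider}). You instead use that every child in $S\cup K$ is a single vertex, so the criterion collapses to ``$R=\emptyset$ or $f(R)$ is true''. This is simpler, and it explicitly covers headless prime spiders such as $P_4$, which Observation~\ref{obs:quotient_p4_sparse} does not list.

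One inaccuracy to fix: there is no additional ``complement-type'' automorphism when $|K|=2$. In that case thin and thick spiders coincide, and the automorphisms are still exactly the leg permutations. The self-complementarity of $P_4$ is an isomorphism onto $\overline{P_4}$, not an automorphism. Your criterion is unaffected, since a leg derangement exists whenever $|K|\ge 2$.
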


The main idea for showing Theorem \ref{thm:main_positive_result} is to consider the modular decomposition of a graph~\cite{habib2010survey}. This follows a different strategy than the one employed by Klavík et al.~\cite{klavik2021graph} when studying automorphisms restricted to lists. Their results implies that \pname{FPFAut} can be solved in polynomial time for trees, planar graphs, interval graphs, circle graphs, permutation graphs, and bounded tree-width graphs. We complement the work of Klavík et al., by considering a simpler problem and solving it for more general graph classes.

One advantage of our approach is that it can be easily modified for finding FPF \emph{involutions}, that is, FPF automorphisms of order two. The problem of finding involutions without fixed points (\pname{FPFInv}) is also \NPh~\cite{Lubiw1981}. In fact, in the reduction used by Lubiw to prove that FPF automorphism is an \NPh\ problem, every FPF automorphism is an FPF involution. 

We extend Theorems~\ref{thm:main_negative_result} and \ref{thm:main_positive_result} to the \pname{FPFInv} problem and, as a byproduct, we show that determining whether a graph has an involution without fixed \emph{edges} is \NPc, providing an alternative proof to a result of Harary et al.~\cite{harary2002symmetric}. 

Our main motivation for considering involutions is to generalize a result of Abiad et al.~\cite{abiad2022characterizing}, who prove that a graph admits an FPF involution if and only if it admits an equitable partition with $\frac{n}{2}$ parts of size $2$, and they provide a polynomial-time algorithm to determine whether a cograph admits such a partition. By extending Theorem~\ref{thm:main_positive_result}, we also extend Abiad's result (see Theorem~\ref{thm:main_corollary_from_Aida}).

This paper is structured as follows. In Section \ref{sec:Negative_Results} we prove Theorem \ref{thm:main_negative_result} separately for split, bipartite, $k$-subdivided, and $H$-free graphs. Next, in Section \ref{sec:efficient} we provide an algorithm for \pname{FPFAut} in terms of the modular decomposition and prove its complexity. Then, in Section \ref{sec:positive_results}, we apply our algorithm and prove Theorem \ref{thm:main_positive_result} separately for graphs of bounded modular-width, tree-cographs and $P_4$-sparse graphs. In Section \ref{sec:involution}, we extend our results to the \pname{FPFInv} problem. Finally, in Section \ref{sec:conclusion} we conclude and discuss a few open questions.

\section{Hardness reductions} \label{sec:Negative_Results}

The isomorphism problem remain \GIc\, even under the restriction that the graphs $G$ and $H$ belong to some graph class $\mathcal{G}$. Booth and Colbourn (see survey~\cite{booth1979problems}) list several graph classes for which the restricted isomorphism problem is \GIc. In this section, we show that \pname{FPFAut} remains \NPh\ even when restricted to some well-known graph classes. 

It is worth mentioning that even though our results are inspired by the work on \pname{GraphISO}, the same constructions usually do not work for \pname{FPFAut}, as there is no clear one-to-one correspondence between the two problems. For example, \pname{GraphISO} is GI-complete for graphs with a unique center~\cite{booth1979problems}, while \pname{FPFAut} is trivial for this graph class.

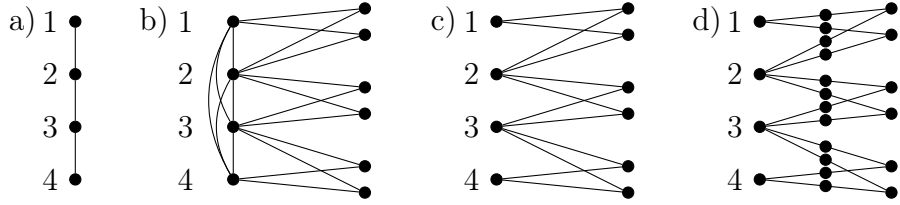
\begin{figure}[ht]
    \centering
    \begin{tikzpicture}[scale=0.7, every node/.style={circle, draw, fill=black, inner sep=1.5pt}]
    \node[draw=none,fill=none] at (-1, 3) {a)};
    \begin{scope}[xshift=0cm]
        \node[label=left:{1}] (1) at (0, 3) {};
        \node[label=left:{2}] (2) at (0, 2) {};
        \node[label=left:{3}] (3) at (0, 1) {};
        \node[label=left:{4}] (4) at (0, 0) {};
        
        \draw (1) -- (2);
        \draw (2) -- (3);
        \draw (3) -- (4);
    \end{scope}
    
    \node[draw=none,fill=none] at (1.5, 3) {b)};
    \begin{scope}[xshift=3cm]
        \node[label={[label distance=3mm]left:{1}}] (1) at (0, 3) {};
        \node[label={[label distance=3mm]left:{2}}] (2) at (0, 2) {};
        \node[label={[label distance=3mm]left:{3}}] (3) at (0, 1) {};
        \node[label={[label distance=3mm]left:{4}}] (4) at (0, 0) {};
        
        \node (121) at (2.5, 3.25) {};
        \node (122) at (2.5, 2.75) {};
        \node (231) at (2.5, 1.75) {};
        \node (232) at (2.5, 1.25) {};
        \node (341) at (2.5, 0.25) {};
        \node (342) at (2.5, -0.25) {};
        
        \draw (1) -- (2);
        \draw (1) edge[bend right] (3);
        \draw (1) edge[bend right] (4);
        \draw (2) -- (3);
        \draw (2) edge[bend right] (4);
        \draw (3) -- (4);
        
        \draw (1) -- (121);
        \draw (1) -- (122);
        \draw (2) -- (121);
        \draw (2) -- (122);
        \draw (2) -- (231);
        \draw (2) -- (232);
        \draw (3) -- (231);
        \draw (3) -- (232);
        \draw (3) -- (341);
        \draw (3) -- (342);
        \draw (4) -- (341);
        \draw (4) -- (342);
    \end{scope}

    \node[draw=none,fill=none] at (7, 3) {c)};
    \begin{scope}[xshift=8cm]
        \node[label={[label distance=0mm]left:{1}}] (1) at (0, 3) {};
        \node[label={[label distance=0mm]left:{2}}] (2) at (0, 2) {};
        \node[label={[label distance=0mm]left:{3}}] (3) at (0, 1) {};
        \node[label={[label distance=0mm]left:{4}}] (4) at (0, 0) {};
        
        \node (121) at (2.5, 3.25) {};
        \node (122) at (2.5, 2.75) {};
        \node (231) at (2.5, 1.75) {};
        \node (232) at (2.5, 1.25) {};
        \node (341) at (2.5, 0.25) {};
        \node (342) at (2.5, -0.25) {};
        
        \draw (1) -- (121);
        \draw (1) -- (122);
        \draw (2) -- (121);
        \draw (2) -- (122);
        \draw (2) -- (231);
        \draw (2) -- (232);
        \draw (3) -- (231);
        \draw (3) -- (232);
        \draw (3) -- (341);
        \draw (3) -- (342);
        \draw (4) -- (341);
        \draw (4) -- (342);
    \end{scope}

    \node[draw=none,fill=none] at (12, 3) {d)};
    \begin{scope}[xshift=13cm]
        \node[label={[label distance=0mm]left:{1}}] (1) at (0, 3) {};
        \node[label={[label distance=0mm]left:{2}}] (2) at (0, 2) {};
        \node[label={[label distance=0mm]left:{3}}] (3) at (0, 1) {};
        \node[label={[label distance=0mm]left:{4}}] (4) at (0, 0) {};
        
        \node (121) at (2.5, 3.25) {};
        \node (122) at (2.5, 2.75) {};
        \node (231) at (2.5, 1.75) {};
        \node (232) at (2.5, 1.25) {};
        \node (341) at (2.5, 0.25) {};
        \node (342) at (2.5, -0.25) {};

        \node (1211) at (1.25, 3.125) {};
        \node (1221) at (1.25, 2.875) {};
        \node (1212) at (1.25, 2.625) {};
        \node (1222) at (1.25, 2.375) {};
        \node (2312) at (1.25, 1.875) {};
        \node (2313) at (1.25, 1.375) {};
        \node (2322) at (1.25, 1.625) {};
        \node (2323) at (1.25, 1.125) {};
        \node (3413) at (1.25, 0.625) {};
        \node (3423) at (1.25, 0.375) {};
        \node (3414) at (1.25, 0.125) {};
        \node (3424) at (1.25, -0.125) {};
        
        \draw (1) -- (121);
        \draw (1) -- (122);
        \draw (2) -- (121);
        \draw (2) -- (122);
        \draw (2) -- (231);
        \draw (2) -- (232);
        \draw (3) -- (231);
        \draw (3) -- (232);
        \draw (3) -- (341);
        \draw (3) -- (342);
        \draw (4) -- (341);
        \draw (4) -- (342);
    \end{scope}
\end{tikzpicture}
    \caption{a) Base graph $G = P_4$. b) Split graph construction. c) Bipartite construction. d) Subdivision construction.}
    \label{fig:NP-H constructions}
\end{figure}

To prove that \pname{FPFAut} remains \NPh\ when restricted to a given graph class, we employ the following general approach.
Let $G = (V, E)$ be an arbitrary graph belonging to a class $\mathcal{F}_1$ for which \pname{FPFAut} is already known to be \NPh.
We construct a new graph $G' = (V', E')$ with $|G'| = |G|^{\mathcal{O}(1)}$ such that $G'$ belongs to another graph class $\mathcal{F}_2$, and $G$ admits an FPF automorphism if and only if $G'$ does.
This reduction implies that \pname{FPFAut} is also \NPh\ when restricted to graphs in~$\mathcal{F}_2$.
In this section, we describe the constructions for each graph class.

A \textit{split graph} is a graph whose vertex set can be partitioned into two classes, one that forms a clique and another that forms an independent set.

\begin{construction} \label{con:split}
    Let $G = (V, E)$ be a graph. We construct a split graph $G' = (V', E')$ as follows. Set $V' := V \cup \left(E \times \{1, 2\}\right)$ and connect vertices $u, v \in V'$ if either $u, v \in V$, or $u \in V$ and $v = \{u, w\}_i$ for some $w \in V$ and $i \in \{1, 2\}$ (in Figure~\ref{fig:NP-H constructions}\,(b) $G'$ is shown).
\end{construction}
\begin{theorem} \label{thm:split}
\pname{FPFAut} is \NPh\ for \emph{split graphs}.
\end{theorem}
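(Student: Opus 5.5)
We reduce from \pname{FPFAut} on general graphs, which is \NPh\ by Lubiw's theorem, via Construction~\ref{con:split}. Clearly $G'$ is split: $V$ is a clique and $E\times\{1,2\}$ is an independent set. Its size is $|V|+2|E|$, so the reduction is polynomial. It remains to show that $G$ has an FPF automorphism if and only if $G'$ does. We may assume $G$ has no isolated vertices and at least a few edges. Isolated vertices can be handled separately: an FPF automorphism of $G$ restricts to the isolated vertices and to the rest, so $G$ has one iff the non-isolated part has one and the number of isolated vertices is not exactly $1$. Alternatively, we can take the complement, which preserves automorphisms, to ensure minimum degree at least $1$. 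Small cases are checked directly.

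\emph{Forward direction.} Given an FPF automorphism $\phi$ of $G$, define $\phi'$ on $V'$ by $\phi'(v)=\phi(v)$ for $v\in V$ and $\phi'(\{u,w\}_i)=\{\phi(u),\phi(w)\}_i$. It is a bijection because $\phi$ permutes $E$. Adjacency is preserved: $V$ stays a clique, and $v\sim\{u,w\}_i$ iff $v\in\{u,w\}$, which is equivalent to $\phi(v)\in\{\phi(u),\phi(w)\}$. $\phi'$ has no fixed points on $V$. For an edge vertex, one with $\{\phi(u),\phi(w)\}=\{u,w\}$ would be fixed. To avoid this, we modify $\phi'$ to swap $\{u,w\}_1$ and $\{u,w\}_2$ for every edge $\{u,w\}$ fixed setwise by $\phi$. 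This is why each edge was doubled. The swap is an automorphism of $G'$, since the two copies are twins, and it commutes appropriately, so the composition is FPF.

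\emph{Backward direction (the main step).} Let $\psi$ be an FPF automorphism of $G'$. We need $\psi(V)=V$, which we get by a degree argument. A vertex $v\in V$ has degree $(|V|-1)+2\deg_G(v)$, while edge vertices have degree $2$. If $|V|\ge 4$ and $G$ has no isolated vertices, every vertex of $V$ has degree at least $|V|+1>2$, so $\psi$ preserves $V$. Then $\psi|_V$ is a permutation of $V$ with no fixed points. It is an automorphism of $G$ because $uw\in E$ iff $u$ and $w$ have a common neighbour in $E\times\{1,2\}$ (namely $\{u,w\}_i$), and $\psi$ preserves both $V$ and $E\times\{1,2\}$.

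The obstacle is exactly the separation of $V$ from the edge-vertices in $G'$, which is why the degenerate cases need care. The clean fix is to reduce only from instances with minimum degree at least $1$ and $|V|\ge 4$. Hardness is preserved on these instances: handle isolated vertices as above, or note that Lubiw's hard instances are large and connected.
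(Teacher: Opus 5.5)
Your proof is correct and follows the paper's argument. It uses the same construction, lifts $\phi$ in the forward direction (the paper avoids your case distinction by always swapping the copy index, $\{u,w\}_i \mapsto \{\phi(u),\phi(w)\}_{3-i}$), and uses a degree argument in the backward direction to show that $V$ is invariant. Your care with degenerate instances is warranted, because the paper's claim that $V$ consists of the higher-degree vertices fails for $G = K_2 \cup K_1$: there $G'$ has a fixed point free automorphism while $G$ does not. However, $|V| \geq 4$ alone already suffices, since every vertex of $V$ then has degree at least $|V|-1 \geq 3$ in $G'$, so you do not need to exclude isolated vertices.
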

\begin{proof}
    Let $G = (V, E)$ be a graph and let $G'$ be constructed following construction \ref{con:split}.
    
    Let $\phi$ be an automorphism of $G$ without fixed points. We construct a map $\phi'$ that is an automorphism of $G'$ as follows: $\phi'(v) := \phi(v)$ for all $v \in V$ and $\phi'(\{u, v\}_i) := \{\phi(u), \phi(v)\}_j$ with $i \neq j$. By construction, $\phi'$ fixes no vertex in $V$ and since $i \neq j$, no vertex in $E \times \{1, 2\}$ is fixed as well. As $\phi'$ preserves adjacency, it is an FPF automorphism. 
    
    Now, let $\phi'$ be an FPF automorphism of $G'$. The vertices in $V$ are the ones with the higher degree in $G'$, so $V$ is invariant under $\phi'$ and $\phi := \phi'|_{V}$ must be a FPF bijection. We check that it preserves adjacency in $G$. Let $\{u, v\} \in E$, there are exactly two vertices in $E \times \{1, 2\}$ adjacent to both $u$ and $v$, so we have $\phi'(\{u, v\}_i) = (\{\phi'(u), \phi'(v)\}_j)= \{\phi(u), \phi(v)\}_j$ for some $j \in \{1, 2\}$, and by construction, $\{\phi(u), \phi(v)\} \in E$.
\end{proof}

A \textit{chordal graph} is a graph that has no induced cycle of length greater than three. A \textit{perfect graph} is a graph with the property that, in every induced subgraph, the size of the largest clique equals the minimum number of colors in a proper coloring of the subgraph. There is a wide range of algorithmic results for both chordal and perfect graphs, and these classes are related by inclusion relations: split graphs $\subseteq$ chordal graphs $\subseteq$ perfect graphs.

\begin{corollary}
\pname{FPFAut} is \NPh\ for \emph{chordal} and \emph{perfect graphs}.
\end{corollary}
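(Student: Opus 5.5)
The corollary follows from Theorem~\ref{thm:split} together with the inclusions split graphs $\subseteq$ chordal graphs $\subseteq$ perfect graphs. The plan is to note that a hardness result for a restricted class transfers to every larger class, so no new construction is needed.

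Concretely, I would argue as follows. Theorem~\ref{thm:split} gives a polynomial-time reduction that takes an arbitrary graph $G$ to the split graph $G'$ of Construction~\ref{con:split}, such that $G$ has an FPF automorphism if and only if $G'$ does. Since $G'$ is split, it is also chordal and perfect. Hence the same reduction, with its target class regarded as chordal (respectively perfect) graphs, shows that \pname{FPFAut} restricted to chordal (respectively perfect) graphs is \NPh. Equivalently, any polynomial algorithm for the larger class would in particular solve the problem on split graphs, contradicting Theorem~\ref{thm:split} unless $\P=\NP$.

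The only point requiring care is checking the inclusions themselves. For split $\subseteq$ chordal, suppose an induced cycle of length at least $4$ existed in a split graph. It contains at most $2$ clique vertices, since three clique vertices would be pairwise adjacent and form a chord or a triangle. Its remaining, at least two, vertices come from the independent set, and on a cycle of length at least $4$ two adjacent cycle vertices would then both lie in the independent set, which is impossible. A short count makes this precise: the independent-set vertices on the cycle are pairwise non-consecutive, so there are at most half of them, while the at most $2$ clique vertices must separate them. This forces length at most $4$, and a $4$-cycle with two adjacent clique vertices and two independent vertices does not close up properly. For chordal $\subseteq$ perfect, I would simply cite the classical fact that chordal graphs are perfect via perfect elimination orderings. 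These inclusions are standard, so I expect no real obstacle here.
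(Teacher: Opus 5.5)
Your proposal is correct and takes the same approach as the paper. The paper derives the corollary directly from Theorem~\ref{thm:split} and the inclusions split $\subseteq$ chordal $\subseteq$ perfect, which it states without proof. Your extra check of split $\subseteq$ chordal is fine, though the $4$-cycle case is cleaner as follows: the two independent vertices must alternate with the two clique vertices, so the clique vertices are non-consecutive on the cycle yet adjacent, which gives a chord.
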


A \textit{bipartite graph} is a graph whose vertex set can be partitioned into two independent sets. It is known that \pname{GraphISO} remains \GIc\ when restricted to bipartite graphs, and the reduction follows by subdividing the edges of the original graph. In the context of FPF automorphisms, the same argument does not apply, since if an automorphism fixes an edge, the induced automorphism after subdivision will fix a vertex. We present an alternative construction.

\begin{construction} \label{con:bipartite}
    Let $G = (V, E)$ be a connected graph with $|V| \geq 3$. We construct a bipartite graph $G' = (V', E')$ as follows. Let $V' = V \cup \left(E \times \{1, 2\}\right)$, and connect vertices $u, v \in V'$ if $u \in V$ and $v = \{u, w\}_i$ for some $w \in V$ and $i \in \{1, 2\}$. Note that $V$ forms an independent set, so $G'$ is bipartite (in Figure~\ref{fig:NP-H constructions}\,(c), $G'$ is shown).
\end{construction}
\begin{theorem} \label{thm:bipartite}
\pname{FPFAut} is \NPh\ for \emph{bipartite graphs}.
\end{theorem}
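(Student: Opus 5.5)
We reduce from \pname{FPFAut} on connected graphs $G=(V,E)$ with $|V|\ge 3$, using Construction~\ref{con:bipartite}. First we justify this restriction. Lubiw's hard instances can be assumed connected: if $G$ is disconnected we may pass to its complement, which is connected and has the same automorphisms. Small graphs are handled by brute force. The reduction then mirrors the proof of Theorem~\ref{thm:split}, and the only new issue is recognising the copy of $V$ inside $G'$ without the clique edges.

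\emph{Forward direction.} Given an FPF automorphism $\phi$ of $G$, define $\phi'(v)=\phi(v)$ for $v\in V$ and $\phi'(\{u,v\}_i)=\{\phi(u),\phi(v)\}_{3-i}$. This map preserves incidence, so it is an automorphism of $G'$. It fixes no vertex of $V$ because $\phi$ is FPF, and it fixes no vertex of $E\times\{1,2\}$ because the index is swapped.

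\emph{Backward direction.} Let $\phi'$ be an FPF automorphism of $G'$. The key step is to show that $V$ is invariant under $\phi'$.
\begin{enumerate}
\item Every vertex $\{u,w\}_i$ has degree exactly $2$, while $v\in V$ has degree $2\deg_G(v)\ge 2$.
\item Since $G$ is connected with $|V|\ge 3$, every $v\in V$ has $\deg_G(v)\ge 1$. Hence $G'$ is connected, and it is bipartite with parts $V$ and $E\times\{1,2\}$.
\item Because $G$ is connected and not a single edge, some vertex $v_0\in V$ has $\deg_G(v_0)\ge 2$, so $\deg_{G'}(v_0)\ge 4$. Every such vertex lies in $V$.
\item An automorphism of a connected bipartite graph either preserves or swaps the two colour classes. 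It cannot swap them here, since it must send $v_0$ (degree at least $4$) to another vertex of degree at least $4$, and all such vertices lie in $V$.
\end{enumerate}
Therefore $\phi'(V)=V$, and $\phi=\phi'|_V$ is an FPF bijection of $V$.

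\emph{Adjacency is preserved.} For $\{u,v\}\in E$, the vertex $\{u,v\}_1$ is adjacent to both $u$ and $v$. So $\phi'(\{u,v\}_1)$ is a degree-$2$ vertex adjacent to $\phi(u)$ and $\phi(v)$, which forces $\{\phi(u),\phi(v)\}\in E$. Since $\phi$ is a bijection on a finite graph, preserving edges is enough for $\phi$ to be an automorphism.

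The construction has polynomial size, so the reduction is polynomial. The main obstacle is the bipartition-swap argument in step 4, which needs connectedness and a vertex of degree at least $2$. This is why we first reduce to connected instances with $|V|\ge 3$, via complementation and a brute-force treatment of small cases.
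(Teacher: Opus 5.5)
Your proof is correct and follows essentially the same route as the paper. Both use Construction~\ref{con:bipartite} with index-swapping in the forward direction, and in the backward direction both observe that $G'$ is connected and bipartite, and that a vertex of $V$ with degree other than $2$ (your $v_0$ with $\deg_{G'}(v_0)\ge 4$) forces every automorphism to preserve $V$ rather than swap the two sides. You also make explicit two points the paper leaves implicit: why one may assume $G$ connected with $|V|\ge 3$ (complementation plus trivial small cases), and why $\phi'|_V$ preserves adjacency in $G$.
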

\begin{proof}
Let $G = (V, E)$ be a connected graph with $|V|\geq 3$ and let $G'$ be constructed following Construction~\ref{con:bipartite}.

As in the proof of Theorem~\ref{thm:split}, it is easy to see that any FPF automorphism of $G$ induces an FPF automorphism of $G'$. For the converse, observe that because $G$ is connected, $G'$ is also connected and there exists a unique bipartition of the vertices of $G'$. This bipartition fixes the set $V$ as a part. Because $|V|\geq 3$, there exists a vertex $u \in V$ with degree different from $2$ in $G'$ and the set of vertices whose distance from $u$ is even forms exactly the set $V$. Therefore, any FPF automorphism must fix $V$ as a set, inducing an FPF automorphism in $G$.
\end{proof}

Let $G$ be a graph and $S(G)$ denote the subdivision of $G$, that is, we replace each edge of $G$ by two edges with a vertex in the middle. The class of \emph{subdivided graphs} is the class of all graphs obtained by subdividing some other graph. FPF automorphisms are particularly interesting for this graph class because if one edge is fixed by an automorphism, then the extension of this automorphism is not fixed point free in $S(G)$. 

Fortunately, in the bipartite construction of Theorem~\ref{thm:bipartite}, a graph has an FPF automorphism if and only if it has a fixed \emph{edge} free automorphism. We denote by $S_k(G)$ the $k$-subdivision of $G$, that is, the graph obtained after replacing each edge of $G$ by a path on $k+2$ vertices. Notice that with this definition, $S(G) = S_1(G)$. The class of \emph{k-subdivided graphs} is the class of all graphs obtained from the $k$-subdivision of some other graph.

\begin{construction} \label{con:subdvided}
Let $G = (V, E)$ be a connected graph with $|V| \ge 3$, and let $G'$ be the bipartite graph constructed according to Construction~\ref{con:bipartite}. We then construct the $k$-subdivided graph $S_k(G')$
(In Figure~\ref{fig:NP-H constructions}\,(d), the graph $S(G')$ is shown).
\end{construction}
\begin{theorem}\label{thm:subdvided}
    \pname{FPFAut} is \NPh\ for \emph{$k$-subdivided graphs}.
\end{theorem}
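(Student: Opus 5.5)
We reduce from \pname{FPFAut} on connected graphs with $|V|\ge 3$, which remains \NPh: a disconnected graph can be handled by taking complements, and small graphs are trivial. Given such a $G$, build $G'$ by Construction~\ref{con:bipartite} and then $S_k(G')$ by Construction~\ref{con:subdvided}. The size is polynomial. It then suffices to show that $G$ has an FPF automorphism if and only if $S_k(G')$ does.

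The plan has two directions. In the forward direction, an FPF automorphism $\phi$ of $G$ gives an automorphism $\phi'$ of $G'$ as in Theorem~\ref{thm:bipartite}, with $\phi'(\{u,v\}_i)=\{\phi(u),\phi(v)\}_{3-i}$. We check that $\phi'$ fixes no edge of $G'$, even setwise. Every edge of $G'$ has the form $u\{u,w\}_i$, with one endpoint in $V$ and the other in $E\times\{1,2\}$. Since $\phi'$ preserves $V$ and $E\times\{1,2\}$, an edge can only be mapped to itself with both endpoints fixed, and no vertex is fixed. So $\phi'$ extends to $S_k(G')$ by mapping each subdivision path onto the image path, and interior path vertices are not fixed because the edge is not fixed. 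This is exactly why the extra $\{1,2\}$ copies are used rather than plain subdivision.

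In the backward direction, let $\psi$ be an FPF automorphism of $S_k(G')$. We recover $G'$ as follows. If $k\ge 1$, subdivision vertices have degree $2$. In $G'$, the vertices $\{u,w\}_i$ have degree $2$, while a vertex $u\in V$ has degree $2\deg_G(u)\ge 2$. Since $G$ is connected with $|V|\ge3$, every $u\in V$ has degree at least $2$ in $G'$, and some $u$ has degree at least $4$. We use distances to a branch vertex, or degree and parity arguments. The original vertices of $G'$ are those at distance divisible by $k+1$ along paths between branch vertices. Then $\psi$ preserves the set of vertices of degree $\ge3$, these lie in $V$, and it preserves the maximal paths of degree-$2$ vertices between them.

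More cleanly, consider the vertex set of $S_k(G')$ as $V\cup(\text{everything else})$. Every vertex of $V$ has even distance in $G'$ to the others, and in $S_k(G')$ the vertices of $V$ lie pairwise at distance a multiple of $2(k+1)$. We claim $\psi$ maps $V$ to $V$. To see this, take $u\in V$ of degree $\ge 4$ (which is fixed as a type by $\psi$). The images of $V$ are exactly the vertices at distance $\equiv 0 \pmod{2(k+1)}$ from $u$ that lie on the "skeleton". Vertices of degree $2$ in $V$ need care: they look like subdivision vertices. I would handle them by considering the structure in which every $\{u,w\}_i$ appears as a pair of twins, with the same endpoints, in $G'$. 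In $S_k(G')$ this yields pairs of internally disjoint paths of length $2(k+1)$ between $u$ and $w$, which identifies $V$ as the endpoints of such double paths.

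Once $\psi(V)=V$, the restriction $\psi|_V$ is FPF. Two vertices $u,w\in V$ are adjacent in $G$ exactly when they are joined by a path of length $2(k+1)$ through degree-$2$ vertices, and $\psi$ preserves this property, so $\psi|_V\in\mathrm{Aut}(G)$. The main obstacle is rigorously showing $\psi(V)=V$ when $G$ has degree-$1$ vertices (degree $2$ in $G'$). I would first try the distance-parity argument from a vertex of degree $\ge3$ in $V$, as in Theorem~\ref{thm:bipartite}, since all branching happens at $V$. Alternatively, I would reduce to minimum degree $\ge2$ by first attaching a gadget, for example replacing $G$ by $G$ joined with structure that preserves FPF, if needed.
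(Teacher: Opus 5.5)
Your construction and forward direction are fine. Going directly from $G$ to $S_k(G')$, and checking by hand that $\phi'$ preserves the two sides of $G'$ (so it moves every edge), is a harmless variant of the paper. The paper instead compares $S_k(G')$ with $G'$ and then invokes Theorem~\ref{thm:bipartite}.

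The gap is in the backward direction: you never prove $\psi(V)=V$, as you note yourself. None of your three options closes it:
\begin{itemize}
\item The distance argument is left with an unexplained ``skeleton'' restriction and without its concluding step.
\item The double-path characterization is false as written. The vertices $\{u,w\}_1$ and $\{u,w\}_2$ are also joined by two internally disjoint paths of length $2(k+1)$, one through $u$ and one through $w$. Even with the degree-$2$ restriction added, you would still owe a proof that no other such pairs exist, e.g.\ on the pendant $4(k+1)$-cycles produced by degree-$1$ vertices of $G$.
\item The gadget is unspecified.
\end{itemize}
Without $\psi(V)=V$ you do not know that $\psi|_V$ permutes $V$, so nothing about $\mathrm{Aut}(G)$ follows.

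The missing observation is that degree-$2$ vertices of $V$ are no obstacle. The distance characterization never looks at degrees; degrees are needed only to find one vertex that $\psi$ must send into $V$. Every vertex of degree $\neq 2$ in $S_k(G')$ lies in $V$, and some $u\in V$ has degree $2\deg_G(u)\ge 4$, so $\psi(u)\in V$.

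Now fix any $v\in V$ and write $d$ for distance in $S_k(G')$.
\begin{itemize}
\item For an original vertex $x$ of $G'$, $d(v,x)=(k+1)\,d_{G'}(v,x)$. Since $G'$ is bipartite, this is $\equiv 0$ or $\equiv k+1 \pmod{2(k+1)}$ according as $x\in V$ or $x\in E\times\{1,2\}$.
\item Also by bipartiteness, the ends $x,y$ of each subdivided edge satisfy $d_{G'}(v,y)=d_{G'}(v,x)+1$ after renaming. So the subdivision vertex at distance $j\in\{1,\dots,k\}$ from $x$ is at distance $(k+1)\,d_{G'}(v,x)+j$ from $v$, which is not divisible by $k+1$.
\end{itemize}
Hence, for every $v\in V$, the set of vertices at distance divisible by $2(k+1)$ from $v$ is exactly $V$. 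Since $\psi$ maps this set for $u$ onto the corresponding set for $\psi(u)\in V$, we get $\psi(V)=V$. This is precisely the paper's argument, which uses divisibility by $k+1$ to recover $V(G')$.

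Your adjacency criterion then finishes the proof. It does not even need the degree-$2$ condition, since $d(x,y)=2(k+1)\,d_G(x,y)$ for $x,y\in V$.
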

\begin{proof}
Let $G = (V, E)$ be a connected graph with $|V| \geq 3$. and let $G'$ be constructed following Construction~\ref{con:bipartite}. Note that every FPF automorphism of $G'$ is also a fixed edge free automorphism because $G'$ is bipartite and there is no automorphism swapping both parts. We claim that the graph $S_k(G')$ from Construction~\ref{con:subdvided} has an FPF automorphism if and only if $G'$ has an FPF automorphism.

Let $\phi'$ be an automorphism of $G'$ without fixed points. Let $(u, v)_{i, d}$ denote the vertex at distance $d$ from $u$ in the new subdivided path from $u$ to $\{u, v\}_i$. We construct a map $\phi_k$ that is an automorphism of $S_k(G')$. The restriction of $\phi_k$ on $V(G')$ is equal to $\phi'$, and for the new vertices, we set $\phi_k((u, v)_{i, d}) := (u', v')_{j, d}$ given that $\phi'(\{u, v\}_i) = \{u', v'\}_j$. This map preserves the distance between the vertices and, since $\phi'$ fixes no edge, $\phi_k$ is an FPF automorphism.

Now, let $\phi_k$ be an automorphism with no fixed point of $S_k(G')$. Let $u$ be a vertex of degree different from $2$. We know that such a vertex exists and belongs to the set $V$ because $G$ is connected and $|V| \geq 3$. The set of vertices whose distance from $u$ is a multiple of $k+1$ forms exactly the set $V(G')$, so the vertices of $G'$ are fixed as a set by $\phi_k$. Since two vertices of $G'$ are adjacent if and only if they are at distance $k+1$ in $S_k(G')$, we conclude that $\phi' := \phi_{k|V(G')}$ is an FPF automorphism of $G'$.
\end{proof}

In Section~\ref{sec:positive_results}, we discuss a polynomial-time algorithm for \pname{FPFAut} on cographs, the class of graphs with no induced $P_4$. We show that by forbidding a different graph, the problem of finding an FPF automorphism is \NPc. Let $H$ be a graph and denote by $H$-free graphs the family of graphs having no induced subgraph isomorphic to $H$.

\begin{theorem} \label{thm:h-free}
    \pname{FPFAut} is \NPh\ for $H$-free graphs, if $H$ is not an induced subgraph of $P_4$.
\end{theorem}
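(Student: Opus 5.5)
The plan is a case analysis on $H$, using the constructions already available. If $H$ is not an induced subgraph of $P_4$, then one of three things holds: $H$ contains a cycle, $H$ contains an induced $2K_2$, or $H$ contains an induced $K_{1,3}$ or $P_5$. The induced subgraphs of $P_4$ are exactly $K_1$, $K_2$, $2K_1$, $P_3$, $K_1+K_2$, $2K_2$ and $P_4$ itself, up to isolated vertices. So I will first make this list precise. I will argue that if $H$ is not one of them, then $H$ contains (not necessarily as a subgraph, but in a way that survives to the target class) one of the following obstructions:
\begin{enumerate*}[label=(\roman*)]
\item a cycle;
\item a vertex of degree at least $3$, i.e.\ a claw in the forest case;
\item $3K_1$ or an induced path $P_5$ or more components.
\end{enumerate*}
Since $H$-freeness is monotone (if $H'$ is an induced subgraph of $H$, then $H'$-free graphs are $H$-free), it suffices to prove NP-hardness for a small family of ``minimal'' forbidden graphs: every graph that is not an induced subgraph of $P_4$ contains one of them as an induced subgraph.

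For the cases the paper has already handled, I will use these reductions. If $H$ contains a cycle, the $k$-subdivided graphs of Theorem~\ref{thm:subdvided} with $k$ larger than $|H|$ have girth exceeding $|H|$, so they are $H$-free and we are done. If $H$ is a forest containing a vertex of degree at least $3$, then $H$ contains an induced claw $K_{1,3}$. For claw-free graphs I would instead use line graphs, or complement the forest case as follows. If $H$ is a forest with maximum degree at most $2$, then $H$ is a linear forest. A linear forest that is not an induced subgraph of $P_4$ contains $3K_1$, or $P_5$, or $2K_2+\dots$; in fact it contains $3K_1$ unless it has at most two components, each small. The key trick is complementation: $\phi$ is an FPF automorphism of $G$ if and only if it is one of $\overline{G}$. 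Hence NP-hardness for $H$-free graphs is equivalent to NP-hardness for $\overline{H}$-free graphs. So if $\overline{H}$ contains a cycle, or $\overline H$ falls into an already solved case, we are done. For example, $3K_1$-free graphs are complements of triangle-free graphs, and the bipartite construction of Theorem~\ref{thm:bipartite} handles those. Split graphs are $2K_2$-, $C_4$- and $C_5$-free, which gives further coverage.

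Concretely, I will prove: if neither $H$ nor $\overline H$ contains a cycle, then $H$ has at most $4$ vertices. This holds because graphs that are both forest and co-forest are tiny, by Ramsey-type and edge-count arguments: $|E(H)|+|E(\overline H)|=\binom n2 \le 2(n-1)$ forces $n\le 4$. So for $n\ge5$ one of $H,\overline H$ has a cycle, and the large-girth subdivided graphs (together with complementation) finish it. The remaining step is to check the graphs on at most $4$ vertices that are self-forest-co-forest but not induced subgraphs of $P_4$. These are $K_{1,3}$ and its complement $K_3+K_1$ (which has a cycle, so it is done), $3K_1$ (whose complement $K_3$ is done), $K_1+P_3$, and $2K_1+K_2$; all of these are handled by some case above. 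That leaves the claw, which is covered because its complement contains a triangle. I expect the main obstacle to be this finite bookkeeping, ensuring every small $H$ is hit. I would verify it carefully by enumerating the forests on $4$ vertices whose complements are also forests.
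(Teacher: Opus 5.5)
Your concrete argument (complementation, the $k$-subdivided graphs of Theorem~\ref{thm:subdvided} with girth exceeding $|V(H)|$ when $H$ has a cycle, and the count $\binom{n}{2}\le 2(n-1)\Rightarrow n\le 4$ showing that $H$ or $\overline{H}$ must contain a cycle) is exactly the paper's proof, and it usefully justifies the ``$H$ or $\overline{H}$ has a cycle'' step that the paper only asserts. Your small-case bookkeeping is wrong, though harmlessly so: $2K_2$ is not an induced subgraph of $P_4$ (it is covered anyway, since $\overline{2K_2}=C_4$); none of $K_{1,3}$, $3K_1$, $K_1+P_3$, $2K_1+K_2$ is a forest with forest complement; and the correct enumeration gives only $K_1,K_2,2K_1,K_1+K_2,P_3,P_4$, all induced subgraphs of $P_4$, so no exceptional cases remain and your whole first paragraph (claws via line graphs, linear forests, split graphs) can be dropped.
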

\begin{proof}
We adapt an argument from~\cite{booth1979problems}. Let $H$ be a graph with $k$ vertices that is not an induced subgraph of $P_4$. Notice that either $H$ has a cycle or $\overline{H}$ has a cycle, as every automorphism of the graph is also an automorphism of its complement, \pname{FPFAut} is \NPh\ for $H$-free graphs if and only if \pname{FPFAut} is \NPh\ for $\overline{H}$-free graphs. Assume without loss of generality that $H$ has a cycle. Since the girth of any $k$-subdivided graph is greater than $k$, the class of $H$-free graphs contains the class of $k$-subdivided graphs and the result follows by Theorem~\ref{thm:subdvided}.
\end{proof}

 \section{Algorithm for \pname{FPFAut}} \label{sec:efficient}

\subsection{Modular Decomposition}

We follow the notation from Habib and Paul~\cite{habib2010survey} for modular decompositions. Let $M \subseteq V$ be a set of vertices of a graph $G = (V, E)$ and $u$ be a vertex of $V \setminus M$. Vertex $u$ \emph{splits} $M$ if there exists $v, w \in M$ such that $\{u, v\} \in E$ and $\{u, w\} \not \in E$. If $u$ is not a splitter of $M$, then $M$ is \emph{uniform} with respect to $u$. We say that $M$ is a \emph{module} if it is uniform with respect to any $u \not \in M$. The union of connected components (or of co-connected components) are examples of modules. 

Two modules $M$ and $M'$ \emph{overlap} if $M \cap M' \neq \emptyset$, $M\setminus M' \neq \emptyset$ and $M'\setminus M \neq \emptyset$. A module is \emph{strong} if it does not overlap any other module. A connected component is an example of a strong module.

Let $\mathcal{P} = \{M_1, \dots, M_k\}$ be a partition of the vertex set of a graph $G$. If, for all $1 \leq i \leq k$, $M_i$ is a module of $G$, then $\mathcal{P}$ is a \emph{modular partition} of $G$.

A module $M$ is \emph{maximal} with respect to a set $S$ of vertices, if $M \subset S$ and there is no module $M'$ such that $M \subset M' \subset S$. A non-trivial modular partition $\mathcal{P} = \{M_1, \dots, M_k\}$ which only contains maximal strong modules is a maximal modular partition. If $G$ (resp. $\overline{G}$) is not connected, then the elements of its connected (resp. co-connected) components are the elements of the maximal modular partition. The maximal modular partition can be computed in linear time.

\begin{lemma}[\cite{mcconnel1994linear}, Lemma 5.2]\label{lem:modules_complexity}
    A maximal modular partition of a graph $G$ can be computed in time $\mathcal{O}(|V(G)|+|E(G)|)$.
\end{lemma}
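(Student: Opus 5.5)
The plan is to compute the maximal modular partition $\mathcal{P}$ by a direct case analysis on the three possible types of the graph $G$: disconnected, co-disconnected, or both connected and co-connected (the prime case). Each case will be handled within the $\mathcal{O}(|V(G)|+|E(G)|)$ budget. The first two cases are easy. If $G$ is disconnected, a linear-time BFS/DFS gives the connected components, and these are exactly the elements of $\mathcal{P}$, as recalled just before the statement. If $G$ is connected but $\overline{G}$ is not, we want the co-connected components. Building $\overline{G}$ explicitly is too expensive, so instead we run the standard complement-BFS. Keep the set $U$ of unvisited vertices in a doubly linked list. When a vertex $v$ is processed, mark its neighbours in $G$, then scan $U$: each unmarked vertex is removed and enqueued (it is a non-neighbour), and each marked vertex stays in $U$. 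The charge per processed vertex $v$ is $\deg(v)$ for marking plus the cost of scanning $U$. Every scanned vertex either leaves $U$, which happens at most $|V|$ times in total, or is a neighbour of $v$, which is charged to $\deg(v)$. So the whole traversal runs in $\mathcal{O}(|V|+|E|)$ and outputs the co-components.

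The remaining case is that $G$ and $\overline{G}$ are both connected. By the modular decomposition theorem, the maximal strong modules are then the maximal proper modules, and they partition $V$. To find them I would use partition refinement. Choose a pivot vertex $v$ and start from the partition $\{\{v\}, N(v), V\setminus N[v]\}$. Then repeatedly refine every part by the neighbourhood of a vertex outside that part. A part that still splits is separated, and a part that never splits becomes a module. Doing this with the ``process the smaller half'' rule, applied to the neighbourhoods of the pivot vertices, yields the coarsest modular partition in which $v$ is alone. This is a factorizing-permutation style computation and costs $\mathcal{O}(|V|+|E|)$ overall when parts are stored as linked lists with counters. From that partition, the maximal modules not containing $v$ are obtained directly. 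It remains to find the maximal module containing $v$ and to merge parts accordingly. To do this I would form the quotient graph and identify the parts that are adjacent to every other part in a way compatible with $v$; this is a linear-size computation on the quotient.

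The main obstacle is the prime case. Two things have to be justified there: that the refinement really stops at maximal strong modules, and that the merging step around $v$ respects the linear bound. The cleanest route is probably to cite the known result directly. McConnell and Spinrad's linear-time modular decomposition already returns the whole decomposition tree, and the children of its root form exactly the maximal modular partition. So the lemma follows from that result together with reading off the root in $\mathcal{O}(|V|)$ time, and the case analysis above is best viewed as an explanation of why the root level in particular can be computed in linear time.
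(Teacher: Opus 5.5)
Your final argument, which invokes McConnell and Spinrad's linear-time modular decomposition and reads off the children of the root in $\mathcal{O}(|V(G)|)$ time, is exactly the paper's route: the paper gives no proof beyond citing McConnell and Spinrad. Do not present your prime-case sketch as a proof, though, because the smaller-half refinement rule by itself only guarantees $\mathcal{O}(|V(G)|+|E(G)|\log|V(G)|)$ time and the step recovering the maximal module containing the pivot $v$ is left vague, so the citation is what actually gives the linear bound.
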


Let $M$ and $M'$ be disjoint modules. We say that $M$ and $M'$ are \emph{adjacent} if every vertex of $M$ is adjacent to every vertex of $M'$ and \emph{non-adjacent} if every vertex of $M$ is non-adjacent to every vertex of $M'$. Note that two disjoint modules are either adjacent or non-adjacent. To a modular partition $\mathcal{P} = \{M_1, \dots, M_k\}$ of a graph $G$, we associate a \emph{quotient graph} $G_{/\mathcal{P}}$, whose vertices are in one-to-one correspondence with the parts of $\mathcal{P}$. Two vertices $v_i$ and $v_j$ of $G_{/\mathcal{P}}$ are adjacent if and only if the corresponding modules $M_i$ and $M_j$ are adjacent in $G$. In case the partition $\mathcal{P}$ is not specified, we assume $\mathcal{P}$ is the maximal modular partition. 

\begin{definition}
    Let $G$ be a graph, let $\mathcal{P} = \{M_1, \dots, M_k\}$ be the maximal modular partition of $G$ and let $G_{/\mathcal{P}}$ be the associated quotient graph. Let $c : V(G_{/\mathcal{P}}) \to [k]$ be a coloring of the vertices of $G_{/\mathcal{P}}$ such that two vertices $v_i, v_j \in V(G_{/\mathcal{P}})$ receive the same color $c(v_i) = c(v_j)$ if and only if $G[M_i] \cong G[M_j]$. We refer to the pair $(G_{/\mathcal{P}}, c)$ as the \emph{quotient colored graph} associated with $\mathcal{P}$. 
\end{definition}

\subsection{PFPFAut}

Our main strategy is to explore the fact that the quotient graph $G_{/\mathcal{P}}$ can be simpler than the original graph, and due to this simpler structure, we can solve a harder variation of \pname{FPFAut}.

The set $\{1, 2, \dots, n\}$ is abbreviated as $[n]$. The set of automorphisms of $G$ is denoted by $\text{Aut}(G)$. Let $c : V(G) \to [k]$ be a coloring of $V(G)$ with $k$ colors, the set of color preserving automorphisms is denoted as $\text{Aut}(G, c)$.

\begin{definition}
    Let $G$ be a graph, let $c: V(G) \to [k]$ be a coloring of the vertices of $G$ and let $b : V(G) \to \{0, 1\}$ be a boolean function on the vertices of $G$. Let $\phi \in \text{Aut}(G, c)$, we say that $\phi$ is \emph{partial fixed point free with respect to $b$} (PFPF$_{b}$) if $\phi(v) = v$ only if $b(v) = 1$. 
\end{definition}

The boolean function $b$ above serves to allow some fixed points. The decision problem associated is the \pname{PFPFAut} problem.

\probl{PFPFAut}{A graph $G$, a coloring $c$, and a boolean function $b$.}{Does $(G, c)$ admit an PFPF$_b$ automorphism?}

We point that \pname{PFPFAut} can be reduced to automorphism restricted to lists~\cite{klavik2021graph}, a particular type of automorphism where each vertex $v \in V(G)$ is associated with a list $\mathcal{L}(v) \subseteq V(G)$, and the automorphism $\phi$ satisfies the property that $\phi(v) \in \mathcal{L}(v)$. The reduction follows by setting $\mathcal{L}(v) := \{w : c(w) = c(v)\}$ in case $b(v) = 1$ and $\mathcal{L}(v) := \{w : c(w) = c(v), w \neq v\}$, in case $b(v) = 0$. But for our applications, the \pname{PFPFAut} problem will be solved by simpler algorithms than the ones present in~\cite{klavik2021graph}. This approach is going to be useful in Section \ref{sec:involution} when we modify our algorithm to deal with FPF involutions.

\subsection{Algorithm for \pname{FPFAut}}

\begin{algorithm2e}[ht]
      \SetAlgoLined
      \SetKwFunction{FhasFPFAut}{hasFPFAutomorphism}
      \SetKwProg{Fn}{}{}{}
      \Input{A graph $G$.}
      \Output{Does $G$ admit an FPF automorphism?}

      \Fn{\FhasFPFAut{$G$}}{
        \uIf {$|V(G)| = 1$} {
            \Return false\\
        }
        let $\mathcal{P} = \{M_1, \dots M_k\}$ be the maximal modular partition of $G$.\\
        let $(G_{/\mathcal{P}}, c)$ be a quotient colored graph associated with $\mathcal{P}$.\\
        let $b \in \{0, 1\}^k$ with $b_i = 1$ if and only if \FhasFPFAut{$G[M_i]$}=true.\\
        \uIf{($G_{/\mathcal{P}}$, $c$) has an PFPF$_b$ automorphism}{
            \Return true\\
        }
        \Return false\\
      }
      \caption{\FuncSty{hasFPFAutomorphism}}
      \label{alg:fpf_aut}
\end{algorithm2e}

\begin{theorem}\label{thm:fpf_aut_alg}
    Let $G$ be a graph. Algorithm~\ref{alg:fpf_aut} returns ``true'' if and only if $G$ contains a fixed point free automorphism.
\end{theorem}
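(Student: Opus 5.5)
We argue by strong induction on $|V(G)|$. The base case $|V(G)|=1$ is immediate: a single vertex is always fixed, and the algorithm returns false. For $|V(G)|\ge 2$ the maximal modular partition $\mathcal{P}=\{M_1,\dots,M_k\}$ is non-trivial, so $k\ge 2$ and every $M_i$ is strictly smaller than $V(G)$. By induction, $b_i=1$ holds exactly when $G[M_i]$ has an FPF automorphism. It therefore suffices to show that $G$ has an FPF automorphism if and only if $(G_{/\mathcal{P}},c)$ has a PFPF$_b$ automorphism.

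\emph{($\Leftarrow$)} Let $\psi$ be a PFPF$_b$ automorphism of $(G_{/\mathcal{P}},c)$. For each $i$ with $\psi(v_i)=v_j$ and $j\neq i$, the colouring gives $c(v_i)=c(v_j)$, so $G[M_i]\cong G[M_j]$. We fix such an isomorphism $\sigma_i:M_i\to M_j$. For each $i$ with $\psi(v_i)=v_i$ we have $b_i=1$, and we take $\sigma_i$ to be an FPF automorphism of $G[M_i]$. Define $\phi=\bigcup_i\sigma_i$. This is a bijection because $\psi$ permutes the parts. It has no fixed points: a vertex of $M_i$ is sent either into a different part $M_j$ or by the FPF map $\sigma_i$. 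To check that $\phi$ preserves adjacency, take two vertices in the same module; their adjacency is preserved by $\sigma_i$. Take $x\in M_i$ and $y\in M_{i'}$ with $i\neq i'$. Then $xy\in E(G)$ iff $M_i,M_{i'}$ are adjacent modules, iff $v_iv_{i'}\in E(G_{/\mathcal{P}})$, iff $\psi(v_i)\psi(v_{i'})$ is an edge, iff $\phi(x)\phi(y)\in E(G)$.

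\emph{($\Rightarrow$)} Let $\phi$ be an FPF automorphism of $G$. The key step, and the main obstacle, is to show that $\phi$ permutes the parts of $\mathcal{P}$. Automorphisms map modules to modules and preserve overlap, so they map strong modules to strong modules. They also preserve inclusion, so they map maximal strong modules (those properly contained in $V$) to maximal strong modules. Hence $\phi(M_i)=M_{\pi(i)}$ for a permutation $\pi$ of $[k]$, and $\phi|_{M_i}$ is an isomorphism $G[M_i]\cong G[M_{\pi(i)}]$. Define $\psi(v_i)=v_{\pi(i)}$. This preserves colours, and it preserves adjacency in the quotient by the same module-adjacency argument as above, so $\psi$ is an automorphism of $(G_{/\mathcal{P}},c)$. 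If $\psi(v_i)=v_i$, then $\phi|_{M_i}$ is an FPF automorphism of $G[M_i]$, so $b_i=1$. Thus $\psi$ is PFPF$_b$.

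The only subtlety is the uniqueness and canonicity of the maximal modular partition, which guarantees that the parts are determined by $G$ alone and so are respected by every automorphism. We would cite the standard fact that the maximal strong modules are uniquely defined, as in the survey of Habib and Paul~\cite{habib2010survey}. We would also note that $\mathcal{P}$ is non-trivial whenever $|V|\ge2$, so that the recursion is well-founded.
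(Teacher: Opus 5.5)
Your proof is correct and takes essentially the same route as the paper. Both argue by induction on $|V(G)|$. For one direction, both build an FPF automorphism from a PFPF$_b$ automorphism of $(G_{/\mathcal{P}},c)$, using isomorphisms between equally coloured modules and FPF automorphisms of the fixed modules. For the other direction, both use the fact that every automorphism permutes the maximal strong modules (the paper's Lemma~\ref{lem:modular_aut}) to induce a PFPF$_b$ automorphism of the quotient. Your treatment of maximality (through inclusion-preservation under $\phi$ and $\phi^{-1}$) and your explicit check of adjacency between different modules are a little more careful than the paper's write-up, but the argument is the same.
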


We prove the two directions of theorem~\ref{thm:fpf_aut_alg} separately.

\begin{lemma} \label{lem:fpf_aut_true_implies_correct}
    If \FhasFPFAut{$G$} returns ``true'', then $G$ has an FPF automorphism.
\end{lemma}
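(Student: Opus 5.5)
We argue by strong induction on $|V(G)|$. The base case $|V(G)|=1$ is immediate, since the algorithm returns ``false'' and there is nothing to prove. So assume $|V(G)|\ge 2$ and that the algorithm returned ``true''. Then the maximal modular partition $\mathcal{P}=\{M_1,\dots,M_k\}$ is nontrivial, meaning $k\ge 2$ and every $M_i$ is a proper subset of $V(G)$. The recursive calls are therefore on strictly smaller graphs, and the induction hypothesis applies to each $G[M_i]$. Moreover, $(G_{/\mathcal{P}},c)$ has a PFPF$_b$ automorphism $\psi$. We will lift $\psi$ to an FPF automorphism $\phi$ of $G$.

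For each $i$, write $j=\psi(i)$ for the index with $\psi(v_i)=v_j$. Because $\psi$ preserves the coloring $c$, we have $G[M_i]\cong G[M_j]$.
\begin{itemize}
\item If $j\neq i$, fix any isomorphism $\sigma_i : G[M_i]\to G[M_j]$.
\item If $j=i$, the PFPF$_b$ condition forces $b_i=1$. Hence \FhasFPFAut{$G[M_i]$} returned ``true'', and by the induction hypothesis $G[M_i]$ has an FPF automorphism, which we take as $\sigma_i$.
\end{itemize}
Define $\phi$ on $V(G)$ by $\phi(u)=\sigma_i(u)$ for $u\in M_i$. Since $\psi$ is a bijection on the parts and each $\sigma_i$ is a bijection $M_i\to M_{\psi(i)}$, the map $\phi$ is a bijection of $V(G)$.

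Next we check that $\phi$ is fixed point free. Take $u\in M_i$. If $\psi(i)\ne i$, then $\phi(u)\in M_{\psi(i)}$, which is disjoint from $M_i$, so $\phi(u)\neq u$. If $\psi(i)=i$, then $\phi(u)=\sigma_i(u)\neq u$ because $\sigma_i$ is FPF.

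Finally we check that $\phi$ preserves adjacency and non-adjacency. For $u,w$ in the same module $M_i$, this holds because $\sigma_i$ is an isomorphism of induced subgraphs. For $u\in M_i$ and $w\in M_j$ with $i\ne j$, recall that disjoint modules are either completely adjacent or completely non-adjacent. Hence $uw\in E(G)$ if and only if $v_iv_j\in E(G_{/\mathcal{P}})$. Since $\psi$ is an automorphism, this holds if and only if $v_{\psi(i)}v_{\psi(j)}\in E(G_{/\mathcal{P}})$. This in turn holds if and only if $\phi(u)\phi(w)\in E(G)$, because $\phi(u)\in M_{\psi(i)}$, $\phi(w)\in M_{\psi(j)}$, and $\psi(i)\neq\psi(j)$.

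No step is a real obstacle. The one place needing care is the justification of the recursion, which requires that the maximal modular partition always consists of proper subsets when $|V(G)|\ge 2$. This holds because the partition is nontrivial by definition: if $G$ or $\overline{G}$ is disconnected, its parts are the components or co-components, and otherwise they are the maximal strong modules, which are proper.
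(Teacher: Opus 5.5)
Your proof is correct and follows the same approach as the paper. Both argue by induction on $|V(G)|$ and lift the PFPF$_b$ automorphism $\psi$ of the colored quotient: isomorphisms $G[M_i]\to G[M_{\psi(i)}]$ handle the modules that $\psi$ moves, and the inductively guaranteed FPF automorphisms handle the modules that $\psi$ fixes. Your piecewise definition of $\phi$ is slightly cleaner than the paper's ``composition of $\phi'$ with the $\phi_i$''. Your explicit checks of bijectivity, cross-module adjacency, and properness of the modules (needed for the recursion) fill in details the paper leaves implicit.
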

\begin{proof}
    We prove the claim by constructing an FPF automorphism $\phi$ from Algorithm~\ref{alg:fpf_aut}. The proof follows by induction on $|V(G)|$. If $|V(G)| = 1$, the algorithm returns ``false''. Assume $|V(G)| \geq 2$ and that \FhasFPFAut{$G$} returns ``true''. 

    Let $\mathcal{P} = \{M_1, \dots M_k\}, (G_{/\mathcal{P}}, c)$ and $b$ be defined as in Algorithm~\ref{alg:fpf_aut}. If $b_i = 1$, we know, by induction, that $G[M_i]$ has an FPF automorphism, say $\phi_i$. We also know that there exists $\psi \in \text{Aut}(G_{/\mathcal{P}}, c)$ such that every vertex $i$ of $G_{/\mathcal{P}}$ fixed by $\psi$ satisfies $b_i = 1$. Consider the extension of $\psi$ to an automorphism $\phi'$ of $G$ such that the map $v_i$ to $\psi(v_i)$ in $G_{/\mathcal{P}}$ is extended to a map from $G[M_i]$ to $G[M_{\psi(i)}]$, which by the coloring $c$ we know exists. Let $\phi$ be the composition of $\phi'$ with the FPF automorphisms $\phi_i$. By construction, $\phi$ is an automorphism. Let $v \in M_i$ for some $i$, if $\phi(v) \in M_i$, this means that $\phi(v) = \phi_i(v) \neq v$. Thus, $\phi$ is an FPF automorphism.
\end{proof}

Before proving the other implication, we need the following lemma.

\begin{lemma}\label{lem:modular_aut}
    Let $G$ be a graph, let $u, v \in V(G)$, and let $M_u$ and $M_v$ be the maximal strong modules that contain $u$ and $v$, respectively. If $\phi$ is an automorphism of $G$ such that $\phi(u) = v$, then $\phi(M_u) = M_v$.
\end{lemma}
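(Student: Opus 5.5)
The plan is to use only the fact that automorphisms preserve the module structure, and then argue by maximality. First I will record the basic invariance property. If $\phi \in \text{Aut}(G)$ and $M$ is a module of $G$, then $\phi(M)$ is a module. Indeed, take $x \notin \phi(M)$. Then $\phi^{-1}(x) \notin M$, so $\phi^{-1}(x)$ is uniform on $M$, and applying $\phi$ shows that $x$ is uniform on $\phi(M)$. Since $\phi$ is a bijection, it also preserves overlapping: $M$ and $M'$ overlap if and only if $\phi(M)$ and $\phi(M')$ overlap. Hence $\phi$ maps strong modules to strong modules, and the same holds for $\phi^{-1}$. Finally, $\phi(V)=V$, so maximality with respect to $V$ is preserved as well. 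Inclusions are preserved in both directions, so if $M \subset M' \subset V$ with $M'$ a module, then $\phi(M) \subset \phi(M') \subset V$, and symmetrically for $\phi^{-1}$.

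With this in hand I will argue as follows. The set $\phi(M_u)$ is a maximal strong module of $G$, and it contains $\phi(u)=v$. The maximal modular partition consists of maximal strong modules and is a partition of $V$. Every maximal strong module must belong to this partition, since it meets some part and two strong modules are either disjoint or nested, and maximality then forces equality. So $\phi(M_u)$ is the unique part containing $v$, namely $M_v$.

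The step I expect to need the most care is the one just above: pinning down that ``the maximal strong module containing $v$'' is well defined and equal to $\phi(M_u)$. I will handle it via the non-overlapping property. Suppose $N$ is a maximal strong module containing $v$. Then $N$ and $\phi(M_u)$ intersect, so being strong they are nested, and maximality of both gives $N=\phi(M_u)$. I also have to match the paper's convention that maximality means being maximal among proper modules ($M \subsetneq V$). The degenerate case where the only proper strong module is trivial is handled identically, because $\phi$ fixes $V$ as a set and so preserves properness.
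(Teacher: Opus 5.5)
Your proof is correct and follows the same route as the paper: show that $\phi(M_u)$ is again a module, that it is strong (an overlap with $\phi(M_u)$ transfers back to an overlap with $M_u$ via $\phi^{-1}$), and that it is maximal, so it must be the maximal strong module containing $v$. Your version is more careful in two places. You justify maximality by pulling any intermediate module $\phi(M_u)\subsetneq N\subsetneq V$ back through $\phi^{-1}$, whereas the paper only checks what happens when a single vertex is added. You also make explicit why the maximal strong module containing $v$ is unique, namely that intersecting strong modules are nested, a step the paper leaves implicit.
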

\begin{proof}
    First, notice that $\phi(M_u)$ must be a module, as the neighborhood between $z$ and $M_u$ is preserved by $\phi$ for every $z \not \in M_u$, so $\phi(z)$ is uniform w.r.t $\phi(M_u)$. Moreover, $\phi(M_u)$ is strong, as $M$ overlapping $\phi(M_u)$ implies that $\phi^{-1}(M)$ overlaps $M_u$. Finally, $\phi(M_u)$ is maximal, since if there exists a vertex $z$ such that $\phi(M_u) \cup \{z\}$ is a module, then $M_u \cup \{\phi^{-1}(z)\}$ would be a module. We conclude that $\phi(M_u)$ is the maximal strong module that contains $v$, that is, $M_v$.
\end{proof}

\begin{lemma}
    If $G$ has an FPF automorphism then \FhasFPFAut{$G$} returns ``true''.
\end{lemma}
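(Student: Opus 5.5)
We prove this by induction on $|V(G)|$, using Lemma~\ref{lem:modular_aut} to push an FPF automorphism $\phi$ of $G$ down to the quotient graph. If $|V(G)|=1$, then $G$ has no FPF automorphism, so there is nothing to prove. Assume $|V(G)|\ge 2$, and let $\mathcal{P}=\{M_1,\dots,M_k\}$, $(G_{/\mathcal{P}},c)$ and $b$ be as in Algorithm~\ref{alg:fpf_aut}.

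First we define a permutation $\psi$ of $[k]$. For each $i$, pick any $u\in M_i$ and let $j$ be the index with $\phi(u)\in M_j$. By Lemma~\ref{lem:modular_aut}, $\phi(M_i)=M_j$, so $j$ does not depend on the choice of $u$, and we set $\psi(i):=j$. Since $\phi$ is a bijection and $\mathcal{P}$ is a partition, $\psi$ is a bijection on $V(G_{/\mathcal{P}})$. We then check three things.

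\emph{$\psi$ is an automorphism of $G_{/\mathcal{P}}$.} The parts $M_i$ and $M_{i'}$ are adjacent if and only if some (equivalently, every) pair $x\in M_i$, $y\in M_{i'}$ is adjacent. Because $\phi$ preserves adjacency, this holds if and only if $M_{\psi(i)}$ and $M_{\psi(i')}$ are adjacent.

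\emph{$\psi$ preserves $c$.} The restriction $\phi|_{M_i}$ is an isomorphism from $G[M_i]$ onto $G[M_{\psi(i)}]$, so $c(v_i)=c(v_{\psi(i)})$.

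\emph{$\psi$ is PFPF$_b$.} Suppose $\psi(i)=i$. Then $\phi(M_i)=M_i$, so $\phi|_{M_i}$ is an automorphism of $G[M_i]$, and it has no fixed points because $\phi$ has none. The modular partition is non-trivial, so $|M_i|<|V(G)|$. By induction, \FhasFPFAut{$G[M_i]$} returns ``true'', that is, $b_i=1$.

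Hence $(G_{/\mathcal{P}},c)$ admits a PFPF$_b$ automorphism, namely $\psi$, and the algorithm returns ``true''.

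The step that needs the most care is the well-definedness of $\psi$, which relies entirely on Lemma~\ref{lem:modular_aut}. We also use that the maximal modular partition has at least two parts whenever $|V(G)|\ge 2$, so that the recursion is on strictly smaller graphs. This holds because $\mathcal{P}$ is non-trivial by definition, and the paper takes this as given.
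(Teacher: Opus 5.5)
Your proof is correct and follows the same route as the paper. Use Lemma~\ref{lem:modular_aut} to turn $\phi$ into a color-preserving automorphism $\psi$ of $G_{/\mathcal{P}}$, then apply the induction hypothesis to each module fixed by $\psi$ to get $b_i=1$; you just make explicit the checks the paper's short proof leaves implicit: that $\psi$ is well defined, that it preserves adjacency and the coloring $c$, and that $|M_i|<|V(G)|$.
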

\begin{proof}
    We prove the claim by induction. If $|V(G)| = 1$ then $G$ has no FPF automorphism. Assume $|V(G)| \geq 2$ and that $G$ has an FPF automorphism $\phi$. By Lemma~\ref{lem:modular_aut}, $\phi$ induces an automorphism $\psi$ in $(G_{/\mathcal{P}}, c)$. Thus, every module $M_i$ fixed by $\psi$ must have an FPF automorphism $\phi_i$. By induction, $b_i$ has value $1$ for these modules and \FhasFPFAut{$G$} returns ``true''. 
\end{proof}

\subsection{Complexity}

We upper bound the running time of Algorithm~\ref{alg:fpf_aut} for different graph classes that can be characterized in terms of their modular decompositions. 

The recursion tree of Algorithm~\ref{alg:fpf_aut} corresponds to the standard modular tree, and its size can be bounded as follows.

\begin{lemma} \label{lem:tree_size}
    Let $G$ be a graph. The recursion tree of Algorithm~\ref{alg:fpf_aut} has at most $\mathcal{O}(|V(G)|)$ nodes.
\end{lemma}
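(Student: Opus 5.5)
The plan is to identify the recursion tree of Algorithm~\ref{alg:fpf_aut} with a rooted tree $T$ and then count its nodes. The root of $T$ corresponds to the call on $G$. The children of a call on a graph $H$ with $|V(H)|\ge 2$ correspond to the recursive calls on $H[M_1],\dots,H[M_k]$, where $\{M_1,\dots,M_k\}$ is the maximal modular partition of $H$. A call on a single vertex returns immediately, so it is a leaf.

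First, I will record that each node of $T$ is labeled by a vertex subset of $G$. The label of the root is $V(G)$, and the labels of the children of a node $X$ partition $X$. Consequently, the labels of the leaves are exactly the singletons $\{v\}$, $v\in V(G)$. Each vertex lies in exactly one leaf: the labels at each depth are disjoint, and the recursion only stops at singletons. Hence $T$ has exactly $|V(G)|$ leaves.

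Second, I will show that every internal node has at least two children. An internal node corresponds to a call on $H=G[X]$ with $|X|\ge 2$. There the maximal modular partition is by definition \emph{non-trivial}, which I read as having $k\ge 2$ parts, each a proper subset of $X$. This holds in all cases:
\begin{itemize}
\item if $H$ is disconnected, the parts are the connected components;
\item if $\overline{H}$ is disconnected, the parts are the co-connected components;
\item otherwise, the parts are the maximal strong proper modules, and the singletons are modules, so at least two parts exist.
\end{itemize}
This is the only step that needs care. I must make sure the recursion never calls itself on the same set $X$, which would create a chain of unary nodes. This follows because the maximal strong modules are taken as proper subsets of $X$. That in turn holds because the definition of a maximal module requires $M\subset S$, and the partition must be non-trivial.

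Finally, I will apply the standard counting fact: a rooted tree with $L$ leaves in which every internal node has at least two children has at most $L-1$ internal nodes. I will prove this either by induction or by the edge count. Counting edges gives $\#\text{nodes}-1=\#\text{edges}\ge 2\cdot\#\text{internal}$, so $\#\text{internal}\le L-1$. Therefore $T$ has at most $2|V(G)|-1=\mathcal{O}(|V(G)|)$ nodes. This tree is exactly the modular decomposition tree, which matches the remark preceding the statement.
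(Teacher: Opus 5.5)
Your proof is correct and takes the same approach as the paper. The paper gives no written proof and relies only on the remark that the recursion tree is the standard modular decomposition tree. Your argument makes that explicit: the tree has $|V(G)|$ singleton leaves and every internal node has at least two children, which gives the bound of $2|V(G)|-1$ nodes.
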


A graph class $\mathcal{G}$ is called \emph{hereditary} if $G[S] \in \mathcal{G}$ for any $G \in \mathcal{G}$ and $S \subseteq V(G)$. To construct the coloring graph $(G_{/\mathcal{P}}, c)$, it is necessary to decide for $|\mathcal{P}|^2$ pairs of induced subgraphs of $G$ if they are isomorphic. So, if $G$ belongs to a hereditary graph class $\mathcal{G}$, this complexity depends on the complexity of the isomorphism problem between graphs in $\mathcal{G}$.

\begin{definition}
    Let $\mathcal{G}$ be a graph class. We denote by $f_{\text{iso}}^{\mathcal{G}}$ the time complexity of deciding if two graphs in $\mathcal{G}$ are isomorphic.
\end{definition}

\begin{definition}
    Let $\mathcal{G}$ be a graph class. We denote by $\mathcal{G}_{/\mathcal{P}} := \{G_{/\mathcal{P}} : G \in \mathcal{G}\}$.
\end{definition}

Note that the complexity of deciding whether $(G_{/\mathcal{P}}, c)$ contains an $\text{PFPF}_b$ automorphism is bounded in terms of the graph class $\mathcal{G}_{/\mathcal{P}}$.

\begin{definition}
    Let $\mathcal{G}$ be a graph class. Let $f_{\text{pfpf}}^{\mathcal{G}}$ denote the time complexity of deciding if a graph $G \in \mathcal{G}$ admits an $\text{PFPF}_b$ automorphism, for an arbitrary coloring $c : V(G) \to [|V(G)|]$ and arbitrary boolean function $b : V(G) \to \{0, 1\}$.
\end{definition}

\begin{lemma} \label{lem:fpf_aut_complexity}
 Let $\mathcal{G}$ be an hereditary graph class. The complexity of deciding whether an $n$-vertex graph $G \in \mathcal{G}$ contains an FPF automorphism is bounded above by 
 \[n^{\mathcal{O}(1)} \left(f_{\text{iso}}^{\mathcal{G}}(n) +f_{\text{pfpf}}^{\mathcal{G}_{/\mathcal{P}}}(\text{n})\right).
 \]
\end{lemma}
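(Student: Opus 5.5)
We bound the running time of Algorithm~\ref{alg:fpf_aut} by charging work to the nodes of its recursion tree. By Lemma~\ref{lem:tree_size} this tree has $\mathcal{O}(n)$ nodes, so it suffices to show that the work done at a single node, excluding the recursive calls it spawns, is at most $n^{\mathcal{O}(1)}\bigl(f_{\text{iso}}^{\mathcal{G}}(n)+f_{\text{pfpf}}^{\mathcal{G}_{/\mathcal{P}}}(n)\bigr)$. Summing over the tree then gives the claimed bound, since multiplying by $\mathcal{O}(n)$ only changes the polynomial factor.

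Fix a node, i.e.\ an induced subgraph $H=G[S]$ with $|S|\le n$. Because $\mathcal{G}$ is hereditary, $H\in\mathcal{G}$, and every child $H[M_i]$ is in $\mathcal{G}$ as well. The work at this node has three parts.
\begin{enumerate}
\item Computing the maximal modular partition $\mathcal{P}=\{M_1,\dots,M_k\}$ takes $\mathcal{O}(|V(H)|+|E(H)|)=\mathcal{O}(n^2)$ time by Lemma~\ref{lem:modules_complexity}. The quotient graph $H_{/\mathcal{P}}$ is then built in $\mathcal{O}(n^2)$ time, since adjacency between two modules is decided by checking a single pair of representatives.
\item The coloring $c$ needs, for each of the at most $k^2\le n^2$ pairs $(i,j)$, a test of whether $H[M_i]\cong H[M_j]$. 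Each test costs $f_{\text{iso}}^{\mathcal{G}}(n)$, because both graphs lie in $\mathcal{G}$ and have at most $n$ vertices. This assumes $f_{\text{iso}}^{\mathcal{G}}$ is monotone, which is the standard convention for time bounds.
\item Once the values $b_i$ have been returned by the children, the PFPF$_b$ test on $(H_{/\mathcal{P}},c)$ costs $f_{\text{pfpf}}^{\mathcal{G}_{/\mathcal{P}}}(k)\le f_{\text{pfpf}}^{\mathcal{G}_{/\mathcal{P}}}(n)$, again using monotonicity and $k\le n$.
\end{enumerate}

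The main subtlety is in the third part: the quotient $H_{/\mathcal{P}}$ is a quotient of $H$, not of $G$, so we must check that it lies in $\mathcal{G}_{/\mathcal{P}}$. This follows directly from the definition, since $H\in\mathcal{G}$ and hence $H_{/\mathcal{P}}\in\{G'_{/\mathcal{P}}: G'\in\mathcal{G}\}=\mathcal{G}_{/\mathcal{P}}$. Heredity is therefore used twice: once for the isomorphism tests on children and once for the quotient. A second point is that the recursion must not re-solve identical subproblems repeatedly. It does not, because the recursion tree is exactly the modular decomposition tree, whose size is linear by Lemma~\ref{lem:tree_size}; each node calls the algorithm once per child and the isomorphism tests are non-recursive.

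Combining the three parts, each node costs $\mathcal{O}(n^2)+n^2 f_{\text{iso}}^{\mathcal{G}}(n)+f_{\text{pfpf}}^{\mathcal{G}_{/\mathcal{P}}}(n)$. Multiplying by the $\mathcal{O}(n)$ nodes yields $n^{\mathcal{O}(1)}\bigl(f_{\text{iso}}^{\mathcal{G}}(n)+f_{\text{pfpf}}^{\mathcal{G}_{/\mathcal{P}}}(n)\bigr)$.
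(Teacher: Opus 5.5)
Your proposal is correct and follows the same reasoning the paper relies on; the paper writes no separate proof and depends only on the preceding remarks. Those remarks give $\mathcal{O}(n)$ recursion nodes by Lemma~\ref{lem:tree_size}, plus per node a linear-time modular partition, $|\mathcal{P}|^2$ isomorphism tests inside $\mathcal{G}$ (by heredity), and one \pname{PFPFAut} test on the quotient. Your explicit points on monotonicity of the cost functions and on $H_{/\mathcal{P}} \in \mathcal{G}_{/\mathcal{P}}$ fill in details the paper leaves implicit.
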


\section{Positive results} \label{sec:positive_results}

In this section we show that \pname{FPFAut} can be solved in polynomial time for bounded modular-width graphs, tree-cographs and $P_4$-sparse graphs.

\subsection{Modular-width} \label{subsec:modular_width}
We consider graphs that can be obtained from an algebraic expression that uses the following operations:

\begin{enumerate}[label=(\roman*)]
\item Create an isolated vertex;
\item The disjoint union of $2$ graphs, i.e., the \emph{disjoint union} of $2$ graph $G_1$ and $G_2$ is the graph with vertex set $V(G_1) \cup V(G_2)$ and edge set $E(G_1) \cup E(G_2)$; 
\item The complete join of $2$ graphs, i.e., the \emph{complete join} of $2$ graphs $G_1$ and $G_2$ is the graph with vertex set $V(G_1) \cup V(G_2)$ and edge set $E(G_1) \cup E(G_2) \cup \{ \{v,w\} : v \in V(G_1) \textup{ and }w \in V(G_2) \}$; 
\item The substitution operation with respect to some graph $G$ with vertices $v_1,\dots,v_n$, i.e., for graphs $G_1,\dots,G_n$ the \emph{substitution} of the vertices of $G$ by the graphs $G_1,\dots,G_n$, denoted by $G(G_1,\dots,G_n)$, is the graph with vertex set $\bigcup_{1\leq i \leq n}V(G_i)$ and edge set $\bigcup_{1 \leq i \leq n}E(G_i) \cup \{ \{u,v\} : u \in V(G_i) \textup{ and } v \in V(G_j) \textup{ and } \{v_i, v_j\} \in E(G)\}$. 
\end{enumerate}
Let $A$ be an algebraic expression that uses only the operations
(i)--(iv). We define the \emph{width} of $A$ as the maximum number of
operands used by any occurrence of the operation (iv) in $A$.
The \emph{modular-width} of a graph $G$, denoted
$\text{mw}(G)$, is the least integer $k$ such that $G$ can be obtained from such an algebraic expression of width at most $k$~\cite{gajarsky2013parameterized}. A graph $G$ has bounded modular-width if there exists a fixed constant $k$ such that $\mathrm{mw}(G) \le k$. Cographs are the class of graphs with modular-width $0$~\cite{CLHB1981}.

Let $\mathcal{K}, \overline{\mathcal{K}}$ and $\mathcal{G}_{\leq k}$ denote the class of complete graphs, empty graphs, and graphs with at most $k$ vertices, respectively.
\begin{observation} \label{obs:quotient_mw}
    Let $\mathcal{G}$ be the class of graphs with modular-width at most $k$ and let $\mathcal{G}_{/\mathcal{P}}$ be the associated quotient graph class. Then, $\mathcal{G}_{/\mathcal{P}} \subseteq \mathcal{K} \cup \overline{\mathcal{K}} \cup \mathcal{G}_{\leq k}$. 
\end{observation}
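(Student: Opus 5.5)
The plan is to invoke the standard trichotomy of the modular decomposition. For the maximal modular partition $\mathcal{P}$ of any graph $G$ with at least two vertices, exactly one of three cases occurs:
\begin{enumerate*}[label=(\alph*)]
\item $G$ is disconnected, the parts of $\mathcal{P}$ are its connected components, and $G_{/\mathcal{P}}$ is edgeless;
\item $\overline{G}$ is disconnected, the parts are the co-connected components, and $G_{/\mathcal{P}}$ is complete;
\item both $G$ and $\overline{G}$ are connected, and $G_{/\mathcal{P}}$ is prime, i.e.\ it has only trivial modules.
\end{enumerate*}
The first two cases, which are stated in the paragraph defining maximal modular partitions, place $G_{/\mathcal{P}}$ in $\overline{\mathcal{K}}$ and $\mathcal{K}$ respectively. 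So all the work is in the prime case, where the goal is to show $|V(G_{/\mathcal{P}})|\le k$ whenever $\mathrm{mw}(G)\le k$.

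For the prime case, fix an algebraic expression $A$ of width at most $k$ that produces $G$, and look at the outermost operation. It cannot be (i), because $|V(G)|\ge 2$. It cannot be (ii) or (iii) either, since those would make $G$ or $\overline{G}$ disconnected. Hence it is a substitution $H(G_1,\dots,G_m)$ with $m\le k$. We may assume every $G_j$ is nonempty by deleting empty operands and the corresponding vertices of $H$, which does not increase the width. Each $V(G_j)$ is then a module of $G$, so $\mathcal{Q}=\{V(G_1),\dots,V(G_m)\}$ is a modular partition of $G$.

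The key step is to show that every maximal strong module $M_i\in\mathcal{P}$ is contained in some $V(G_j)$. This gives $|\mathcal{P}|\le|\mathcal{Q}|=m\le k$. I would use the standard fact that, when $G$ and $\overline{G}$ are connected, the maximal strong modules are exactly the maximal proper modules of $G$, and they partition $V(G)$.

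To prove the key step, take a part $V(G_j)$ and let $M_i$ be the maximal strong module meeting it. Suppose $M_i\not\subseteq V(G_j)$. Because $M_i$ is strong, it cannot overlap $V(G_j)$, so $V(G_j)\subsetneq M_i$. In that case $V(G_j)$ lies inside a single part, which is what we want. Either way, every part of $\mathcal{Q}$ is contained in a part of $\mathcal{P}$, so $\mathcal{Q}$ refines $\mathcal{P}$.

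Now suppose some $M_i$ met two parts of $\mathcal{Q}$. The set $\{j : V(G_j)\cap M_i\neq\emptyset\}$ would then be a module of $H$, since it is the image of a module under the quotient map. It is proper, because $M_i\neq V(G)$. So $H$ can be replaced by its own quotient, which leaves a coarser expression, and this case does not obstruct the bound. In fact, refinement alone already suffices: each part of $\mathcal{P}$ is a nonempty union of parts of $\mathcal{Q}$, so $|\mathcal{P}|\le|\mathcal{Q}|\le k$. Therefore $G_{/\mathcal{P}}\in\mathcal{G}_{\le k}$.

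I expect the main obstacle to be the refinement claim, namely that no part $V(G_j)$ is split across two maximal strong modules. The non-overlap argument above handles $M_i$, but I still need to rule out $V(G_j)$ containing $M_i$ as a proper subset while also meeting a second maximal strong module $M_{i'}$. My plan is to use maximality. In that situation $V(G_j)$ is a proper module of $G$ that overlaps neither $M_i$ nor $M_{i'}$, so it strictly contains both. But a proper module strictly containing a maximal proper module cannot exist, and this contradiction settles the claim. Finally, I note that $G_{/\mathcal{P}}$ has at least two vertices because the partition is non-trivial, which is harmless for the statement.
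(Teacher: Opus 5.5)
The paper gives no proof of this observation. It implicitly relies on the known fact from the cited modular-width literature that $\mathrm{mw}(G)$ equals the largest number of children of a prime node in the modular decomposition tree. You instead derive the needed direction directly from an algebraic expression, and once untangled your argument is correct. In the prime case, each operand set $V(G_j)$ of the outermost substitution (with $m\ge 2$) is a proper module of $G$. It therefore lies inside a maximal proper module, which by Gallai's theorem is a part of $\mathcal{P}$. So $\mathcal{Q}$ refines $\mathcal{P}$ and $|\mathcal{P}|\le m\le k$. Your route is self-contained apart from Gallai's theorem, whereas the paper's implicit route is a one-line citation.

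Several points in your write-up need repair.

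\begin{enumerate}
\item The key step as you first state it is backwards. If every $M_i$ lay inside some $V(G_j)$, then $\mathcal{P}$ would refine $\mathcal{Q}$ and you would get $|\mathcal{P}|\ge|\mathcal{Q}|$. That containment is also false in general. Take $H$ to be the chair, which is connected and co-connected and has two twin leaves, with every operand a single vertex. The twin pair is one part of $\mathcal{P}$ but meets two parts of $\mathcal{Q}$. What you actually prove, and what you need, is the reverse containment: each $V(G_j)$ lies inside some $M_i$.

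\item Delete the paragraph about replacing $H$ by its own quotient. It is not used, and its claim that the index set is a module of $H$ is only justified once refinement is already known.

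\item You need $m\ge 2$ for $V(G_j)$ to be a proper module. If the outermost substitution has a single operand, then $G=G_1$, so first strip such trivial substitutions; this does not increase the width.

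\item Your overlap case analysis is redundant. Every proper module is contained in some maximal proper module, and in the prime case these are exactly the parts of $\mathcal{P}$, so the refinement follows in one line.
\end{enumerate}
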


In order to apply Lemma~\ref{lem:fpf_aut_complexity}, we need to bound the functions $f_{\text{pfpf}}^{\mathcal{G}_{/\mathcal{P}}}$ and $f_{\text{iso}}^{\mathcal{G}}$. 

\begin{lemma} \label{lem:pfpf_complete}
    If $G \in \mathcal{K} \cup \overline{\mathcal{K}}$, then \pname{PFPFAut} can be solved in polynomial time.
\end{lemma}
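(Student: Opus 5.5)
If $G$ is complete or empty, every permutation of $V(G)$ is an automorphism of $G$. So $\text{Aut}(G,c)$ is exactly the group of permutations that map each color class $c^{-1}(j)$ onto itself. The plan is to reduce \pname{PFPFAut} to a purely combinatorial test on the color classes, handling each class independently.

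\textbf{Reduction to classes.} A permutation $\phi$ preserving colors decomposes as a product of permutations $\phi_j$ of the classes $C_j := c^{-1}(j)$. Here $\phi$ is PFPF$_b$ if and only if each $\phi_j$ fixes only vertices $v \in C_j$ with $b(v)=1$. Hence $(G,c)$ admits a PFPF$_b$ automorphism if and only if, for every color $j$, there is a permutation of $C_j$ whose fixed points all lie in $B_j := \{v \in C_j : b(v)=1\}$.

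\textbf{The per-class criterion.} I claim such a permutation exists if and only if it is not the case that $|C_j| = 1$ and $b(v)=0$ for the unique $v \in C_j$.
\begin{itemize}
\item If $|C_j|=1$, the only permutation is the identity, so it is admissible exactly when $b(v)=1$.
\item If $|C_j| \geq 2$, the cyclic permutation $v_1 \mapsto v_2 \mapsto \dots \mapsto v_m \mapsto v_1$ of the class has no fixed points at all, so it works regardless of $b$.
\end{itemize}
The resulting test is: answer ``yes'' if and only if no vertex $v$ has a color class of size one together with $b(v)=0$. It can be checked in linear time by counting class sizes, and an explicit automorphism (the product of the cycles, with the identity on singleton classes) can also be output in linear time.

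\textbf{Main obstacle.} The only step that needs care is the justification that $\text{Aut}(G)$ is the full symmetric group when $G \in \mathcal{K} \cup \overline{\mathcal{K}}$: any bijection maps edges to edges and non-edges to non-edges, because all pairs are of the same type. Beyond that, the proof is bookkeeping, and it yields a linear-time algorithm, which is much stronger than polynomial.
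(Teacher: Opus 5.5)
Your proof is correct and follows essentially the same route as the paper. Both arguments observe that a PFPF$_b$ automorphism exists exactly when every singleton color class has $b=1$, and both build one by composing fixed-point-free permutations (your cycles) of the color classes of size at least two. The only difference is cosmetic: you handle $\mathcal{K}$ and $\overline{\mathcal{K}}$ together by noting that every bijection is an automorphism, while the paper treats $K_n$ first and then passes to the complement.
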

\begin{proof}
    Let $G \cong K_n$, let $c : V(G) \to [k]$ be a coloring of the vertices of $G$ and let $b : V(G) \to \{0, 1\}^k$ be a boolean function of the vertices of $G$. Let $S_i = \{v \in V(G) : c(v) = i\}$ be the color class of color $i$. Note that if $G$ has an PFPF$_b$ automorphism, then $b_i = 1$ every time $S_i$ has size $1$. We claim that it is sufficient to check this condition. Let $\phi_i$ be a permutation that fixes every vertex outside of $S_i$ and if $|S_i| \geq 2$, $\phi_i$ is FPF in $S_i$. Let $\phi = \phi_1 \circ \dots \circ \phi_k$. If $b_i = 1$ every time $S_i$ has size $1$, then $\phi$ is an PFPF$_b$ automorphism of $G$.

    The claim holds in case $G \cong \overline{K_n}$, because every automorphism of a graph is also an automorphism of it's complement.
\end{proof}

\begin{lemma}\label{lem:iso_mw}
    Let $G$ and $H$ be graphs with bounded modular-width. The problem of determining whether $G$ is isomorphic to $H$ can be solved in polynomial time.
\end{lemma}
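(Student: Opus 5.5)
We decide $G\cong H$ by recursion on the modular decomposition, comparing canonical labelled structures bottom-up. First compute the modular decomposition trees of $G$ and $H$; by Lemma~\ref{lem:modules_complexity} each level takes linear time, and by Lemma~\ref{lem:tree_size} each tree has $\mathcal{O}(n)$ nodes. The key structural fact is Lemma~\ref{lem:modular_aut}, applied to isomorphisms rather than automorphisms; the same argument goes through verbatim. An isomorphism $G\to H$ maps maximal strong modules of $G$ onto maximal strong modules of $H$. It therefore induces an isomorphism of the quotient graphs that respects the isomorphism types of the modules. Conversely, such a quotient isomorphism, together with isomorphisms between corresponding modules, lifts to an isomorphism $G\to H$, exactly as in the proof of Lemma~\ref{lem:fpf_aut_true_implies_correct}. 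Hence $G\cong H$ if and only if there is a colour-preserving isomorphism between the quotient colored graphs, where colours now encode isomorphism classes of modules taken across both graphs.

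The algorithm processes all nodes of both trees together, bottom-up. It assigns each node an integer label such that two nodes receive the same label if and only if the induced subgraphs are isomorphic. Leaves (single vertices) all receive the same label. For an internal node with children $M_1,\dots,M_k$, the children are already labelled, giving the colouring $c$. By Observation~\ref{obs:quotient_mw}, the quotient is in $\mathcal{K}\cup\overline{\mathcal{K}}\cup\mathcal{G}_{\le k}$.
\begin{itemize}
\item If the quotient is complete or edgeless, its colored isomorphism type is determined by the node type (series or parallel) and the multiset of child labels.
\item Otherwise the quotient has at most $k$ vertices. We compare it with every other already-processed prime node of the same size by trying all $k!$ bijections and checking edges and colours; this is constant time since $k$ is fixed.
\end{itemize}
The node gets the label of an equivalent earlier node if one exists, and a fresh label otherwise. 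At the end, $G\cong H$ exactly when the two roots receive the same label.

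The total work is $\mathcal{O}(n)$ nodes, each compared with $\mathcal{O}(n)$ others. Each comparison costs either $\mathcal{O}(k!\,k^2)$ or a sort of a multiset of size at most $n$. This is polynomial overall.

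The main point to verify is the correctness of the labelling invariant, by induction on depth. Two nodes get the same label if and only if their subgraphs are isomorphic. The forward direction is the lifting argument above. The backward direction uses the isomorphism version of Lemma~\ref{lem:modular_aut} to see that the isomorphism maps children modules to children modules. One must also check that the node types match: an isomorphism preserves connectivity and co-connectivity, so it cannot send a series node to a parallel or prime node. Since the quotient is uniquely determined by the maximal modular partition, the type and the quotient are isomorphism invariants, so the case split above is consistent.
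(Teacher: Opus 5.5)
Your proof is correct, but it takes a genuinely different route from the paper. The paper proves this lemma in two lines with a black box: rank-width is bounded by a computable function of modular-width, and Grohe--Schweitzer give a polynomial-time isomorphism test for graphs of bounded rank-width.

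You instead stay inside the modular-decomposition framework already set up for Algorithm~\ref{alg:fpf_aut}. You label the nodes of both decomposition trees jointly, bottom-up, by isomorphism class of the induced subgraph:
\begin{itemize}
\item series and parallel nodes are classified by node type together with the multiset of child labels;
\item prime nodes have at most $k$ children by Observation~\ref{obs:quotient_mw}, so they can be compared by brute force over all $k!$ bijections of their colored quotients.
\end{itemize}
Correctness follows from the isomorphism analogues of Lemma~\ref{lem:modular_aut} and of the lifting in Lemma~\ref{lem:fpf_aut_true_implies_correct}.

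The paper's route is shorter, but it rests on a deep result and yields only running time $n^{f(k)}$, with the exponent depending on $k$. Your route is elementary and self-contained, and runs in time $k!\cdot n^{\mathcal{O}(1)}$ with an exponent independent of $k$, so it is fixed-parameter tractable in the modular-width. This bears directly on the remark after Theorem~\ref{thm:fpfaut_mw}, which says the isomorphism test is the only step whose polynomial degree depends on the modular-width. Your labelling removes that dependence. Together with brute-forcing \pname{PFPFAut} on prime quotients with at most $k$ vertices, it makes the whole algorithm FPT.

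Your approach has one further practical advantage. A single labelling pass supplies the colourings $c$ at every node of the recursion, instead of requiring $|\mathcal{P}|^2$ separate isomorphism calls per node as in Lemma~\ref{lem:fpf_aut_complexity}.
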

\begin{proof}
    Let $\text{rw}(G)$ be the rank-width of $G$. The rank-width of a graph lower bounds its modular-width, that is, there exists a computable function $g$ such that $\text{rw}(G) \leq g(\text{mw}(G))$~\cite{gajarsky2013parameterized}. Since \pname{GraphISO} can be solved in polynomial time for graphs with bounded rank-width~\cite{grohe2015isomorphism}, the result follows.
\end{proof}

Now we have all that is necessary to apply Lemma~\ref{lem:fpf_aut_complexity}.

\begin{theorem}\label{thm:fpfaut_mw}
    Let $G$ be a graph with bounded modular-width. The problem of determining whether $G$ contains a fixed point free automorphism can be solved in polynomial time.
\end{theorem}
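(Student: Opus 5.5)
The plan is to apply Lemma~\ref{lem:fpf_aut_complexity} with $\mathcal{G}$ the class of graphs of modular-width at most $k$, for the fixed constant $k$. This requires two inputs. First, $\mathcal{G}$ must be hereditary. Second, both $f_{\text{iso}}^{\mathcal{G}}(n)$ and $f_{\text{pfpf}}^{\mathcal{G}_{/\mathcal{P}}}(n)$ must be polynomial in $n$. Once both hold, Lemma~\ref{lem:fpf_aut_complexity} gives a total running time of $n^{\mathcal{O}(1)}$. Correctness of the procedure is already guaranteed by Theorem~\ref{thm:fpf_aut_alg}.

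Heredity is standard: modular-width does not increase under taking induced subgraphs. Given an expression of width at most $k$ for $G$, we delete the vertices outside $S$ from every leaf and from every substitution operand. Each substitution then acts on an induced subgraph of its template graph with fewer operands, so the width stays at most $k$. Degenerate operations with one or zero operands simply collapse. The isomorphism bound $f_{\text{iso}}^{\mathcal{G}}$ is polynomial directly by Lemma~\ref{lem:iso_mw}. This also covers the isomorphism tests between the induced subgraphs $G[M_i]$ needed to build the coloring $c$, since these subgraphs lie in $\mathcal{G}$ by heredity.

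For $f_{\text{pfpf}}^{\mathcal{G}_{/\mathcal{P}}}$, Observation~\ref{obs:quotient_mw} says every quotient graph is complete, empty, or has at most $k$ vertices. If it is complete or empty, Lemma~\ref{lem:pfpf_complete} solves \pname{PFPFAut} in polynomial time. If it has at most $k$ vertices, we enumerate all $k!$ permutations. For each one we check in $\mathcal{O}(k^2)$ time that it preserves adjacency and the coloring $c$, and that it fixes only vertices $v$ with $b(v)=1$. Since $k$ is constant, this takes constant time. Hence $f_{\text{pfpf}}^{\mathcal{G}_{/\mathcal{P}}}$ is polynomial.

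The main subtlety I expect is Observation~\ref{obs:quotient_mw} itself. The modular-width expression need not coincide with the maximal modular partition. The observation holds because a prime node in the modular decomposition must be realized by a substitution of at most $k$ operands. Since the observation is stated earlier, we can take it as given. The remaining step is to combine the three bounds in Lemma~\ref{lem:fpf_aut_complexity}, which proves the theorem.
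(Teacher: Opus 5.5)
Your proposal is correct and is essentially the paper's proof. The paper likewise bounds $f_{\text{pfpf}}^{\mathcal{G}_{/\mathcal{P}}}$ via Observation~\ref{obs:quotient_mw} and Lemma~\ref{lem:pfpf_complete}, bounds $f_{\text{iso}}^{\mathcal{G}}$ via Lemma~\ref{lem:iso_mw}, and concludes with Lemma~\ref{lem:fpf_aut_complexity}; you also make explicit two points the paper leaves implicit, namely that bounded modular-width is hereditary (as Lemma~\ref{lem:fpf_aut_complexity} requires) and that quotients on at most $k$ vertices are handled by brute force over the $k!$ permutations.
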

\begin{proof}
    From Observation~\ref{obs:quotient_mw} and Lemma~\ref{lem:pfpf_complete} the function $f_{\text{pfpf}}^{\mathcal{G}_{/\mathcal{P}}}(|V(G)|)$ is bounded above by $|V(G)|^{\mathcal{O}(1)}$. From Lemma~\ref{lem:iso_mw} the function $f_{\text{iso}}^{\mathcal{G}}(|V(G)|)$ is bounded by $|V(G)|^{\mathcal{O}(1)}$. Therefore, by Lemma~\ref{lem:fpf_aut_complexity}, Algorithm~\ref{alg:fpf_aut} runs in time $|V(G)|^{\mathcal{O}(1)}$.
\end{proof}

A natural question would be whether the degree of the polynomial of the algorithm implied by Theorem~\ref{thm:fpfaut_mw} does not depend on the modular-width. We remark that the only step where this is not the case is the isomorphism test.

 \subsection{Tree-cographs}

Cographs can be defined recursively by starting with a set of isolated vertices and repeatedly applying disjoint union and complementation operations. If we replace the isolated vertices by trees, we obtain the following extension of cographs.

\begin{definition}
    The class of \emph{tree-cographs} is defined by the following recursive rules~\cite{tinhofer1988strong}. 
    \begin{enumerate}[label=(\roman*)]
        \item Every tree is a tree-cograph.
        \item If $G_1, \dots, G_k$ are tree-cographs, then so is $G_1 \cup \dots \cup G_k$.
        \item If $G$ is a tree-cograph, then so is its complement~$\overline{G}$.
    \end{enumerate}
\end{definition}

Tree-cographs generalize both the class of cographs and trees.

Let $\mathcal{T}$ and $\overline{\mathcal{T}}$ denote the class of trees and complement of trees, respectively. 
\begin{observation} \label{obs:quotient_tree_cographs}
    Let $\mathcal{G}$ be the class of tree-cographs and let $\mathcal{G}_{/\mathcal{P}}$ be the associated quotient graph class. Then, $\mathcal{G}_{/\mathcal{P}} \subseteq \mathcal{K} \cup \overline{\mathcal{K}} \cup \mathcal{T} \cup \overline{\mathcal{T}}$. 
\end{observation}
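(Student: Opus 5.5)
The plan is to take $G$ a tree-cograph and $\mathcal{P}$ its maximal modular partition, and split into the three cases that decide the shape of the maximal modular partition: $G$ disconnected, $\overline{G}$ disconnected, or both $G$ and $\overline{G}$ connected.

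\emph{Disconnected cases.} If $G$ is disconnected, the parts of $\mathcal{P}$ are the connected components of $G$. The modules are pairwise non-adjacent, so $G_{/\mathcal{P}}$ is an empty graph and lies in $\overline{\mathcal{K}}$. Symmetrically, if $\overline{G}$ is disconnected, the parts are the co-connected components. These are pairwise adjacent, so $G_{/\mathcal{P}} \in \mathcal{K}$.

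\emph{Connected and co-connected case.} It remains to treat the case where both $G$ and $\overline{G}$ are connected. Here I will use the recursive definition. Every tree-cograph with at least two vertices is a tree, or a disjoint union of tree-cographs (hence disconnected), or the complement of a tree-cograph. Peeling off complementations, a connected and co-connected tree-cograph must be either a tree $T$ or the complement of a tree.

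\begin{itemize}
\item A union of at least two graphs is disconnected, so that option is excluded.
\item Double complements cancel, so the complement option reduces to the others.
\item The complement of a disconnected tree-cograph has a disconnected complement, so that option is excluded as well.
\end{itemize}

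Since complementation preserves modules, it suffices to treat the case where $G=T$ is a tree that is connected and co-connected.

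\emph{The tree case.} I claim that the maximal strong modules of $T$ are all singletons, so that $G_{/\mathcal{P}} \cong T \in \mathcal{T}$. Let $M$ be a nontrivial module with $M \neq V(T)$. Since $T$ is connected, some vertex $z \notin M$ has a neighbour in $M$, and uniformity forces $z$ to be adjacent to all of $M$.

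\begin{itemize}
\item If $M$ contained an edge $xy$, then $z,x,y$ would form a triangle, which is impossible. So $M$ is independent.
\item If $|M|\ge 2$, the vertices of $M$ are pairwise non-adjacent and share a common neighbour $z$. Any other vertex $w \notin M$ adjacent to $M$ is adjacent to two of its vertices $x,y$. Then $z x w y$ is a $4$-cycle unless $w=z$.
\item Hence $N(M)=\{z\}$, and $M$ consists of vertices whose only neighbour is $z$ (their other neighbours would lie outside $M$, hence equal $z$). So $M$ is a set of leaves hanging on $z$.
\end{itemize}

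Such modules do exist, namely sibling leaves, so the claim needs care. This is the main obstacle: I should not argue that no nontrivial module exists, but rather that none of these modules is a maximal strong module.

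\emph{Handling sibling leaves.} The natural candidate for the maximal strong module containing a leaf $x$ is the set $L_z$ of all leaves attached to $z$. I will argue as follows.

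\begin{itemize}
\item If $|L_z| \ge 2$ and $L_z \cup \{z\}$ is not everything, then $L_z$ is a module. However, $L_z \cup \{z\}$ fails to be a module: a vertex adjacent to $z$ but not to $L_z$ splits it.
\item The definition of the maximal modular partition in the paper uses maximal modules. In a prime (connected and co-connected) graph, the maximal strong modules are the maximal proper modules, and these do form a partition.
\end{itemize}

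So the quotient is $T$ with each leaf class $L_z$ contracted to a single vertex. Contracting sibling leaves of a tree into one leaf again yields a tree. Therefore $G_{/\mathcal{P}}$ is a tree (or $\overline{G}_{/\mathcal{P}}$ is, in the complement case), giving membership in $\mathcal{T} \cup \overline{\mathcal{T}}$.

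I expect the leaf-class subtlety, namely that maximal strong modules of a prime tree are exactly the singletons and the sibling-leaf classes, to be the only real step. Everything else follows directly from the recursive definition and the fact that complementation preserves modules.
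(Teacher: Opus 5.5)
Your proof is correct and uses the natural case split that the paper leaves implicit (it states this as an observation without proof): $G$ disconnected gives an empty quotient, $\overline{G}$ disconnected gives a complete quotient, and otherwise $G$ is $T$ or $\overline{T}$; your module lemma correctly shows that the quotient of a connected, co-connected tree is $T$ with each sibling-leaf class $L_z$ ($|L_z|\ge 2$) contracted to one vertex, which is still a tree. Delete the opening claim that all maximal strong modules of $T$ are singletons (so $G_{/\mathcal{P}}\cong T$), and the sentence saying none of the leaf modules is maximal strong, because both contradict your final, correct conclusion; also note that maximality of $L_z$ follows directly from your lemma that every nontrivial proper module lies inside some $L_z$, so the check that $L_z\cup\{z\}$ is not a module is unnecessary.
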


\begin{lemma} \label{lem:pfpf_trees}
    If $G \in \mathcal{T} \cup \overline{\mathcal{T}}$, then \pname{PFPFAut} can be solved in polynomial time.
\end{lemma}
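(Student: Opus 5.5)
First, since $\mathrm{Aut}(G)=\mathrm{Aut}(\overline{G})$ and colours and the function $b$ do not depend on edges, an instance $(G,c,b)$ with $G\in\overline{\mathcal{T}}$ has the same answer as $(\overline{G},c,b)$ with $\overline{G}\in\mathcal{T}$. As in Lemma~\ref{lem:pfpf_complete}, this reduces the problem to a tree $T$ on $n$ vertices with a colouring $c$ and a boolean function $b$.

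\textbf{Step 1: use the centre.} Every automorphism of $T$ fixes the centre of $T$. The centre is either a single vertex $r$ or an edge $\{r_1,r_2\}$.

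In the edge case, subdivide the central edge with a new vertex $r$. Give $r$ a fresh colour and set $b(r)=1$. Colour-preserving automorphisms of the old tree correspond exactly to those of the new tree, and $r$ is always fixed but allowed to be. So in both cases we may work with a tree rooted at a vertex $r$ that every automorphism fixes. Then $\mathrm{Aut}(T,c)$ is the group of colour-preserving automorphisms of the rooted tree.

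\textbf{Step 2: bottom-up dynamic programming.} For each vertex $v$, let $T_v$ be the subtree rooted at $v$. Compute two things:
\begin{enumerate}[label=(\alph*)]
\item a canonical code of the coloured rooted tree $T_v$, using AHU-style encoding with colours included, so that rooted coloured isomorphism of subtrees is decidable in polynomial time;
\item a boolean $\mathrm{ok}(v)$ meaning that $T_v$ has a colour-preserving rooted automorphism that fixes $v$ and is PFPF$_b$ on $V(T_v)\setminus\{v\}$.
\end{enumerate}

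The recurrence for $\mathrm{ok}(v)$ groups the children of $v$ by the code of their subtrees. An automorphism fixing $v$ permutes each group. A child $w$ that is moved lands in an isomorphic subtree, so no vertex of $T_w$ is fixed. A child $w$ that is fixed forces $b(w)=1$ and needs $\mathrm{ok}(w)$.

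\begin{itemize}
\item For a group of size at least $2$, cyclically permute its subtrees via the isomorphisms. This fixes nothing inside them.
\item For a group of size $1$ with child $w$, the automorphism must fix $w$, so we need $b(w)=1$ and $\mathrm{ok}(w)$.
\end{itemize}

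Thus $\mathrm{ok}(v)$ is true exactly when every singleton group $\{w\}$ satisfies $b(w)=1$ and $\mathrm{ok}(w)$. The final answer is $\mathrm{ok}(r)\wedge b(r)=1$. For the subdivided case, $b(r)=1$ by construction.

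\textbf{Step 3: correctness.} The ``if'' direction is the explicit construction above, with the pieces combined because distinct subtrees are disjoint. For the ``only if'' direction, take any PFPF$_b$ automorphism $\phi$ and induct. $\phi$ fixes $r$ and permutes the children of each fixed vertex within isomorphism classes. A singleton class child is therefore fixed, must have $b=1$, and $\phi$ restricted to its subtree witnesses $\mathrm{ok}$.

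\textbf{Main obstacle.} The only delicate points are the reduction of the bicentral case and the claim that a moved subtree contains no fixed points. The latter holds because $T_w$ and $T_{\phi(w)}$ are disjoint subtrees. Canonical coding takes polynomial time, so the whole procedure is polynomial.
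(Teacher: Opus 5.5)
Your proof is correct. On rooted trees it is the paper's argument written out directly. The paper forms, at the root, an edgeless graph on the children, coloured by the isomorphism type of their coloured rooted subtrees, with $b'$ given by the recursive answer, and then appeals to Lemma~\ref{lem:pfpf_complete}. Your condition that every singleton class $\{w\}$ satisfies $b(w)=1$ and $\mathrm{ok}(w)$ is exactly the criterion that lemma gives for edgeless graphs.

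Where you genuinely differ is the bicentral case. The paper splits on $b(r_1)\wedge b(r_2)$. When this is $0$ it asks for a colour-preserving isomorphism between the two halves. When it is $1$ it asks that both halves admit PFPF$_b$ automorphisms. Your subdivision of the central edge by a freshly coloured vertex $r$ with $b(r)=1$ folds both subcases into the rooted recursion.

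It also captures the option of swapping the halves when $b(r_1)=b(r_2)=1$, which the paper's second subcase overlooks. Take the path $a\,r_1\,r_2\,d$ with $c(a)=c(d)\neq c(r_1)=c(r_2)$, $b(r_1)=b(r_2)=1$ and $b(a)=b(d)=0$. The reflection is a PFPF$_b$ automorphism, yet neither half admits one. Your recursion finds it because $r_1$ and $r_2$ form a class of size $2$ below $r$.

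So your route is more uniform, and for arbitrary $c$ and $b$ (as the lemma requires) it is also more accurate. The paper's phrasing has one mild advantage: each per-vertex step is a black-box call to the edgeless/complete case, which the paper reuses when it passes to involutions.
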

\begin{proof}
    We first solve the problem for rooted trees. Let $T_r$ be a tree with root $r$, let $c : V(T_r) \to [n]$ a coloring and $b : V(T_r) \to \{0, 1\}$ a boolean function. Since the tree is rooted, we assume that $b(r) = 1$. We construct an empty graph $G'$ with vertex set $\{c_1, \dots, c_k\}$ corresponding to the children of $r$. We define a new boolean function $b' : V(G') \to \{0, 1\}$ such that $b(c_i) = 1$ if and only if $(T_{c_i}, c)$ admits an $\text{PFPF}_b$ automorphism. Finally, we construct a new coloring $c': V(G') \to [k]$ such that $c'(c_i) = c'(c_j)$ if and only if $(T_{c_i}, c) \cong (T_{c_j}, c)$. From construction it follows that $(T_r, c)$ has an $\text{PFPF}_b$ automorphism if and only if $(G', c')$ has an $\text{PFPF}_{b'}$ automorphism. Therefore, we have reduced the problem to solving \pname{PFPFAut} for the class of empty graphs, which we know from~\ref{lem:pfpf_complete} that can be solved in polynomial time.
    
    Now we return to non-rooted trees. Let $T$ be a tree, $c$ a coloring and $b$ a boolean function. If $T$ has a unique center $r_1$ then $(T, c)$ has an $\text{PFPF}_b$ automorphism if and only if $(T_{r_1}, c)$ has an $\text{PFPF}_b$ automorphism. If $T$ has two centers $r_1$ and $r_2$ but $b(r_1) \wedge b(r_2) = 1$, then $(T, c)$ has an $\text{PFPF}_b$ automorphism if and only if both $(T_{r_1}, c)$ and  $(T_{r_1}, c)$ have $\text{PFPF}_b$ automorphisms. Otherwise, $T$ has two centers and $b(r_1) \wedge b(r_2) = 0$, but in this case $(T, c)$ has an $\text{PFPF}_b$ automorphism if and only if there exists a color preserving isomorphism between $(T_{r_1}, c)$ and $(T_{r_2}, c)$, which can be computed in polynomial time. 
\end{proof}

Tree-cographs are uniquely determined by a labeled tree similar to the recursion tree of Algorithm~\ref{alg:fpf_aut}. The isomorphism problem can be solved in polynomial time for this graph class. 

\begin{lemma}[\cite{tinhofer1988strong}]\label{lem:iso_tree_cographs}
    Let $G$ and $H$ be tree-cographs. The problem of determining whether $G$ is isomorphic to $H$ can be solved in polynomial time.
\end{lemma}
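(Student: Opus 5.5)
The statement is attributed to Tinhofer, but I would prove it through the same modular-decomposition recursion the paper already uses. The aim is to compute, for every tree-cograph, a canonical form: a string or labelled rooted tree such that two tree-cographs are isomorphic if and only if their canonical forms are equal. I would define this code by recursion on the modular decomposition.

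First, the maximal modular partition $\mathcal{P}=\{M_1,\dots,M_k\}$ can be found in linear time by Lemma~\ref{lem:modules_complexity}. By Observation~\ref{obs:quotient_tree_cographs}, the quotient $G_{/\mathcal{P}}$ is complete, empty, a tree, or the complement of a tree. One point needs care here: when $G$ itself is a tree (or the complement of a tree) on at least three vertices, its modules are only the trivial ones plus twin leaves. So I would first show that every node of the modular tree of a tree-cograph is one of three kinds:
\begin{itemize}
\item a series or parallel node;
\item a prime node whose quotient is a tree or a co-tree;
\item a node with modules consisting of sibling leaves, which are absorbed as parallel nodes.
\end{itemize}
This follows from the recursive definition, because the disjoint union and complementation operations create exactly parallel and series nodes.

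Next I would define $\mathrm{code}(G)$. If $|V(G)|=1$, it is a fixed symbol. For a series or parallel node, it is the node type followed by the lexicographically sorted multiset of the children's codes. For a prime node, first compute the children's codes recursively. Then colour each quotient vertex $v_i$ by $\mathrm{code}(G[M_i])$. Finally, output the node type together with the canonical form of the resulting vertex-coloured tree, taking complements first in the co-tree case. Coloured trees have polynomial-time canonical forms via the AHU algorithm: root the tree at its centre (if there are two centres, root at the central edge and take the smaller of the two orientations), and encode each subtree by its colour plus the sorted codes of its children. This is the same centre-based reasoning used in Lemma~\ref{lem:pfpf_trees}.

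Correctness goes by induction on $|V(G)|$. By Lemma~\ref{lem:modular_aut}, applied to isomorphisms exactly as it is applied to automorphisms, any isomorphism $G\to H$ maps maximal strong modules onto maximal strong modules. It therefore induces a colour-preserving isomorphism of the quotients, and by induction it preserves the children's codes. Conversely, equal codes give a colour-preserving isomorphism of the quotients. Combining it with the isomorphisms between matched modules, which exist by the inductive hypothesis, yields an isomorphism $G\cong H$, exactly as in the proof of Lemma~\ref{lem:fpf_aut_true_implies_correct}.

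For the running time, the modular tree has $\mathcal{O}(n)$ nodes by Lemma~\ref{lem:tree_size}. Codes can be kept polynomial in size by replacing each distinct code with an integer identifier, assigned level by level. Each node then costs polynomial time.

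The main obstacle is the structural claim that every prime quotient is genuinely a tree or co-tree, and that the recursion never meets a prime quotient of another kind. In other words, Observation~\ref{obs:quotient_tree_cographs} has to be verified in the form needed, in particular when leaf twins inside a tree form nontrivial modules. If that verification turns out to be messy, I would fall back on Tinhofer's direct decomposition tree. That tree's leaves are trees and its internal nodes are union or complement operations, and I would canonize it in the same way with sorted child codes and AHU codes at the leaves.
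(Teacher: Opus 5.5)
Your proposal is correct and follows essentially the route the paper points to. The paper gives no proof of its own beyond citing Tinhofer and remarking that tree-cographs are determined by a unique labelled decomposition tree akin to the recursion tree of Algorithm~\ref{alg:fpf_aut}, and your canonical-code recursion (sorted child codes at series/parallel nodes, AHU canonization of the coloured tree or co-tree quotients at prime nodes) is a self-contained realisation of that idea.

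The structural point you flag as the main obstacle does hold, and it is easy to verify. First, a connected and co-connected tree-cograph must be a tree or a co-tree: inspect the last operation of its derivation. Second, in a tree every module $M$ with $2 \le |M| < |V|$ is a set of leaves with a common neighbour. Indeed, an outside neighbour of $M$ is adjacent to all of $M$, so an edge inside $M$ would create a triangle, and a second outside neighbour would create a $C_4$.
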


\begin{theorem}\label{thm:fpfaut_tree_cograph}
    Let $G$ be a tree-cograph. The problem of determining $G$ contains a fixed point free automorphism can be solved in polynomial time.
\end{theorem}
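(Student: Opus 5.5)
The plan is to mirror the proof of Theorem~\ref{thm:fpfaut_mw} and apply Lemma~\ref{lem:fpf_aut_complexity}. This means bounding two quantities: $f_{\text{iso}}^{\mathcal{G}}$ for the class $\mathcal{G}$ of tree-cographs, and $f_{\text{pfpf}}^{\mathcal{G}_{/\mathcal{P}}}$ for the associated quotient class.

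First I will check that Lemma~\ref{lem:fpf_aut_complexity} applies. It requires $\mathcal{G}$ to be hereditary, or at least that every module $G[M_i]$ met in the recursion of Algorithm~\ref{alg:fpf_aut} is again a tree-cograph. The cleanest route is to use the recursive definition. If $G$ is disconnected, its maximal strong modules are its connected components. By induction on the defining expression, each component is a tree-cograph: a component of a disjoint union is a component of some operand, and a connected tree-cograph is either a tree or the complement of a disconnected tree-cograph. The co-disconnected case is symmetric under complementation. If $G$ and $\overline{G}$ are both connected, the definition forces $G$ to be a tree (or the complement of a tree). Its maximal strong modules are then single vertices, except possibly for sets of twin leaves, and these induce independent sets, which are tree-cographs. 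Hence every recursive call stays inside the class. The isomorphism tests between modules then cost $f_{\text{iso}}^{\mathcal{G}}(n) = n^{\mathcal{O}(1)}$ by Lemma~\ref{lem:iso_tree_cographs}.

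Second, Observation~\ref{obs:quotient_tree_cographs} gives $\mathcal{G}_{/\mathcal{P}} \subseteq \mathcal{K} \cup \overline{\mathcal{K}} \cup \mathcal{T} \cup \overline{\mathcal{T}}$. Lemma~\ref{lem:pfpf_complete} handles $\mathcal{K} \cup \overline{\mathcal{K}}$, and Lemma~\ref{lem:pfpf_trees} handles $\mathcal{T} \cup \overline{\mathcal{T}}$. For $\overline{\mathcal{T}}$, I use the fact that an automorphism of a graph is an automorphism of its complement, and that the coloring and $b$ transfer unchanged. Together these give $f_{\text{pfpf}}^{\mathcal{G}_{/\mathcal{P}}}(n) = n^{\mathcal{O}(1)}$. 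Lemma~\ref{lem:tree_size} bounds the number of recursive calls by $\mathcal{O}(n)$, so Lemma~\ref{lem:fpf_aut_complexity} yields an $n^{\mathcal{O}(1)}$ running time.

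The main obstacle is the first step, making sure the recursion never leaves the class. This amounts to the structural claim behind Observation~\ref{obs:quotient_tree_cographs}, in particular the prime case. A tree $T$ with twin leaves has a nontrivial maximal modular partition: twin leaves attached to a common vertex form a module. The quotient is then a smaller tree, and the modules are independent sets. I would verify this by noting that in a tree the only nontrivial modules are sets of leaves sharing a neighbour, together with degenerate cases such as stars, whose complements are disconnected. If that verification is awkward, a fallback is to handle trees directly: the quotient is a tree and the modules are edgeless, so \pname{PFPFAut} on the quotient with the induced $b$ and $c$ is again covered by Lemma~\ref{lem:pfpf_trees}.
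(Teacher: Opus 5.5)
Your proposal is correct and follows the paper's route. As with Theorem~\ref{thm:fpfaut_mw}, the result comes from Lemma~\ref{lem:fpf_aut_complexity}, with $f_{\text{iso}}^{\mathcal{G}}$ bounded by Lemma~\ref{lem:iso_tree_cographs} and $f_{\text{pfpf}}^{\mathcal{G}_{/\mathcal{P}}}$ bounded via Observation~\ref{obs:quotient_tree_cographs} and Lemmas~\ref{lem:pfpf_complete} and~\ref{lem:pfpf_trees}. Your module-by-module check that the recursion stays in the class can be replaced by noting that tree-cographs are hereditary (induced subgraphs of trees are forests, and taking induced subgraphs commutes with disjoint union and complementation); your side claim that a connected tree-cograph is a tree or the complement of a disconnected tree-cograph is false as stated (e.g.\ $\overline{P_5}$), but this does not affect the conclusion.
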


\subsection{$P_4$-sparse graphs} \label{subsec:p4sparse}

The class of $P_4$-sparse graphs is a generalization of cographs, in which every set of five vertices induces at most one $P_4$. $P_4$-sparse graphs can be represented in terms of the modular decomposition, for which we will need the following definition from~\cite{jamison1992tree}.

A \emph{spider} is a graph $G = (V, E)$ whose vertex set can be partitioned as $V = S \cup K \cup R$ such that
\begin{itemize}
    \item $|S| = |K| \geq 2$, with $G[S]$ being an empty graph and $G[K]$ being a complete graph;
    \item every $w \in R$ is adjacent to all vertices in $K$ and to none in $S$;
    \item there exists a bijection $f: S \to K$ such that for all $s \in S$, $G(s) \cap K = \{f(s)\}$ or $G(s) \cap K = K \setminus \{f(s)\}$.
\end{itemize}
This vertex decomposition is unique for a given spider. 
\begin{figure}[h]
    \centering
    \begin{tikzpicture}[every node/.style={circle, draw, fill=black, inner sep=1.5pt}, scale=0.7]

\def\rot{22.5}

\foreach \i in {1,...,8} {
  \node (b\i) at ({cos(\rot + 360/8*(\i-1))}, {sin(\rot + 360/8*(\i-1))}) {};
}

\foreach \i in {1,...,8} {
  \path (b\i) -- ++({cos(\rot + 360/8*(\i-1))}, {sin(\rot + 360/8*(\i-1))}) coordinate (tmp);
  \node (l\i) at ($(b\i)!0.8!(tmp)$) {};
  \draw (b\i) -- (l\i);
}

\foreach \i in {1,...,8} {
  \foreach \j in {1,...,8} {
    \ifnum\i<\j
      \draw (b\i) -- (b\j);
    \fi
  }
}

\node (h) at (-1.5, 0) {};

\foreach \i in {1,...,8} {
  \draw (h) -- (b\i);
}

\end{tikzpicture}
    \caption{Cyclops spider with $|K|=|S|=8$.}
    \label{fig:spider}
\end{figure}
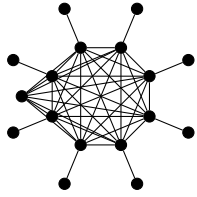

\begin{proposition}[\cite{jamison1992tree}]
    A $P_4$-sparse graph can be characterized recursively using the following four rules:
    \begin{enumerate}[label=(\roman*)]
        \item A graph on a single vertex is $P_4$-sparse.
        \item If $G_1, \dots, G_k$ are $P_4$-sparse, then so is $G_1 \cup \dots \cup G_k$.
        \item If $G$ is $P_4$-sparse, then so is its complement $\overline{G}$.
        \item If $H$ is $P_4$-sparse, and $G$ is a spider with $G[R] = H$, then $G$ is $P_4$-sparse.
    \end{enumerate}
\end{proposition}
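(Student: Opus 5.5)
The plan is to prove the two inclusions separately: every graph produced by rules (i)--(iv) is $P_4$-sparse, and every $P_4$-sparse graph can be produced by these rules. Throughout I would use that $P_4$-sparseness is hereditary and closed under complementation, since $P_4$ is self-complementary and induced subgraphs of $\overline{G}$ are complements of induced subgraphs of $G$.

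\emph{Soundness.} Rule (i) is trivial. For rule (ii), an induced $P_4$ is connected, so it lies inside one component. Hence any five vertices spread over several components contain at most one $P_4$, because one of the components already does. Rule (iii) follows from self-complementarity. For rule (iv), let $G$ be a spider with $G[R]=H$ $P_4$-sparse. A short case check shows that an induced $P_4$ of $G$ is of one of two kinds:
\begin{itemize}
\item it lies entirely in $R$, because $R$ is a module adjacent to all of $K$ and to none of $S$;
\item it uses exactly two vertices of $S$ and their two partners $f(s),f(s')$ in $K$.
\end{itemize}
Given five vertices, I would count the $P_4$'s of each kind and check that two of them never coexist among the same five vertices. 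This is a finite verification, done separately for the ``thin'' adjacency ($G(s)\cap K=\{f(s)\}$) and the ``thick'' adjacency ($G(s)\cap K=K\setminus\{f(s)\}$).

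\emph{Completeness.} I would argue by induction on $|V(G)|$. If $G$ is disconnected, apply rule (ii) to its components. If $\overline{G}$ is disconnected, apply rule (iii) together with rule (ii) to the complement. The heart of the proof is the structural claim that a $P_4$-sparse graph $G$ with $|V(G)|\ge 2$ in which both $G$ and $\overline{G}$ are connected is a spider. Once that is shown, induction applies to $H=G[R]$, and rule (iv) finishes. If $R=\emptyset$, I would either allow $H$ to be the null graph or build the spider directly; this is a small bookkeeping point.

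\emph{Key step: such a graph is a spider.} Since $G$ and $\overline{G}$ are both connected, $G$ is not a cograph, so it contains an induced $P_4$, say $abcd$. For any vertex $x$ outside it, the set $\{a,b,c,d,x\}$ must contain no second $P_4$. Enumerating the $16$ possible neighbourhoods of $x$ in $\{a,b,c,d\}$ should leave only $\emptyset$ and $\{b,c\}$, together with a few neighbourhoods in which $x$ behaves like a copy of an endpoint or a midpoint. From there I would grow the structure in three steps:
\begin{itemize}
\item Let $S$ be the set of vertices that are endpoints of some induced $P_4$, and $K$ the set of vertices that are midpoints of some induced $P_4$.
\item Using the local constraint repeatedly, together with connectivity of $G$ and $\overline{G}$, show that $S$ is independent and $K$ is a clique, and define the bijection $f$.
\item Show that $R=V\setminus(S\cup K)$ is uniformly complete to $K$ and anticomplete to $S$.
\end{itemize}

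I expect this key step to be the main obstacle. The delicate points are proving that the thin/thick type is the same for all of $S$, and ruling out vertices of $R$ with mixed adjacencies; connectivity of $G$ and $\overline{G}$ would be invoked exactly there. Since the proposition is due to Jamison and Olariu~\cite{jamison1992tree}, I would ultimately defer the finer case analysis to their paper.
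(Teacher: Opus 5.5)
The paper gives no argument for this proposition; it only cites Jamison and Olariu. So deferring the structural theorem is in line with the paper. Your soundness half is correct, and so is your reduction of completeness to ``connected and co-connected $\Rightarrow$ spider''.

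On $R=\emptyset$, only your first option works. $P_4$ is a spider with $R=\emptyset$, and it cannot be produced by (i)--(iii) or by (iv) with a nonempty $H$. So the statement needs $H$ to be allowed to be the null graph.

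The sketch you give of the key step, however, would fail if carried out, at two concrete points.

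\textbf{First, the local lemma is misstated.} For an induced $P_4$ $abcd$ and $x$ outside it, the admissible traces $N(x)\cap\{a,b,c,d\}$ are exactly $\emptyset$, $\{b,c\}$ and $\{a,b,c,d\}$.
\begin{itemize}
\item The ``copy'' neighbourhoods are precisely the excluded ones. If $x$ sees $a$ and $c$ but not $d$, then $axcd$ is a second induced $P_4$ on the same five vertices. If $x$ sees $b$ but not $c,d$, then $xbcd$ is.
\item The universal trace, which you dropped, genuinely occurs. In a thick spider with $|K|\ge 3$, every vertex of $K\setminus\{f(s),f(s')\}$ is complete to the $P_4$ $s\,f(s')\,f(s)\,s'$.
\end{itemize}

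\textbf{Second, $S$ and $K$ cannot be the endpoints and midpoints of all induced $P_4$'s}, because $G[R]$ may contain $P_4$'s of its own. Take the thin spider with $S=\{s_1,s_2\}$, $K=\{k_1,k_2\}$, and $G[R]$ the disjoint union of a path $r_1r_2r_3r_4$ and an isolated vertex $w$. This graph is $P_4$-sparse, connected and co-connected. Your recipe gives $S=\{s_1,s_2,r_1,r_4\}$, $K=\{k_1,k_2,r_2,r_3\}$ and $R=\{w\}$. Then:
\begin{itemize}
\item $s_1$ has one neighbour in $K$ while $r_1$ has three, so no common thin/thick type exists;
\item $w$ misses $r_2\in K$.
\end{itemize}
So both ``delicate points'' you planned to prove are false for your $S$ and $K$. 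The top-level spider has to be separated from $R$ by a different device.

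\textbf{A route that works.}
\begin{enumerate}
\item Pass to the quotient $Q$ of $G$ by its maximal strong modules. $Q$ is prime, and it is $P_4$-sparse because it is an induced subgraph of $G$.
\item In $Q$, grow a maximal headless spider $(S,K)$ from a $P_4$, with $k_i=f(s_i)$.
\item By the corrected lemma, applied to all the $P_4$'s on $\{s_i,s_j,k_i,k_j\}$, every other vertex is complete to $S\cup K$, anticomplete to it, or in a class $R$ that is complete to $K$ and anticomplete to $S$.
\item There is no edge $xy$ with $x$ in the anticomplete class and $y\in R$. In the thin case (which includes $|K|=2$), $(S+x,K+y)$ would be a larger headless spider. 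In the thick case with $|K|\ge 3$, $x\,y\,k_1\,s_2$ is an induced $P_4$ on which $k_3$ has the forbidden trace $\{y,k_1,s_2\}$.
\item By complementation, the complete class has no non-edges to $R$. Hence $S\cup K\cup R$ is a module, and primeness forces it to be all of $V(Q)$ with $|R|\le 1$.
\item Finally, each module of $G$ sitting at a vertex of $S\cup K$ is a singleton. A second vertex in it would create a second $P_4$ sharing three vertices with one through that position. The module at the head, if any, becomes $R$.
\end{enumerate}
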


In the case where $|R| = 1$, we say that $G$ is a \emph{cyclops spider}. We denote by $\mathcal{S_C}$ be the class of cyclops spiders.
\begin{observation} \label{obs:quotient_p4_sparse}
    Let $\mathcal{G}$ be the class of $P_4$-sparse graphs and let $\mathcal{G}_{/\mathcal{P}}$ be the associated quotient graph class. Then, $\mathcal{G}_{/\mathcal{P}} \subseteq \mathcal{K} \cup \overline{\mathcal{K}} \cup \mathcal{S_C}$.
\end{observation}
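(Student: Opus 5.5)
The plan is to show that the quotient graph $G_{/\mathcal{P}}$ of a $P_4$-sparse graph $G$ with maximal modular partition $\mathcal{P}$ is complete, empty, or a cyclops spider. I would argue by cases on connectivity, using the recursive characterization of $P_4$-sparse graphs from the Proposition together with the fact that $P_4$-sparse graphs form a hereditary class.

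If $G$ is disconnected, the maximal strong modules are its connected components. No edges join distinct components, so $G_{/\mathcal{P}}$ is an empty graph and lies in $\overline{\mathcal{K}}$. If $\overline{G}$ is disconnected, the maximal strong modules are the co-connected components. Every pair of these is fully adjacent, so $G_{/\mathcal{P}} \in \mathcal{K}$. It remains to treat the case where both $G$ and $\overline{G}$ are connected and $|V(G)|\ge 2$.

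In that case $G$ cannot arise from rules (ii) or (iii) at the top level. If it came from rule (ii) with at least two parts, $G$ would be disconnected. If it came from rule (iii), we would have $G=\overline{G'}$ where $G'$ itself came from rule (ii), so $\overline{G}$ would be disconnected. Peeling off double complements, we conclude that $G$ must be obtained from rule (iv), so $G$ is a spider $(S,K,R)$ with $G[R]$ being $P_4$-sparse. I then claim that $\mathcal{P}=\{\{s\}: s\in S\}\cup\{\{k\}:k\in K\}\cup\{R\}$ when $R\neq\emptyset$, and $\mathcal{P}$ consists of singletons when $R=\emptyset$.

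To justify the claim, first note that $R$ is a module. Every vertex of $K$ is adjacent to all of $R$ and every vertex of $S$ is adjacent to none of $R$. Next I would show that no module $M \neq V$ contains two vertices of $S\cup K$, or a vertex of $S\cup K$ together with a vertex outside $R$ in a way that is not $V$. The tool is splitting vertices, using the bijection $f$ and $|S|=|K|\ge 2$. For $s\ne s'$ in $S$, the vertex $f(s)$ splits $\{s,s'\}$ in the thin case, and $f(s')$ does so in the thick case. For $k\ne k'$ in $K$, the vertex $f^{-1}(k)$ splits them. Mixed pairs $\{s,k\}$, and pairs consisting of a vertex of $R$ with a vertex of $S\cup K$, are split similarly. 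Closure of modules then forces any module containing such a pair to grow to $V$.

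This casework is the step I expect to be the main obstacle. It requires care in both the thin and thick cases, and in the small case $|S|=2$, where thin and thick coincide. Once it is done, the module $R$ is strong and maximal, and the quotient is obtained from $G$ by contracting $R$ to a single vertex adjacent exactly to $K$. This yields a cyclops spider, or a spider with $R=\emptyset$. To fit the statement's class $\mathcal{S_C}$, I would treat the $R=\emptyset$ spider as included, or note that it can be handled identically. Either way, $\mathcal{G}_{/\mathcal{P}}\subseteq \mathcal{K}\cup\overline{\mathcal{K}}\cup\mathcal{S_C}$.
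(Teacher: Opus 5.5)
The paper states this observation without proof. Your case analysis is the natural one, but carried through it shows the observation is \emph{false} as written, so the final step cannot be waved through. When $R=\emptyset$, your argument (once the closure step below is in place) shows that the headless spider $G$ is prime, so $G_{/\mathcal{P}}=G$. This graph is neither complete nor empty. It is also not a cyclops spider, since those have $|R|=1$ and hence an odd number $2|S|+1\ge 5$ of vertices. The smallest instance is $G=P_4$ itself (thin, $|S|=|K|=2$, $R=\emptyset$), which is $P_4$-sparse and is its own quotient; the net (a triangle with one pendant vertex at each corner) is another. So ``treat the $R=\emptyset$ spider as included'' changes the statement rather than proving it. What your argument establishes is $\mathcal{G}_{/\mathcal{P}}\subseteq\mathcal{K}\cup\overline{\mathcal{K}}\cup\mathcal{S_C}\cup\mathcal{S}_0$, with $\mathcal{S}_0$ the headless spiders, and you should say so explicitly. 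The correction is harmless downstream: the reduction in Lemma~\ref{lem:pfpf_spider} (pass to $G[K]$ with $c'(v)=(c(v),c(f^{-1}(v)))$ and $b'(v)=b(v)\wedge b(f^{-1}(v))$) never uses the head, so Theorem~\ref{thm:fpfaut_p4_sparse} is unaffected.

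Two further steps need repair. First, ``peeling off double complements'' is not correct. If the last rule is (iii), then $G=\overline{G'}$ with $G'$ connected and co-connected, so $G'$ may come from rule (iv) rather than (ii). You need the fact that the complement of a spider $(S,K,R)$ is the spider $(K,S,R)$ with thin and thick exchanged and $\overline{G[R]}$ again $P_4$-sparse, followed by an induction on the derivation. Alternatively, cite the Jamison--Olariu theorem that a $P_4$-sparse graph on at least two vertices is disconnected, co-disconnected, or a spider.

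Second, exhibiting a splitter of a pair only shows that the pair is not a module. The real content is the closure step, not the case $|S|=2$. In the thin case it goes as follows:
\begin{itemize}
\item Suppose a module $M$ contains distinct $k,k'\in K$. Then $f^{-1}(k)\in M$, because it splits $\{k,k'\}$.
\item Each $k''\in K\setminus M$ is adjacent to $k$ but not to $f^{-1}(k)$, so $K\subseteq M$.
\item Each $s\in S\setminus M$ sees $f(s)$ but no other vertex of $K$, so $S\subseteq M$.
\item Each $r\in R\setminus M$ sees $K$ but not $S$, so $M=V$.
\end{itemize}
Every other pair you list forces two vertices of $K$ into $M$, through its splitter and at most two further applications of the same reasoning. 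The thick case follows by complementation, since $G$ and $\overline{G}$ have the same modules. Hence every module other than $V$ is a singleton or lies inside $R$. This yields your claimed maximal modular partition and shows that $R$ is strong and maximal.
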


\begin{lemma}  \label{lem:pfpf_spider}
If $G \in\mathcal{S_C}$, then \pname{PFPFAut} can be solved in polynomial time.
\end{lemma}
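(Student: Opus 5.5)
The plan is to determine the structure of $\text{Aut}(G)$ for a cyclops spider $G$ with partition $V = S \cup K \cup R$, $R=\{h\}$, $|S|=|K|=m\ge 2$, and then reduce \pname{PFPFAut} to a check on color classes, as in Lemma~\ref{lem:pfpf_complete}.

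First I will show that every automorphism $\phi$ preserves each of $S$, $K$ and $R$ setwise, and commutes with the bijection $f$, that is, $\phi(f(s)) = f(\phi(s))$. Since the spider decomposition is unique, any automorphism maps it to itself; but I would rather argue through degrees. In the thin case ($G(s)\cap K=\{f(s)\}$), vertices of $S$ have degree $1$, vertices of $K$ have degree $m$, and $h$ has degree $m$. Here $h$ is the unique vertex of $K \cup R$ with no neighbour of degree $1$; the exception is $m=2$ with very small pieces, which I would check by hand. In the thick case, pass to the complement. The complement of a thick spider is a thin spider with the roles of $S$ and $K$ exchanged, and $h$ is isolated from $K$ and joined to $S$, so it is a cyclops-like structure again. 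Alternatively, directly: $h$ is the unique vertex adjacent to all of $K$ and none of $S$. Once $h$ is fixed and $S$, $K$ are invariant, the matching $s \leftrightarrow f(s)$ (or its complement) is intrinsic, so $\phi$ is determined by a permutation $\sigma$ of $S$, acting as $f\sigma f^{-1}$ on $K$. Conversely, every permutation $\sigma$ of $S$ extends to an automorphism this way. So $\text{Aut}(G) \cong \text{Sym}(m)$ acting on pairs $\{s, f(s)\}$, with $h$ always fixed.

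With this description, a color-preserving automorphism corresponds to a permutation $\sigma$ of $S$ that preserves the pair-coloring $s \mapsto (c(s), c(f(s)))$. The requirements are then as follows. We need $b(h)=1$; otherwise the answer is no. The pairs are grouped into classes by their pair-color. A pair can be moved off itself exactly when its class has size at least $2$, and then both of its vertices are moved. If $\sigma(s)=s$, then both $s$ and $f(s)$ are fixed.

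The algorithm therefore checks $b(h)=1$, and for every pair-color class of size $1$ it checks $b(s)=b(f(s))=1$. If these checks pass, we take $\sigma$ to be a cyclic shift on each class of size at least $2$, which gives a PFPF$_b$ automorphism. Necessity follows because a singleton class forces $\sigma(s)=s$. All of this is clearly polynomial.

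The main obstacle is the rigidity step: proving that automorphisms fix $h$ and respect the $S/K$ split in all cases. This includes the thick spiders and tiny cases such as $m=2$, where a thin spider may coincide with a thick one or exhibit extra symmetry. I would handle $m=2$ by direct enumeration, since the graph has only five vertices, and use the degree and neighbourhood argument for $m\ge 3$.
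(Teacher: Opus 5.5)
Your proposal is correct and is essentially the paper's argument. The paper also notes that $b(h)=1$ is necessary and that automorphisms preserve $f$. It then reduces to the complete graph $G[K]$ with coloring $c'(v)=(c(v),c(f^{-1}(v)))$ and $b'(v)=b(v)\wedge b(f^{-1}(v))$, and the check from Lemma~\ref{lem:pfpf_complete} on this instance is exactly your singleton pair-class test.

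One small slip, which actually simplifies your rigidity step: in the thin case the vertices of $K$ have degree $m+1$, not $m$. In the thick case the degrees on $S$, $K$, $h$ are $m-1$, $2m-1$, $m$. So in both cases $S$, $K$ and $\{h\}$ are already separated by degree for every $m\ge 2$, and you need no separate enumeration for $m=2$.
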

\begin{proof}
    Let $G = (S\cup K \cup R, E)$ be a cyclops spider with bijection $f : S \to K$. Let $c : V(G) \to [n]$ a coloring and $b : V(G) \to \{0, 1\}$ a boolean function. We reduce the problem to solving \pname{PFPFAut} for the class of complete graphs, which we know from~\ref{lem:pfpf_complete} that can be solved in polynomial time. Let $R = \{h\}$ be the head of the spider, first, note that it is necessary that $b(h) = 1$ for an $\text{PFPF}_b$ automorphism to exits, so we assume that this is the case. We construct a new graph $G' \cong G[K]$, a new coloring $c' : K \to [n] \times [n]$ such that $c'(v) := (c(v), c(f^{-1}(v)))$ and a new boolean function $b' : K \to \{0, 1\}$ such that $b'(v) := b(v) \wedge b(f^{-1}(v))$. Note that every automorphism $\phi \in \text{Aut}(G, c)$ preserves the bijection $f$, that is, if $v \in K$ then $\phi(v) = w \iff \phi(f^{-1}(v)) = f^{-1}(w)$. Therefore, $(G, c)$ contains an $\text{PFPF}_b$ automorphism if and only if $(G', c')$ contains an $\text{PFPF}_{b'}$ automorphism.
\end{proof}

\begin{lemma} \label{lem:iso_p4_sparse}
    Let $G$ and $H$ be $P_4$-sparse graphs. The problem of determining whether $G$ is isomorphic to $H$ can be solved in polynomial time.
\end{lemma}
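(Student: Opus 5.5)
We have to decide isomorphism of two $P_4$-sparse graphs in polynomial time, and we follow the same modular-decomposition recursion as Algorithm~\ref{alg:fpf_aut}. The plan is to compute a canonical form recursively and compare canonical forms. Let $\mathcal{P}$ be the maximal modular partition of $G$, computed in linear time by Lemma~\ref{lem:modules_complexity}. By Observation~\ref{obs:quotient_p4_sparse}, the quotient $G_{/\mathcal{P}}$ is a complete graph, an empty graph, or a cyclops spider. We recursively compute canonical forms for every $G[M_i]$, since the class is hereditary, and assign each module a color given by the rank of its canonical string in a sorted list. This yields the colored quotient $(G_{/\mathcal{P}},c)$. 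Two graphs are isomorphic if and only if their colored quotients are color-isomorphic: by Lemma~\ref{lem:modular_aut}, any isomorphism maps maximal strong modules to maximal strong modules and induces such a map, and conversely one glues together module isomorphisms, as in the proof of Lemma~\ref{lem:fpf_aut_true_implies_correct}.

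It therefore suffices to give a canonical form for colored graphs in $\mathcal{K}\cup\overline{\mathcal{K}}\cup\mathcal{S_C}$.
\begin{itemize}
\item For complete or empty quotients, the canonical form is the type together with the sorted multiset of colors.
\item For a cyclops spider, the partition $S\cup K\cup R$ and the bijection $f$ are unique, as stated above, and can be found in polynomial time, for instance by degrees: $S$ consists of the vertices of degree $1$ or degree $|K|-1$, and the head is the unique vertex outside $S$ adjacent to all of $K$ but to nothing in $S$, with care in the small cases. Every isomorphism preserves $f$, exactly as used in Lemma~\ref{lem:pfpf_spider}. Hence the canonical form is the thin/thick flag, the color of the head, and the sorted multiset of pairs $(c(v),c(f^{-1}(v)))$ for $v\in K$.
\end{itemize}
Two colored spiders of the same type are color-isomorphic if and only if these data agree, because any bijection of $K$ preserving the pairs extends through $f$.

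For the running time, Lemma~\ref{lem:tree_size} shows the recursion tree has $\mathcal{O}(n)$ nodes. Each node does polynomial work: sorting and comparing strings of polynomial length, provided we encode canonical strings as nested sorted lists of child labels, or as integers via a global relabeling per level as in the AHU tree-isomorphism algorithm. So the total time is polynomial.

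The main obstacle is the spider step. We must argue that the decomposition $(S,K,R)$ is identifiable and preserved under isomorphism even at small sizes, for example $|K|=2$, where the thin and thick spiders coincide and $P_4$ appears. We must also check that the quotient's head is a single module vertex, so that its color is simply carried along. We would handle $|K|=2$ separately by brute force, since the quotient then has only $5$ vertices.
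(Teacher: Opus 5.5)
Your proof works after two small repairs, listed below. It takes a different route from the paper, which states this lemma with no argument at all.

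\textbf{Comparison with the likely intended justification.} The justification closest to the paper's own framework would mirror Lemma~\ref{lem:iso_mw}. $P_4$-sparse graphs have bounded clique-width (at most $4$), hence bounded rank-width, so the Grohe--Schweitzer algorithm applies. Alternatively one could invoke uniqueness of the Jamison--Olariu tree representation.

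Your route is a self-contained canonical labelling along the modular decomposition. The spider node is encoded by the thin/thick flag, the head colour, and the sorted multiset of pairs $(c(v),c(f^{-1}(v)))$. This is the same encoding the paper uses in Lemma~\ref{lem:pfpf_spider}. What it buys:
\begin{itemize}
\item It is elementary and gives a low-degree polynomial bound, rather than one inherited from an XP algorithm in rank-width.
\item It produces the colouring $c$ of Algorithm~\ref{alg:fpf_aut} in one bottom-up pass, instead of $|\mathcal{P}|^2$ pairwise isomorphism tests.
\end{itemize}
The rank-width route buys only brevity and uniformity with Section~\ref{subsec:modular_width}.

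\textbf{First repair: headless spiders.} Observation~\ref{obs:quotient_p4_sparse}, on which you rely, misses spiders with $R=\emptyset$. Such a spider is prime, its maximal strong modules are singletons, and its quotient is the headless spider itself. Examples are $P_4$, and the net (a triangle with a pendant vertex at each corner, a thin spider with $|K|=3$). So your spider case must allow $|R|\in\{0,1\}$, and the canonical form must record when there is no head. Nothing else changes: finding $S$ by degrees, preservation of $f$ for $|K|\ge 3$, and extending a pair-preserving bijection of $K$ through $f$ all go through as before.

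\textbf{Second repair: globally consistent colours.} The colours must be consistent across $G$ and $H$. If a module's colour is its rank among the canonical strings of a single quotient, the encoding fails. For example, $2K_1$ and $2K_2$ would both receive the form ``empty, $\{1,1\}$''. Use the children's canonical strings themselves as colours, or indices in a list shared by both inputs as in AHU. Your running-time paragraph hints at this, but the construction paragraph should state it explicitly.

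Finally, Lemma~\ref{lem:modular_aut} is stated for automorphisms; its proof carries over verbatim to isomorphisms $G\to H$, which is what you actually need.
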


\begin{theorem}\label{thm:fpfaut_p4_sparse}
    Let $G$ be a $P_4$-sparse graph. The problem of determining $G$ contains a fixed point free automorphism can be solved in polynomial time.
\end{theorem}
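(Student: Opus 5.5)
The plan mirrors the proof of Theorem~\ref{thm:fpfaut_mw}: apply Lemma~\ref{lem:fpf_aut_complexity} to the hereditary class $\mathcal{G}$ of $P_4$-sparse graphs. Two things must be checked. First, $P_4$-sparse graphs form a hereditary class. This is immediate from the definition, since any five vertices of an induced subgraph are also five vertices of $G$, so they still induce at most one $P_4$. Second, the two functions $f_{\text{iso}}^{\mathcal{G}}$ and $f_{\text{pfpf}}^{\mathcal{G}_{/\mathcal{P}}}$ must both be polynomially bounded.

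For the isomorphism term, I will invoke Lemma~\ref{lem:iso_p4_sparse}, which gives $f_{\text{iso}}^{\mathcal{G}}(n) = n^{\mathcal{O}(1)}$. This term is used to build the colorings $c$ of the quotient graphs at every node of the recursion tree. By Lemma~\ref{lem:tree_size} there are $\mathcal{O}(n)$ such nodes, and each needs at most $n^2$ isomorphism tests between induced subgraphs, which are again $P_4$-sparse.

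For the PFPF term, I will use Observation~\ref{obs:quotient_p4_sparse}: every quotient graph $G_{/\mathcal{P}}$ arising from a $P_4$-sparse graph is complete, empty, or a cyclops spider. The algorithm recognizes which case it is in polynomial time, since it only needs to check for a clique, an independent set, or the $S\cup K\cup R$ structure with $|R|=1$. It then dispatches to Lemma~\ref{lem:pfpf_complete} in the first two cases and to Lemma~\ref{lem:pfpf_spider} in the third. Hence $f_{\text{pfpf}}^{\mathcal{G}_{/\mathcal{P}}}(n) = n^{\mathcal{O}(1)}$, and Lemma~\ref{lem:fpf_aut_complexity} yields a total running time of $n^{\mathcal{O}(1)}$. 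The maximal modular partitions themselves come from Lemma~\ref{lem:modules_complexity}.

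The step needing the most care is Observation~\ref{obs:quotient_p4_sparse}. When $G$ is a spider with $|R|\ge 2$, the module $R$ collapses to a single vertex in the quotient, giving a cyclops spider. One must check that each vertex of $S\cup K$ is its own maximal strong module. For a thin or thick spider with $|K|\ge 2$, no two vertices of $S\cup K$ are twins, so this holds. If Observation~\ref{obs:quotient_p4_sparse} is taken as given, this step needs no further argument, and the remaining work is routine bookkeeping of the polynomial bounds.
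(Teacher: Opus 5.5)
Your proposal is the paper's argument. It uses the same template as Theorem~\ref{thm:fpfaut_mw}: Lemma~\ref{lem:iso_p4_sparse} bounds the isomorphism term, and Observation~\ref{obs:quotient_p4_sparse} together with Lemmas~\ref{lem:pfpf_complete} and~\ref{lem:pfpf_spider} bounds the \pname{PFPFAut} term in Lemma~\ref{lem:fpf_aut_complexity}.

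There is one caveat in the step you flagged, and you inherit it from the Observation itself. A spider may have $R=\emptyset$. For instance, $G=P_4$ is $P_4$-sparse and prime, so its quotient is $P_4$ itself, which is not a cyclops spider. Your dispatch step must therefore also accept headless spiders. This is harmless, because the proof of Lemma~\ref{lem:pfpf_spider} works verbatim without the head: automorphisms still preserve $S$, $K$ and the bijection $f$.

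Also, the absence of twins in $S\cup K$ only rules out modules of size two. A larger module can contain no twins, e.g.\ a $P_4$ joined to one extra vertex. What you actually need, and what holds, is this: any module containing a vertex of $S\cup K$ and one further vertex is forced by the adjacency pattern of $f$ to be all of $V$. Hence the maximal strong modules of a spider are $R$ (when nonempty) and the singletons of $S\cup K$.
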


\section{Fixed point free involutions} \label{sec:involution}

An automorphism $\phi$ of a graph is called an \emph{involution} if it has order two. The problem of finding involutions without fixed points (\pname{FPFInv}) is also \NPh~\cite{Lubiw1981}. Actually, in the graph constructed by Lubiw to show that \pname{FPFAut} is \NPh, every FPF automorphism is an FPF involution, which implies that the following problem is \NPh.

\probl{FPFInv}{A graph $G$.}{Does $G$ admit an involution without fixed points?}
In this section, we extend the results from FPF automorphisms to FPF involutions.

\subsection{NP-Hard classes and fixed edge free involutions}

\begin{theorem}\label{thm:fpfinv_main_negative_result}
    \pname{FPFInv} is \NPh\, for split, bipartite, $k$-subdivided, and $H$-free graphs, if $H$ is not an induced subgraph of $P_4$.
\end{theorem}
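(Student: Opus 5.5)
The plan is to rerun the reductions of Section~\ref{sec:Negative_Results}, but starting from \pname{FPFInv}, which is \NPh\ by Lubiw's result. In each construction we have to check two things: an FPF involution of $G$ lifts to an FPF involution of $G'$, and an FPF involution of $G'$ restricts to an FPF involution of $G$. The restriction direction is the easy one. In every construction, the proofs of Theorems~\ref{thm:split}, \ref{thm:bipartite} and~\ref{thm:subdvided} already show that every automorphism of $G'$ (respectively of $S_k(G')$) fixes $V$ (respectively $V(G')$) setwise and restricts to an automorphism. A restriction of an involution has order dividing two. Being FPF, it is nontrivial, so it is an FPF involution.

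For the lifting direction, let $\phi$ be an FPF involution of $G$. In Construction~\ref{con:split} and Construction~\ref{con:bipartite} we define $\phi'(v)=\phi(v)$ on $V$ and $\phi'(\{u,v\}_i)=\{\phi(u),\phi(v)\}_{3-i}$ on the edge vertices. Then $\phi'^2(\{u,v\}_i)=\{\phi^2(u),\phi^2(v)\}_i=\{u,v\}_i$, so $\phi'$ is an involution. It is FPF because the index always changes. This settles split and bipartite graphs.

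For $k$-subdivided graphs, we apply the lifting to get an FPF involution $\phi'$ of the bipartite $G'$. We then extend it to $S_k(G')$ exactly as in Theorem~\ref{thm:subdvided}: the path vertex at distance $d$ from $u$ on the path towards $\{u,v\}_i$ goes to the corresponding vertex on the image path. This extension is again an involution, since it is determined by $\phi'$ on endpoints. It is FPF provided $\phi'$ fixes no edge of $G'$. This holds because every edge of $G'$ joins $V$ to $E\times\{1,2\}$, and $\phi'$ preserves both sides and has no fixed vertex. The converse follows as above, since restricting to $V(G')$ and then to $V$ preserves being an involution.

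Finally, for $H$-free graphs the argument of Theorem~\ref{thm:h-free} carries over verbatim. Complementation preserves involutions, so we may assume $H$ contains a cycle, and $H$-free graphs then contain the $k$-subdivided graphs with $k=|V(H)|$. I expect the only delicate step to be the order-two verification for the edge-vertex swap. Choosing $i\mapsto 3-i$ rather than an arbitrary $j\neq i$ makes $\phi'^2$ the identity, so this step is harmless once stated carefully.
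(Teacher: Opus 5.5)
Your proposal is correct and follows the paper's proof. Both start from Lubiw's hardness of \pname{FPFInv}, observe that restricting to $V$ (or to $V(G')$) sends FPF involutions to FPF involutions, and check that the lifted map, with the index swap on edge-vertices carried along the subdivision paths, squares to the identity. The paper writes this out only for $S_k(G')$ and says the other constructions are similar, so your explicit split and bipartite cases simply add detail. The step you flag as delicate is automatic: $j\neq i$ with $i,j\in\{1,2\}$ already forces $j=3-i$.
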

\begin{proof}
Let $G = (V, E)$ be a graph and let $S_k(G')$ be the $k$-subdivided graph constructed from $G$ as described in Theorem \ref{thm:subdvided}. We show that $S_k(G')$ has an FPF involution if and only if $G$ also has an FPF involution. The argument follows similarly for the other constructions.

Let $\phi_k \in \text{Aut}(S_k(G'))$, recall that $V \subseteq V(S_k(G'))$ and that $\phi_{k|V}$ induces an automorphism $\phi \in \text{Aut}(G)$, so if $\phi_k$ is an FPF involution, then $G$ contains an FPF involution. 

Now, let $\phi \in \text{Aut}(G)$ be an FPF involution, and let $\phi_k$ be constructed from $\phi$ just as described in Theorem \ref{thm:subdvided}. Since $\phi$ is isomorphic to $\phi_{k|V}$, we consider the action of $\phi_k$ at a vertex $(u, v)_{i, d}$ with $u, v \in V$, $i \in \{1, 2\}$ and $d \in [k]$. By construction, $\phi_k((u, v)_{i, d}) = (u', v')_{j, d}$ with $\phi(u) = u'$, $\phi(v) = v'$ and $j \neq i$. Since $\phi$ is an involution, $\phi_k$ is also an involution.
\end{proof}

We present an application for the constructions used in Section~\ref{sec:Negative_Results}. In the graph avoidance game, two players alternately color edges of a graph $G$ in red and blue, respectively. The player who first creates a monochromatic subgraph isomorphic to a forbidden graph $F$ loses. If $G$ admits a fixed \emph{edge} free involution $\phi$, then the second player can win by always coloring the edge $\{\phi(u), \phi(v)\}$ in blue after the first player color the edge $\{u, v\}$ in red, this is called the \emph{automorphism-based strategy}. Harary et al.~\cite{harary2002symmetric} prove that the problem of determining whether a graph admits such involution is \NPc, implying that the automorphism-based strategy for the graph avoidance game is \NPc\ to realize. The graph from Construction~\ref{con:bipartite} admits an FPF involution if and only if it has a fixed edge free involution. This provides an alternative proof of Harary's result.

\begin{corollary}[\cite{harary2002symmetric}, Theorem 5.1]\label{cor:harary}
    Deciding if a graph admits a fixed edge free involution is \NPc. 
\end{corollary}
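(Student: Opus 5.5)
Membership in \NP{} is immediate: a certificate is an involution $\phi$ of $V(G)$, and one checks in polynomial time that it preserves adjacency, satisfies $\phi\circ\phi=\mathrm{id}$, and that no edge $\{u,v\}$ satisfies $\{\phi(u),\phi(v)\}=\{u,v\}$. For hardness I would reduce \pname{FPFInv} on connected graphs with $|V|\ge 3$ to the fixed edge free involution problem, using Construction~\ref{con:bipartite} as the map $G\mapsto G'$. Restricting to connected graphs with at least three vertices is harmless. Lubiw's instances can be taken connected (or complemented when disconnected, since complementation preserves automorphisms). Small graphs can be handled directly.

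The first step is to show that $G$ has an FPF involution iff $G'$ has an FPF involution. This is already established in the proof of Theorem~\ref{thm:fpfinv_main_negative_result} ("the argument follows similarly for the other constructions"). Concretely, if $\phi$ is an FPF involution of $G$, define $\phi'(v)=\phi(v)$ on $V$ and $\phi'(\{u,v\}_i)=\{\phi(u),\phi(v)\}_{3-i}$. Then $\phi'$ is FPF, and it is an involution because applying it twice returns the edge to itself with the index swapped back. Conversely, the proof of Theorem~\ref{thm:bipartite} shows that every automorphism of $G'$ preserves $V$, so restricting an FPF involution of $G'$ to $V$ gives an FPF involution of $G$.

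The second step, which I expect to be the only real content, is to show that $G'$ has an FPF involution iff it has a fixed edge free involution. Suppose $\psi$ is a fixed edge free involution of $G'$. Every automorphism of $G'$ fixes $V$ setwise, so it also fixes the other colour class $E\times\{1,2\}$. Each vertex of $G'$ lies on some edge, since every vertex of $V$ has degree at least $2$ by connectivity and every subdivision vertex has degree $2$. If $\psi(x)=x$, pick a neighbour $y$ of $x$; then $\psi(y)$ is a neighbour of $x$. Because $\psi$ preserves the bipartition, the edge $\{x,y\}$ is not fixed only if $\psi(y)\ne y$. That alone gives no contradiction, so I would instead argue that a fixed edge free involution is in fact fixed point free. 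A fixed edge $\{x,y\}$ under an automorphism means either both endpoints are fixed or they are swapped. Swapping is impossible because the parts are preserved. So a fixed edge is exactly an edge with both endpoints fixed.

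Hence I must show directly that an FPF involution of $G'$ exists whenever a fixed edge free one does. I would do this by passing through $G$. The restriction $\phi=\psi|_V$ is an involution of $G$. If $\phi(u)=u$, then $u$ has a neighbour $w$ in $G$ (by connectivity), and $\psi$ permutes the two vertices $\{u,w\}_1,\{u,w\}_2$, or maps them to $\{u,\phi(w)\}_j$. In either case their images are adjacent to $u$, so the edges from $u$ to them are fixed iff those vertices are fixed. I would then modify $\psi$ on the pairs $\{u,w\}_{1,2}$ by swapping indices, so that no subdivision vertex is fixed. This may not rescue fixed points of $\phi$ in $V$, and this is where I expect the main obstacle. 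The fallback is to use the forward direction, as in the paper's remark: an FPF involution of $G'$ is automatically fixed edge free, because no edge can have both endpoints fixed. So the reduction is from \pname{FPFInv} on $G$ to fixed edge free involution on $G'$. For the converse I would aim to show that a fixed edge free involution of $G'$ yields an FPF involution of $G$. If $\phi(u)=u$ for $u\in V$, then every subdivision neighbour of $u$ must be moved. A careful analysis of edges $uw$ with $\phi(w)=w$ versus $\phi(w)\neq w$ would be needed, and I would try modifying the construction (for example, using a single copy of each edge plus pendant gadgets) if the equivalence fails.
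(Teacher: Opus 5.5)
Your \NP{} membership argument, your first step ($G$ has an FPF involution iff $G'$ does), and the forward half of your second step are all fine. The converse you flag as the main obstacle, however, is false, so the reduction $G \mapsto G'$ via Construction~\ref{con:bipartite} cannot work on its own. Take any $G$ with at least one edge. Let $\psi$ fix every vertex of $V$ and swap $\{u,w\}_1$ with $\{u,w\}_2$ for every $\{u,w\} \in E$. These two vertices are twins, since both have neighbourhood $\{u,w\}$, so $\psi$ is an involutive automorphism of $G'$. Every edge of $G'$ joins some $u \in V$ to some $\{u,w\}_i$, and $\psi$ sends it to the different edge joining $u$ to $\{u,w\}_{3-i}$. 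Hence $G'$ always has a fixed edge free involution, and every instance becomes a yes-instance. For example, $G = P_3$ has no FPF automorphism, because its centre is fixed, yet $G'$ has such an involution. In terms of your own analysis: a fixed edge of $G'$ is one with both endpoints fixed, and nothing in $G'$ forces a fixed vertex of $V$ to have a fixed neighbour. Its subdivision neighbours can always be moved by swapping twins. The paper's one-line justification asserts exactly this equivalence (``the graph from Construction~\ref{con:bipartite} admits an FPF involution if and only if it has a fixed edge free involution''), so it has the same defect and you cannot rely on it.

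The repair is the one you suggest at the end: add a gadget that turns a fixed vertex of $V$ into a fixed edge. Let $H$ be $G'$ with a new pendant vertex $\ell_v$ attached to each $v \in V$.

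Since $G$ is connected with $|V| \geq 3$, the degrees in $H$ separate three types:
\begin{itemize}
\item each $v \in V$ has degree $2\deg_G(v)+1 \geq 3$;
\item each subdivision vertex has degree $2$;
\item each pendant vertex has degree $1$.
\end{itemize}
So every automorphism of $H$ preserves these types. Its restriction to $V$ is an automorphism of $G$, because two vertices of $V$ are adjacent in $G$ iff they have a common neighbour in $H$. Every edge of $H$ has exactly one endpoint in $V$. Under a type-preserving automorphism, such an edge is fixed iff both of its endpoints are fixed.

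For the forward direction, let $\phi$ be an FPF involution of $G$. The map $v \mapsto \phi(v)$, $\{u,w\}_i \mapsto \{\phi(u),\phi(w)\}_i$, $\ell_v \mapsto \ell_{\phi(v)}$ is an involution of $H$. It fixes no edge, since the $V$-endpoint of every edge moves.

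For the converse, let $\psi$ be a fixed edge free involution of $H$, and suppose $\psi(v)=v$ for some $v \in V$. Then $\psi(\ell_v)=\ell_v$, since $\ell_v$ is the unique pendant neighbour of $v$. So the edge between $v$ and $\ell_v$ is fixed, a contradiction. Hence $\psi|_V$ is an FPF involution of $G$.

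Combined with \NP{} membership and the \NPh ness of \pname{FPFInv} on connected graphs with at least three vertices, this proves the corollary.
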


\subsection{Algorithm for \pname{FPFInv}}

Algorithm~\ref{alg:fpf_aut} can be easily adapted to compute FPF involutions. The main change is the replacement of the \pname{PFPFAut} problem by the \pname{PFPFInv} problem.

\probl{PFPFInv}{A graph $G$, a coloring $c$, and a boolean function $b$.}{Does $(G, c)$ admit an PFPF$_b$ involution?}

\begin{algorithm2e}[ht]
      \SetAlgoLined
      \SetKwFunction{FhasFPFInv}{hasFPFInvolution}
      \SetKwProg{Fn}{}{}{}
      \Input{A graph $G$.}
      \Output{Does $G$ admit an FPF involution?}

      \Fn{\FhasFPFInv{$G$}}{
        \uIf {$|V(G)| = 1$} {
            \Return false\\
        }
        let $\mathcal{P} = \{M_1, \dots M_k\}$ be the maximal modular partition of $G$.\\
        let $(G_{/\mathcal{P}}, c)$ be a quotient colored graph associated with $\mathcal{P}$.\\
        let $b \in \{0, 1\}^k$ with $b_i = 1$ if and only if \FhasFPFInv{$G[M_i]$}=true.\\
        \uIf{($G_{/\mathcal{P}}$, $c$) has an PFPF$_b$ involution}{
            \Return true\\
        }
        \Return false\\
      }
      \caption{\FuncSty{hasFPFInvolution}}
      \label{alg:fpf_inv}
\end{algorithm2e}

We follow the same analysis described in Section~\ref{sec:efficient}.

\begin{theorem}\label{thm:fpf_inv_alg}
    Let $G$ be a graph. Algorithm~\ref{alg:fpf_inv} returns ``true'' if and only if $G$ contains a fixed point free involution.
\end{theorem}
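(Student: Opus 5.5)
We follow the structure of the proof of Theorem~\ref{thm:fpf_aut_alg} and argue both directions by induction on $|V(G)|$. The base case $|V(G)|=1$ is immediate: a single vertex admits no FPF involution, and the algorithm returns ``false''.

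For soundness, suppose \texttt{hasFPFInvolution}$(G)$ returns ``true''. Then there is a color-preserving involution $\psi$ of $G_{/\mathcal{P}}$ that fixes only indices $i$ with $b_i=1$. By induction, each such fixed module $G[M_i]$ carries an FPF involution $\phi_i$. The orbits of $\psi$ on $V(G_{/\mathcal{P}})$ are fixed points and $2$-cycles $\{i,j\}$, with $j=\psi(i)$ and $G[M_i]\cong G[M_j]$. The extension must be chosen with care, since the generic extension $\phi'$ from Lemma~\ref{lem:fpf_aut_true_implies_correct} need not have order two. For each $2$-cycle $\{i,j\}$ we fix one isomorphism $\sigma_{ij}: G[M_i]\to G[M_j]$. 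We then define $\phi$ to act as $\sigma_{ij}$ on $M_i$, as $\sigma_{ij}^{-1}$ on $M_j$, and as $\phi_i$ on each fixed module $M_i$.

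It remains to check that $\phi$ is an FPF involution of $G$.
\begin{itemize}
\item $\phi$ is an automorphism. Inside each module it is an isomorphism. Between distinct modules, adjacency is governed by the quotient, which $\psi$ preserves.
\item $\phi^2=\mathrm{id}$. On a swapped pair this holds because $\sigma_{ij}^{-1}\sigma_{ij}=\mathrm{id}$. On a fixed module it holds because $\phi_i^2=\mathrm{id}$.
\item $\phi$ has no fixed points. Vertices in swapped modules change module. Vertices in fixed modules are moved by $\phi_i$.
\end{itemize}
This choice of a paired isomorphism and its inverse is the one genuinely new ingredient compared with the automorphism case.

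For completeness, suppose $\phi$ is an FPF involution of $G$. By Lemma~\ref{lem:modular_aut}, $\phi$ permutes the maximal strong modules, so it induces a map $\psi$ on $G_{/\mathcal{P}}$. This $\psi$ is an automorphism that preserves $c$, since $\phi$ restricts to isomorphisms $G[M_i]\to G[M_{\psi(i)}]$. Because $\phi^2=\mathrm{id}$, we get $\psi^2=\mathrm{id}$, so $\psi$ is an involution or the identity. The identity is still an involution in the sense the algorithm needs, namely $\psi^2=\mathrm{id}$, and we read ``involution'' this way throughout, as the paper implicitly does. If $\psi(i)=i$, then $\phi|_{M_i}$ is an automorphism of $G[M_i]$ with square the identity and no fixed points. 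Since $|M_i|\ge 1$ and $\phi$ moves every vertex, this restriction is a genuine FPF involution, so by induction $b_i=1$. Hence $\psi$ is a PFPF$_b$ involution and the algorithm returns ``true''.

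The main obstacle is the soundness direction, where the construction must genuinely produce an element of order two. This is handled by the paired isomorphisms $\sigma_{ij}$ and $\sigma_{ij}^{-1}$, together with the convention that ``involution'' means $\psi^2=\mathrm{id}$, so that a trivial quotient map is allowed when every module is fixed.
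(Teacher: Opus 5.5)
Your proof is correct and follows the same two-direction induction as the paper. Your explicit choice of $\sigma_{ij}$ on $M_i$ and $\sigma_{ij}^{-1}$ on $M_j$ for each swapped pair, together with allowing $\psi=\mathrm{id}$ on the quotient, makes precise two points the paper glosses over: it asserts without justification that the extension $\phi'$ from Lemma~\ref{lem:fpf_aut_true_implies_correct} has order at most two, and it leaves the identity convention implicit.
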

\begin{proof}
    The proof is similar to that of Theorem~\ref{thm:fpf_aut_alg}. Let $\mathcal{P} = \{M_1, \dots M_k\}, (G_{/\mathcal{P}}, c)$ and $b$ be defined as in Algorithm~\ref{alg:fpf_inv}.
    
    Assume Algorithm~\ref{alg:fpf_inv} returns ``true''. Let $\phi_i \in \text{Aut}(G[M_i])$ and let $\psi \in \text{Aut}(G_{/\mathcal{P}}, c)$. The automorphism $\phi'$ constructed in Lemma~\ref{lem:fpf_aut_true_implies_correct} from the $\phi_i$'s and from $\psi$ is equal to the least common multiple of the order of the composed automorphisms, as their non-trivial orbits are disjoint. Since each automorphism has order at most 2, $\phi'$ is an involution.
        
    Let $\phi$ be an FPF involution of $G$. Then, the induced automorphisms $\psi \in \text{Aut}(G_{\mathcal{P}}, c)$ and $\phi_i \in \text{Aut}(G[M_i])$ must also be involutions, which implies that Algorithm~\ref{alg:fpf_inv} returns ``true''. 
\end{proof}

\begin{definition}
    Let $\mathcal{G}$ be a graph class. Let $f_{\text{pfpfi}}^{\mathcal{G}}$ denote the complexity of deciding if a graph $G \in \mathcal{G}$ admits an $\text{PFPF}_b$ involution, for an arbitrary coloring $c : V(G) \to [|V(G)|]$ and arbitrary boolean function $b : V(G) \to \{0, 1\}$.
\end{definition}

\begin{corollary} \label{cor:fpf_inv_complexity}
 Let $\mathcal{G}$ be an hereditary graph class. The complexity of deciding if an $n$-vertex graph $G \in \mathcal{G}$ contains an FPF involution is upper bounded by 
 \[n^{\mathcal{O}(1)} \cdot \left(f_{\text{iso}}^{\mathcal{G}}(n) +f_{\text{pfpfi}}^{\mathcal{G}_{/\mathcal{P}}}(\text{n})\right).\]
\end{corollary}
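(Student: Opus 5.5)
The corollary is the involution analogue of Lemma~\ref{lem:fpf_aut_complexity}, and I would prove it by running the same cost analysis on Algorithm~\ref{alg:fpf_inv}. Correctness is already given by Theorem~\ref{thm:fpf_inv_alg}, so only the running time needs to be bounded. The algorithm has the same recursive structure as Algorithm~\ref{alg:fpf_aut}. Its recursion tree is therefore the modular decomposition tree, which has $\mathcal{O}(n)$ nodes by Lemma~\ref{lem:tree_size}.

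The plan is to bound the work at each node separately and then sum over the nodes. At a node corresponding to an induced subgraph $G[M]$ with $m \le n$ vertices, the algorithm does three things.
\begin{enumerate*}[label=(\alph*)]
\item It computes the maximal modular partition $\mathcal{P}$, which takes $\mathcal{O}(n+|E|)=n^{\mathcal{O}(1)}$ time by Lemma~\ref{lem:modules_complexity}.
\item It builds the quotient colored graph $(G[M]_{/\mathcal{P}},c)$. The quotient edges come from one adjacency test per pair of modules. The coloring needs at most $|\mathcal{P}|^2 \le n^2$ isomorphism tests between graphs $G[M_i]$, $G[M_j]$. These lie in $\mathcal{G}$ because the class is hereditary, so each test costs at most $f_{\text{iso}}^{\mathcal{G}}(n)$, assuming $f_{\text{iso}}$ is monotone in the input size.
\item It decides whether the colored quotient has a PFPF$_b$ involution. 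The quotient lies in $\mathcal{G}_{/\mathcal{P}}$ since $G[M]\in\mathcal{G}$, and it has at most $n$ vertices, so this costs at most $f_{\text{pfpfi}}^{\mathcal{G}_{/\mathcal{P}}}(n)$.
\end{enumerate*}

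The values $b_i$ come from the recursive calls on the children. To avoid double counting, I charge each recursive call to the child node itself rather than to its parent. The cost per node is then $n^{\mathcal{O}(1)}\bigl(f_{\text{iso}}^{\mathcal{G}}(n)+f_{\text{pfpfi}}^{\mathcal{G}_{/\mathcal{P}}}(n)\bigr)$. Multiplying by the $\mathcal{O}(n)$ nodes gives the stated bound.

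The only delicate point is the hypothesis that $\mathcal{G}$ is hereditary, together with the monotonicity of the cost functions. Heredity guarantees that every subgraph $G[M_i]$ met in the recursion, and every quotient built from it, still lies in the relevant class, so the functions $f$ apply at every node. Monotonicity lets us evaluate every $f$ at $n$. Everything else repeats the argument for Lemma~\ref{lem:fpf_aut_complexity} verbatim, with PFPF$_b$ automorphism replaced by PFPF$_b$ involution.
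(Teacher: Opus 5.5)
Your proposal is correct and follows the argument the paper intends. The paper gives no separate proof and simply reruns the cost analysis behind Lemma~\ref{lem:fpf_aut_complexity} on Algorithm~\ref{alg:fpf_inv}: an $\mathcal{O}(n)$-node recursion tree by Lemma~\ref{lem:tree_size}, at most $|\mathcal{P}|^2$ isomorphism tests per node justified by heredity, and one PFPF$_b$ involution test on the quotient, with correctness taken from Theorem~\ref{thm:fpf_inv_alg}; you state explicitly the monotonicity of the cost functions, which the paper leaves implicit.
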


\subsection{Equitable partitions}

Let $G = (V, E)$ be a graph, and let $\mathcal{P} = \{V_1, V_2, \dots, V_m\}$, be a partition of $V$. We refer to the subsets $V_i$ as \emph{cells}. A partition is called \emph{equitable} (or \emph{regular}) if for all $i,j$, each vertex in $V_i$ has the same number of neighbors in $V_j$. We call a partition \emph{$k$-homogeneous} if every cell has size $k$. Equitable partitions are a versatile tool that have been used in many different fields of combinatorics, for example for deriving sharp eigenvalue bounds on the independence number~\cite{hoffman},  constructing self-orthogonal codes~\cite{codes} and clustering in various types of networks~\cite{clustering, epidemics}. The following lemma by Abiad et al.\ establishes a one-to-one correspondence between 2-homogeneous equitable partitions and FPF involutions.

\begin{lemma}[{\cite{abiad2022characterizing}, Lemma 17}]
    \label{lem:automorphismIFFpartitionRegular}
    Let~$G$ be a graph on $n$ vertices.
    Then $G$ has an automorphism being an involution without fixed points if and only if~$G$ admits an equitable partition
    with~$\frac{n}{2}$ cells each having size~2.
\end{lemma}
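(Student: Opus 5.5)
The proof uses the natural correspondence in both directions: the orbits of an FPF involution give the cells, and conversely the map swapping the two vertices of each cell is the involution. There are two implications, and the only real work is in the converse.

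\emph{From an involution to a partition.} Let $\phi$ be an FPF involution of $G$. Every orbit of $\langle\phi\rangle$ then has exactly two elements $\{v,\phi(v)\}$, so the orbits form a partition $\mathcal{P}$ of $V$ into $\frac{n}{2}$ cells of size $2$. To see that $\mathcal{P}$ is equitable, fix two cells $V_i$ and $V_j$ and let $v\in V_i$, so that $V_i=\{v,\phi(v)\}$. Since $\phi$ is an automorphism, it restricts to a bijection from $N(v)\cap V_j$ onto $N(\phi(v))\cap\phi(V_j)$. Each cell is $\phi$-invariant, so $\phi(V_j)=V_j$. Hence $v$ and $\phi(v)$ have the same number of neighbours in $V_j$.

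\emph{From a partition to an involution.} Let $\mathcal{P}$ be an equitable partition into $\frac{n}{2}$ cells of size $2$. Define $\phi$ to swap the two vertices of each cell. This map is clearly a fixed point free involution, so it remains to show that $\phi$ preserves adjacency. I check this pair of cells by pair of cells.
\begin{itemize}
\item \emph{Both endpoints in the same cell $\{a,a'\}$.} The only possible edge is $aa'$, and $\phi$ maps it to itself.
\item \emph{Endpoints in distinct cells $\{a,a'\}$ and $\{b,b'\}$.} Let $e$ be the number of edges between the two cells. By equitability, $a$ and $a'$ have the same number $d$ of neighbours in $\{b,b'\}$, so $e=2d\in\{0,2,4\}$.
\begin{itemize}
\item If $e=0$ or $e=4$, the edge set between the two cells is empty or complete, and either is preserved by $\phi$.
\item If $e=2$, then $d=1$. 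Applying equitability in the other direction, $b$ and $b'$ also each have exactly one neighbour in $\{a,a'\}$. So the edges between the cells form a perfect matching, either $\{ab,a'b'\}$ or $\{ab',a'b\}$. Both matchings are mapped to themselves by $\phi$.
\end{itemize}
\end{itemize}
Since every edge of $G$ lies within a cell or between two cells, $\phi$ maps edges to edges. Being a bijection of a finite graph, it also maps non-edges to non-edges, so $\phi$ is an automorphism.

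The main obstacle is the case $e=2$. Here one must use equitability in both directions, from $\{a,a'\}$ to $\{b,b'\}$ and back, to exclude configurations such as $\{ab,ab'\}$. That configuration is not preserved by the swap, but it is already ruled out because $a$ and $a'$ must have equal degree into $\{b,b'\}$.
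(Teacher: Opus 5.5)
The paper does not prove this lemma itself; it imports it from Abiad et al.\ (Lemma 17 there), so there is no in-paper argument to compare against. Your proof is correct and complete, and it is the natural argument for this statement.

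\emph{From involution to partition.} The orbits of an FPF involution are $\phi$-invariant pairs. An automorphism that fixes every cell setwise forces $v$ and $\phi(v)$ to have equal neighbour counts in each cell.

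\emph{From partition to involution.} Apply equitability between two $2$-cells in both directions. This forces the bipartite graph between them to be empty, a perfect matching, or complete. Each of these is invariant under swapping both cells, and an edge inside a cell is trivially invariant. Your closing step is also fine; alternatively, since $\phi$ is an involution, $\phi(u)\phi(v)\in E$ gives $uv=\phi(\phi(u))\phi(\phi(v))\in E$ directly.

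One small slip is in your final remark. The configuration $\{ab,ab'\}$ is already excluded by equitability from $\{a,a'\}$ to $\{b,b'\}$ alone, since it gives $a$ two neighbours and $a'$ none. So it does not illustrate why the reverse direction is needed. The configuration that genuinely requires the reverse direction is $\{ab,a'b\}$. There $a$ and $a'$ each have exactly one neighbour in $\{b,b'\}$, but $b$ has two neighbours in $\{a,a'\}$ and $b'$ has none. The body of your proof handles this case correctly, so the slip affects only the commentary, not the argument.
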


Lemma~\ref{lem:automorphismIFFpartitionRegular} implies that it is \NPh\ to decide whether an arbitrary graph admits a 2-homogeneous equitable partition~\cite{abiad2022characterizing}. Despite the \NP-hardness result, Abiad et. al. prove that it can be determined in polynomial time whether a cograph admits a 2-homogeneous equitable partition~\cite{abiad2022characterizing}. This was achieved using a characterization of cographs in terms of twin classes. By extending our results to FPF involutions, we achieve a generalization of Abiad's result.

\begin{theorem}\label{thm:main_corollary_from_Aida}
The problem of determining whether a graph admits an equitable partition with $\frac{n}{2}$ parts of size $2$ can be solved in polynomial time when restricted to the class of bounded modular-width graphs, tree-cographs and $P_4$-sparse graphs.
\end{theorem}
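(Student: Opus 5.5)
By Lemma~\ref{lem:automorphismIFFpartitionRegular}, a graph admits a 2-homogeneous equitable partition if and only if it admits an FPF involution. It therefore suffices to show that \pname{FPFInv} is polynomial on the three classes. For this I will use Algorithm~\ref{alg:fpf_inv}, whose correctness is Theorem~\ref{thm:fpf_inv_alg}, together with the running-time bound of Corollary~\ref{cor:fpf_inv_complexity}. All three classes are hereditary, and $f_{\text{iso}}^{\mathcal{G}}$ is already polynomial for each of them by Lemma~\ref{lem:iso_mw}, Lemma~\ref{lem:iso_tree_cographs} and Lemma~\ref{lem:iso_p4_sparse}. The only remaining work is to show that $f_{\text{pfpfi}}^{\mathcal{G}_{/\mathcal{P}}}$ is polynomial. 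By Observations~\ref{obs:quotient_mw}, \ref{obs:quotient_tree_cographs} and~\ref{obs:quotient_p4_sparse}, the quotients lie in $\mathcal{K}\cup\overline{\mathcal{K}}\cup\mathcal{G}_{\le k}\cup\mathcal{T}\cup\overline{\mathcal{T}}\cup\mathcal{S_C}$. So I need a polynomial \pname{PFPFInv} routine for each of these quotient classes.

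\emph{Complete and empty graphs, and $\mathcal{G}_{\le k}$.} I adapt Lemma~\ref{lem:pfpf_complete}. Every permutation preserving color classes is an automorphism, so the only question is how to build an involution inside each color class $S_i$. Each vertex $v$ with $b(v)=0$ must be paired with another vertex of $S_i$, while vertices with $b(v)=1$ may stay fixed. A PFPF$_b$ involution exists if and only if, for every class, the number of vertices with $b=0$ is at most $\lfloor |S_i|/2\rfloor$ or, failing that, can be absorbed. Concretely, pair the $b=0$ vertices among themselves and with $b=1$ vertices. This is possible if and only if $|S_i|\ge 2$ whenever $S_i$ contains a $b=0$ vertex, since any single leftover $b=0$ vertex can be matched with a $b=1$ vertex or placed in a pair with another vertex, and if all vertices of $S_i$ have $b=0$ with $|S_i|$ odd, I take a 2-cycle plus... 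In that case, with $|S_i|$ odd and every vertex requiring a partner, no involution exists. The correct condition is therefore: the number of $b=1$ vertices in $S_i$ is at least $1$ whenever $|S_i|$ is odd. Complements are handled the same way. For $\mathcal{G}_{\le k}$, I enumerate all $k!$ permutations, which takes constant time.

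\emph{Trees.} I redo Lemma~\ref{lem:pfpf_trees} with involutions. For a rooted tree, I recursively compute for each child $c_j$ whether $(T_{c_j},c)$ has a PFPF$_b$ involution fixing the root, and I color the children by colored rooted isomorphism type. This reduces the problem to the empty-graph case above, because swapping two isomorphic subtrees by an isomorphism and its inverse gives an involution, and a fixed child needs an involution of its own subtree. For unrooted trees I use the center, as before. With two centers $r_1,r_2$, either both are fixed, which requires $b(r_1)=b(r_2)=1$ and a rooted involution on each side, or they are swapped, which requires a colored isomorphism $T_{r_1}\to T_{r_2}$; that isomorphism together with its inverse is an FPF involution. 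Complements of trees have the same automorphism group.

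\emph{Cyclops spiders.} The reduction of Lemma~\ref{lem:pfpf_spider} to $G[K]$ with paired colors $c'$ and $b'=b\wedge b\circ f^{-1}$ is a bijection between automorphisms that preserves order. Therefore it also transfers involutions, and I apply the complete-graph involution test.

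The main obstacle is the careful statement of the combinatorial condition in the complete-graph case; everything else mirrors the FPFAut arguments. Combining these routines with Corollary~\ref{cor:fpf_inv_complexity} finishes the proof.
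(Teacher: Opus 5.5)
Your proposal is correct and follows essentially the same route as the paper: reduce to \pname{FPFInv} via Lemma~\ref{lem:automorphismIFFpartitionRegular}, apply Algorithm~\ref{alg:fpf_inv} with Corollary~\ref{cor:fpf_inv_complexity}, and adapt Lemmas~\ref{lem:pfpf_complete}, \ref{lem:pfpf_trees} and~\ref{lem:pfpf_spider} to involutions on the quotient classes; your explicit odd-class criterion and your either/or handling of two centers are, if anything, more careful than the paper's one-line remarks. In the write-up, discard the two tentative criteria in the complete-graph paragraph (the second already fails for a color class of three vertices all with $b=0$) and keep only the final one, that each odd color class must contain a vertex with $b=1$.
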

\begin{proof}
    It suffices to check that complete graphs, trees and cyclops spiders admit an efficient algorithm for the \pname{PFPFInv} problem. For complete graphs and cyclops spiders it follows directly since the compositions of automorphisms described in Lemmas~\ref{lem:pfpf_complete} and~\ref{lem:pfpf_spider} creates an involution if the initial automorphisms are involutions. 
    
    Let $T_r$ be a rooted tree, and let $C_v$ be the set of children of a vertex $v$.  Let $\phi \in \text{Aut}(T_r)$, note that either $\phi(C_v) = C_v$ or $\phi(v) \neq v$. In Lemma \ref{lem:pfpf_trees} we solve \pname{PFPFAut} in rooted trees by constructing a complete graph from each set $C_v$ and solving \pname{PFPFAut} in this graph class. Since $\phi$ is an involution if and only if $\phi_{|C_v}$ is an involution every time $C_v$ is fixed, we conclude that the same reduction works to solve \pname{PFPFInv} for rooted trees. For general trees, notice that if a tree has two centers and there is an FPF automorphism that swaps the centers, then there exists an FPF involution that does the same. 
\end{proof}

\subsection{Compact Graphs} \label{subsec:compact}

We finish this section by observing a difference between FPF automorphisms and FPF involutions.
A graph is called compact if all extreme points of the convex hull of the double stochastic matrices that commute with its adjacency matrix are 01-matrices. In 1991, Tinhofer~\cite{tinhofer1991note} showed that the isomorphism problem can be solved in polynomial time for compact graphs by finding basic solutions to a linear program. The linear programming approach can be easily modified to determine the existence of automorphisms without fixed points, leading to the following result.

\begin{theorem} \label{thm:compact}
    Let $G$ be a compact graph. The problem of determining $G$ contains a fixed point free automorphism can be solved in polynomial time.
\end{theorem}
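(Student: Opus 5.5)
Since compact graphs are defined through a polytope, I will work with linear programming. Let $A$ be the adjacency matrix of $G$ and let
\[
\mathcal{S}(G) := \{X \in \mathbb{R}^{n\times n} : X \geq 0,\ X\j = \j,\ X^{\tran}\j = \j,\ AX = XA\}
\]
be the polytope of doubly stochastic matrices commuting with $A$. The permutation matrices in $\mathcal{S}(G)$ are exactly the matrices $P_\phi$ with $\phi \in \text{Aut}(G)$. Compactness says every vertex of $\mathcal{S}(G)$ is a $0/1$ matrix, hence a permutation matrix, so every vertex is an automorphism. An automorphism $\phi$ is fixed point free exactly when every diagonal entry of $P_\phi$ is zero, i.e.\ when $\operatorname{tr}(P_\phi) = 0$.

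The key step is to replace the condition ``no fixed point'' by a face of the polytope. Let $F := \mathcal{S}(G) \cap \{X : X_{vv} = 0 \text{ for all } v\}$. Since $X \geq 0$ on $\mathcal{S}(G)$, the linear function $\operatorname{tr}(X)$ is nonnegative there, and $F$ is exactly the set of minimizers of $\operatorname{tr}$ when the minimum is $0$. Thus $F$ is either empty or a face of $\mathcal{S}(G)$. A nonempty face of a polytope contains a vertex of the polytope. By compactness that vertex is an automorphism with zero diagonal, i.e.\ an FPF automorphism. Conversely, any FPF automorphism lies in $F$. So $G$ has an FPF automorphism if and only if the linear program $\min\{\operatorname{tr}(X) : X \in \mathcal{S}(G)\}$ has optimal value $0$.

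For the algorithm, I will solve this LP in polynomial time with the ellipsoid method or an interior point method. It has $n^2$ variables and polynomially many constraints with integer coefficients of polynomial bit size, so the optimum is rational of polynomial size and can be compared with $0$ exactly. If an explicit automorphism is wanted, I will compute an optimal basic feasible solution, for instance by running simplex from the optimum or by purification. Being a vertex of $\mathcal{S}(G)$, this solution is a permutation matrix by compactness.

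The main subtlety is the face argument: I must be sure that imposing $X_{vv}=0$ does not create new non-integral vertices. This holds because the constraint set is a face and not an arbitrary slice, which follows from nonnegativity. Adding constraints that do not define a face would break it. This is exactly why the argument does not transfer directly to involutions, since being an involution is not a linear condition on $X$.
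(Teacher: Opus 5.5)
Your proof is correct and is essentially the paper's argument. Both set up the same LP over the doubly stochastic matrices commuting with $A$, restricted to zero diagonal. Your face argument plays the role of the paper's observation that a zero-diagonal convex combination of automorphism matrices must consist only of FPF automorphisms; both rest on the same nonnegativity.

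The only real difference is the solver. You decide whether the LP optimum is $0$ using ellipsoid or interior-point methods. The paper instead invokes the simplex method over $0/1$ polytopes to find an extreme point. Your choice, if anything, makes the polynomial-time claim more standard.
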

\begin{proof}
Let $A := A(G)$ be the adjacency matrix of $G$ and consider the following linear program.

\begin{equation} 
\boxed{
\begin{array}{rll}
 {\tt maximize} &  \sum_{i,j} x_{i,j}\\
 {\tt subject\ to}  
    & AX = XA\\
    & \sum_{j} x_{i,j} = 1, \quad & \forall i \\
    & \sum_{i} x_{i,j} = 1, \quad & \forall j \\
    & \sum_{i} x_{i,i} = 0, \\
    & 0 \leq x_{i,j} \leq 1, \quad & \forall i, j \\
\end{array}
}
\label{LP}
\end{equation}

The constraints ensure that any feasible solution is a doubly stochastic matrix that commutes with the adjacency matrix and has $0$ diagonal. Let $X$ be any such matrix, since $G$ is compact, $X$ can be written as $X = \sum_{i} a_i P_i$ with $a_i > 0$, $\sum a_i = 1$ and $P_i$ corresponding to some automorphism of $G$. Since $X$ has $0$ diagonal, each $P_i$ must also have $0$ diagonal, so the extreme points of the polytope described by the linear program~\eqref{LP} correspond to FPF automorphisms. 

Since the Simplex method for LPs over 0/1 polytopes finds extreme points in polynomial time~\cite{black2024simplex}, we have a polynomial-time algorithm for \pname{FPFAut} on compact graphs.
\end{proof}

For FPF involutions, the situation is different and so far, no polynomial-time algorithm is known for this problem. We observe that the restriction of forcing the matrix $X$ to be symmetric in formulation \eqref{LP} does not imply that the extreme points of the polytope are symmetric permutations (consider, for example, an odd cycle). The class of strong-tree-cographs (a subclass of tree-cographs) is compact~\cite{tinhofer1988strong}, so Theorem~\ref{thm:main_corollary_from_Aida} suggests a positive answer to the following question.
 
\begin{question}
    Can \pname{FPFInv} be solved in polynomial time for compact graphs?
\end{question}

\section{Conclusion} \label{sec:conclusion}
Lubiw~\cite{Lubiw1981} introduced the \pname{FPFAut} problem, which asks whether a given graph admits a fixed point free automorphism, and proved that the problem is \NPc.
In this paper, we investigate the \pname{FPFAut} problem when restricted to graph classes, showing that it remains \NPc\ for many of them.
We also present an algorithm based on modular decomposition and demonstrate its efficiency on three classes.
Finally, we extend our results to the \pname{FPFInv} problem.

A central open problem in parameterized complexity is determining the complexity of \pname{GraphISO} parameterized by rank-width~\cite{NEUEN2026100918}. Currently, the problem is known to be in XP with respect to this parameter~\cite{grohe2015isomorphism}. In Section~\ref{subsec:modular_width}, we use this result to show that \pname{FPFAut} can be solved in polynomial time for graphs of bounded modular-width. It would be interesting to extended this approach to graphs of bounded rank-width.

\begin{question}
    Can \pname{FPFAut} be solved in polynomial time for graphs of bounded rank-width?
\end{question}

To further motivate the study of fixed-point-free automorphisms, we recall a problem introduced by Nagoya~\cite{nagoya2009computing}. In 2009, Nagoya proved that given a graph with a non-trivial automorphism, finding a pair of vertices $(u, v)$ where $u$ is mapped to $v$ by a non-trivial automorphism is computationally equivalent to computing a non-trivial automorphism of the graph. Nagoya's proof relies on the construction of a new graph $G'$ from the original graph $G$ and the pair of vertices $(u, v)$, such that $G'$ admits a non-trivial automorphism if and only if $G$ admits one, with the condition $G' < G$ in some partial order. The author asks whether the same result holds for FPF automorphisms. As far as we know, this problem remains open and the main difficulty is that Nagoya's construction creates FPF automorphisms that do not exist in the original graph. It would be interesting to find a construction that preserves the one-to-one correspondence between the FPF automorphisms.

\subsection*{Acknowledgements}

Aida Abiad is partially supported by NWO (Dutch Research Council) through the grant VI.Vidi.213.085. Gabriel Coutinho, Emanuel Juliano and Vinicius Santos are supported by FAPEMIG and by CNPq.

 
\DeclareRobustCommand{\VAN}[3]{#3}

\bibliographystyle{abbrv}

\IfFileExists{references.bib}
{\bibliography{references.bib}}
{\bibliography{../references}}

\end{document}